\newcommand{\secref}[1]{Sec. \ref{#1}}
\newcommand{\figref}[1]{Fig. \ref{#1}}
\newcommand{\tabref}[1]{Tab. \ref{#1}}
\newcommand{\eqnref}[1]{Eqn. \ref{#1}}
\newcommand{\algref}[1]{Alg. \ref{#1}}
\newcommand{\thmref}[1]{Theorem \ref{#1}}
\newcommand{\defref}[1]{Def. \ref{#1}}
\newcommand{\appref}[1]{Appendix \ref{#1}}
\newcommand{\nop}[1]{}
\newcommand{\nosection}[1]{\noindent\textbf{\emph{#1}}}
\def\method{\texttt{NeuSO}\xspace}
\def\plainmethod{NeuSO\xspace}
\begin{document}

\title{\plainmethod: Neural Optimizer for Subgraph Queries}


\author{Linglin Yang}
\email{linglinyang@stu.pku.edu.cn}
\orcid{0000-0001-9480-6088}
\affiliation{%
  \institution{Peking University}
  \city{Beijing}
  \country{China}
}

\author{Lei Zou}
\email{zoulei@pku.edu.cn}
\orcid{0000-0002-8586-4400}
\affiliation{%
  \institution{Peking University}
  \city{Beijing}
  \country{China}
}

\author{Chunshan Zhao}
\email{zhaochunshan@pku.edu.cn}
\orcid{0009-0002-2082-1669}
\affiliation{
  \institution{Peking University}
  \city{Beijing}
  \country{China}
}


\begin{abstract}

Subgraph query is a critical task in graph analysis with a wide range of applications across various domains. Most existing methods rely on heuristic vertex matching orderings, which may significantly degrade enumeration performance for certain queries. While learning-based optimizers have recently gained attention in the context of relational databases, they cannot be directly applied to subgraph queries due to the heterogeneous and schema-flexible nature of graph data, as well as the large number of joins involved in subgraph queries.
These complexities often leads to inefficient online performance, making such approaches impractical for real-world graph database systems. 
To address this challenge, we propose \method, a novel learning-based optimizer for subgraph queries that achieves both high accuracy and efficiency. 
\method features an efficient query graph encoder and an estimator which are trained using a multi-task framework to estimate both subquery cardinality and execution cost.
Based on these estimates, \method employs a top-down plan enumerator to generate high-quality execution plans for subgraph queries.
Extensive experiments on multiple datasets demonstrate that \method outperforms existing subgraph query ordering approaches in both performance and efficiency.
\end{abstract}

\begin{CCSXML}
<ccs2012>
   <concept>
       <concept_id>10002951.10002952.10003190.10003192.10003210</concept_id>
       <concept_desc>Information systems~Query optimization</concept_desc>
       <concept_significance>500</concept_significance>
       </concept>
   <concept>
       <concept_id>10002951.10002952.10003190.10003192.10003425</concept_id>
       <concept_desc>Information systems~Query planning</concept_desc>
       <concept_significance>500</concept_significance>
       </concept>
 </ccs2012>
\end{CCSXML}

\ccsdesc[500]{Information systems~Query optimization}
\ccsdesc[500]{Information systems~Query planning}

\keywords{Subgraph Query, Query Optimization}

\received{April 2025}
\received[revised]{July 2025}
\received[accepted]{August 2025}
\maketitle

\section{Introduction}


Graph models have gained considerable attention in recent years due to their straightforward and clear representation of relationships between entities. They are widely applied across various fields, including social networks \cite{fan2012graph}, biology \cite{bonnici2013subgraph}, and knowledge graphs \cite{kim2015taming, perez2009semantics}. To retrieve a user-interested subgraph from the original big data graph, users may invoke subgraph queries on the data graph, which is a fundamental type of graph queries in the field of graph databases \cite{angles2008survey, barcelo2013querying, zou2011gstore}.

Specifically, given a query graph and a data graph, subgraph query is to find all matches of the query graph in the data graph, which preserves the label constraints of vertices and topology constraints of edges. Typically, the processing of a subgraph query involves three stages: filtering, planning, and enumeration. The filtering stage employs various techniques to extract a subset of vertices in the data graph (i.e., potential matching candidates) for each query vertex, thereby narrowing the search space. Next, in the planning stage, a matching order for the query vertices is generated. Finally, the matching results are obtained by enumerating according to the specified matching order.

The matching order plays a significant role in execution efficiency. The choice of matching order can result in performance variations spanning several orders of magnitude \cite{leis2015good, sun2020memory, zhang2024comprehensive}. While much research has focused on improving the speed of subgraph matching, particularly through various filtering techniques \cite{bi2016efficient,bhattarai2019ceci,han2019efficient,arai2023gup} and reducing duplicate computations during enumeration \cite{bi2016efficient,han2019efficient,jin2023circinus}, few works have addressed the matching order selection problem effectively. Most existing subgraph matching approaches either depend on heuristic strategies or leverage dynamic programming (DP) techniques to determine the matching order. Nevertheless, these methods often face several critical limitations:

\begin{itemize}[leftmargin=8pt]
    \item \textit{Structured-based heuristic methods} \cite{bonnici2013subgraph,sun2020rapidmatch} are based solely on the structure of the query graph, resulting in the same matching order for different data graphs if the query graph is the same. This can lead to significant performance degradation when applied to biased data graphs. 
    \item \textit{Rule-based heuristic methods} \cite{he2008graphs,bi2016efficient,bhattarai2019ceci,han2019efficient,kim2021versatile} rely on simple rules, such as prioritizing vertices with large degrees or small candidate sizes, without considering more detailed factors. This oversight can easily lead to suboptimal matching orders, which may be several orders of magnitude worse than the optimal solution.
    \item \textit{Dynamic programming-based methods} \cite{yang2022gcbo, mhedhbi2019optimizing, feng2023kuzu, neo4j} are typically limited to small-scale query graphs and can become computationally expensive or even intractable for large query graphs.
\end{itemize}

\subsection{Motivation}
Traditional optimizers for relational databases are based on cost estimation and follow a pipeline structure comprising three main components: the cardinality estimator, the cost estimator, and the plan enumerator, as shown in \figref{fig:method_classification}. When a query is received, the plan enumerator generates possible execution plans, then selects the plan with the lowest estimated execution cost for actual execution. One of the most critical factors influencing execution cost is cardinality, which refers to the size of the intermediate results. As a result, the cost estimator often relies on the cardinality estimator to estimate the execution cost for each candidate plan.

With the rise of machine learning, a number of learning-based optimization methods for relational databases have emerged. These methods can be categorized into the following three approaches, as shown in \figref{fig:method_classification}:

\begin{figure}
    \setlength{\abovecaptionskip}{0.13cm}
    \setlength{\belowcaptionskip}{-0.3cm}
    \centering
    \includegraphics[width=0.7\linewidth]{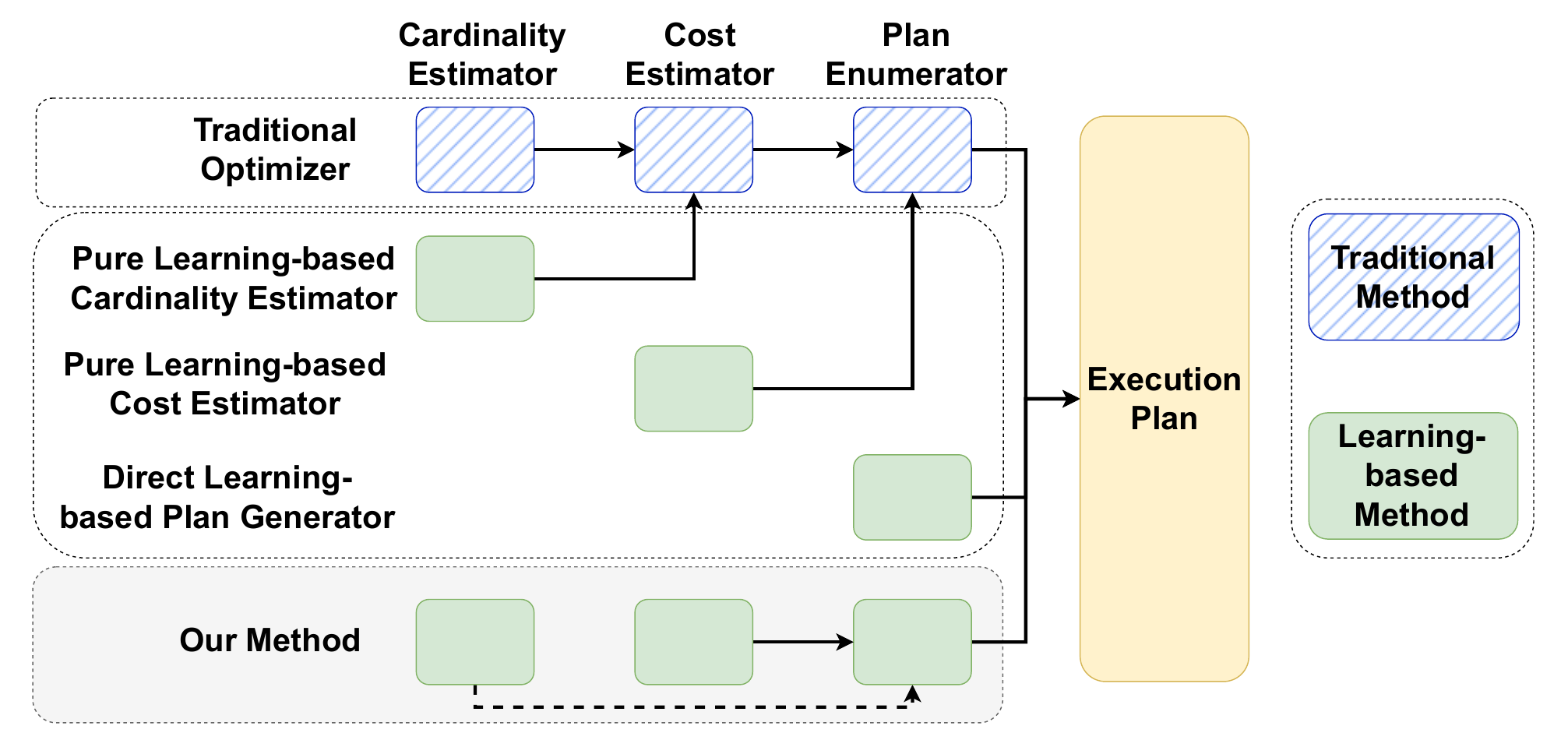}
    \caption{Optimizer classification. The blue block represents traditional method, while the green block represent learning-based method.}
    \label{fig:method_classification}
\end{figure}

\begin{itemize}[leftmargin=8pt]
    \item \textbf{Pure learning-based cardinality estimator} \cite{kipf2019mscn,hilprecht13deepdb,yang2020neurocard,reiner2023sample,li2023alece}. These methods focus on using machine learning to estimate cardinalities, while still relying on traditional hand-designed cost estimators and dynamic programming (DP)-based plan enumeration to generate the execution plan.
    \item \textbf{Pure learning-based cost estimator} \cite{marcus2019plan,sun2019end,marcus2021bao}. In this kind of approach, machine learning models are used to directly estimate the execution cost for each plan, with a traditional plan enumerator used to determine the optimal plan.
    \item \textbf{Direct learning-based order generator} \cite{marcus2018deep,yu2020reinforcement,yang2022balsa,chen2023loger}. These methods bypass the cardinality and cost estimation stages entirely, using machine learning, particularly reinforcement learning, to directly generate the execution plan recursively.
\end{itemize}

However, these learning-based methods designed for relational databases cannot be directly applied to subgraph queries, as they typically assume a fixed table schema, while graph data is inherently more heterogeneous and schema-flexible.
\nop{
However, due to differences in data structures, these methods for relational databases cannot be directly applied to subgraph queries (graph queries usually have more complex joins).
}
To the best of our knowledge, apart from RLQVO \cite{wang2022reinforcement}, there is no work on learning-based optimization for subgraph queries. RLQVO formulates the matching order generation as a Markov Decision Process (MDP), 
iteratively assigning scores to vertices and selecting them accordingly.
However, RLQVO determines the scores solely from the representation of the query vertex, without considering the overall structure of subqueries. As a result, it occasionally produces suboptimal matching orders, leading to poor performance (\secref{sec:experiments}).

Another potential approach to matching order optimization relies on cardinality estimation combined with a hand-crafted cost model. Several learning-based methods for subgraph counting (NSIC \cite{liu2020neural}, LSS \cite{zhao2021learned}, NeurSC \cite{wang2022neural}, and GNCE \cite{schwabe2024cardinality}) have been proposed, which can be applied to estimate cardinality. However, few studies have integrated these techniques into query optimizers as core components. This is primarily due to several challenges that hinder their practical adoption in graph database systems:




\begin{itemize}[leftmargin=8pt]
    \item \textbf{High computational cost for online prediction.} For example, NeurSC \cite{wang2022neural} requires running a Graph Neural Network (GNN) on a filtered graph that can be as large as the original data graph. This makes it inefficient when estimating the cardinalities of numerous subqueries. Moreover, these methods are primarily designed to estimate the cardinality of an entire query graph rather than its subqueries. Consequently, employing these methods for subquery-level cardinality estimation introduces additional latency, further exacerbating the computational overhead.
    \item \textbf{Neglect of Subquery Relationships.} Given a query graph $Q$, the subgraph query optimizer needs to predict the cardinalities of multiple subqueries of $Q$ for the purpose of plan generation. However, existing subgraph counting approaches only focus on the cardinality estimation of a single subgraph query ($Q$ itself), neglecting the relationships among these various subgraph queries.
    \item \textbf{Expressiveness Limits of GNN.} Existing methods \cite{liu2020neural,zhao2021learned,wang2022neural,schwabe2024cardinality} primarily use message-passing neural networks (MPNNs) for graph feature extraction. However, MPNNs are limited in their expressiveness, as they do not surpass the capabilities of the 1-WL (1-dimensional Weisfeiler-Lehman) graph isomorphism test, which constrains the accuracy of these methods.
\end{itemize}

\subsection{Our Solutions}
\nop{In this paper, we propose a novel neural solution for optimizing subgraph queries: \textbf{\plainmethod}.}
In this paper, we propose \textbf{\plainmethod}, a novel neural optimizer for subgraph queries (short for \textbf{Neu}ral \textbf{S}ubgraph Queries \textbf{O}ptimizer).
Our method does not fit into any of the three existing categories of learning-based optimizers for relational databases as shown in \figref{fig:method_classification}; instead, it is more like a combination of them. Specifically, our approach simultaneously learns both the cardinality and the cost of queries and outputs a matching order under a new plan enumerator. The key insight behind jointly learning cardinality and cost is that both can be viewed as intrinsic properties of subqueries. This dual learning approach improves the robustness and accuracy of subgraph representations. 
We also introduce a new metric, \emph{minimum cost} (see \defref{def:minimumcost}), which refers to the execution cost of the best possible plan for each subquery among all potential plans.
Using the minimum cost, we propose a new top-down plan enumerator that significantly reduces the enumeration cost. 
Additionally, users can leverage the learned cardinality estimates to produce the matching order in traditional optimization settings. 

In summary, we make the following contributions in this paper:
\begin{itemize}[leftmargin=8pt]
    \item \textbf{Multi-task training framework.} 
    Our approach employs a \textit{multi-task} training framework. Since cardinality and execution costs are intrinsic properties of subqueries, we enhance the model's robustness and predictive accuracy by simultaneously training it to predict these two aspects.
    \item \textbf{Novel query graph encoder.} 
    We propose an enhanced GNN-based architecture that equips each vertex with richer neighborhood information compared to traditional MPNNs, thereby increasing the model's expressive power and predictive accuracy. Additionally, we improve the interaction between data graphs and query graphs. By initializing query vertex features with statistical information from filtered data graphs, our query graph encoder achieves both effectiveness and efficiency. 
    \item \textbf{Top-down subgraph query plan enumerator.} 
    We design a top-down greedy plan enumeration algorithm that generates high-quality execution plans. This approach minimizes excessive substate exploration typical in dynamic programming and outperforms heuristic-based methods in terms of performance. 
    \item \textbf{Extensive experiments on real and synthetic datasets.} 
    Extensive experiments on multiple real-world and synthetic datasets confirm that our method outperforms existing approaches in both efficiency and performance.
\end{itemize}

\section{Preliminaries}
\label{sec:preliminaries}

We focus on undirected vertex-labeled graphs in this paper. A graph $G$ is represented as a quadruple $G(V, E, L, \Sigma)$, where $V$ is the set of vertices and $E$ is the set of edges connecting vertices in $V$. $\Sigma$ is a set of some vertex labels, and $L$ is a function mapping every vertex in $V$ to a unique label from $\Sigma$. Sometimes, without causing ambiguity, we denote $G(V, E, L, \Sigma)$ as $G(V)$ or $G(V, E)$ for simplicity.

\tabref{tab:notations} lists the notations frequently used in this paper.

\begin{small}
\begin{table}[ht]
    \setlength{\abovecaptionskip}{0.1cm}
    \setlength{\belowcaptionskip}{-0.22cm}
  \caption{Frequently used notations.}
  \label{tab:notations}
  \begin{tabular}{ c | c } 
    \toprule 
    Notation & Description \\
    \midrule 
    $G(V, E, L, \Sigma)$ & undirected vertex-labeled graph \\
    $Q$, $q$ & query graph and subquery \\
    $o,Q_i$ & matching order and the partial query $Q(V_i^o)$ \\
    $N_o^-(u)$ & the backward neighbors of $u$ in order $o$ \\
    $C(u)$ & candidate set of the query vertex $u$ \\
    \bottomrule
\end{tabular}
\end{table}
\end{small}
\vspace{-0.29cm}

\subsection{Subgraph Query \& Subgraph Matching}


Subgraph query finds subgraph matches within a larger graph by identifying patterns that meet specific constraints. 
One of the most widely used matching semantic for subgraph matching is subgraph isomorphism \cite{prvzulj2006efficient,snijders2006new,chandramouli2010high,sahu2020ubiquity}, which is defined formally below:


\begin{definition}[Subgraph Matching (Isomorphism)] 
    Given a query graph $Q(V_Q, E_Q, L_Q)$ and a data graph $G(V_G, E_G, L_G)$, a subgraph matching is an injective function $f$ from $V_Q$ to $V_G$ such that
\begin{itemize}
    \item (Label constraint) $L_Q(v) = L_G(f(v))$, $\forall v \in V_Q$;
    \item (Edge constraint) $e(f(u), f(v))\in E_G$, $\forall e(u, v) \in E_Q$.
\end{itemize}
\end{definition}

We also refer to the mapped graph $m=f(Q)$ as a subgraph match of the query $Q$ in the graph $G$. The subgraph matching problem is to find all such subgraph matchings given a query and a data graph.

While our methods are generally applicable to various subgraph matching semantics, we adopt subgraph isomorphism here because of its practical importance and inherent complexity  \cite{prvzulj2006efficient,snijders2006new,chandramouli2010high,sahu2020ubiquity}. Extensions to other semantics (like homomorphism for SPARQL \cite{sparql} or cyphermorphism for Cypher \cite{francis2018cypher}), are discussed in \secref{sec:extension_matching}.

For simplicity of presentation, we assume that the query graphs are always connected. This does not affect the generality of our methods: for every unconnected query graph, the subgraph matching results are the Cartesian product of the matching results of its connected components with an additional check for injectivity.

\subsection{Execution of Subgraph Queries}
\label{sec:pre_execution_queries}

Executing a subgraph query in a graph database essentially involves finding all matchings of the query graph within the stored data graph \cite{zou2011gstore}. This process can be divided into three stages: filtering, planning, and enumeration \cite{sun2020memory} as shown in \algref{alg:subgraph_matching_execution}.

\vspace{-0.1cm}
\begin{small}
\begin{algorithm}
\DontPrintSemicolon
\caption{Subgraph query execution process \cite{sun2020memory}}
\label{alg:subgraph_matching_execution}
\KwIn{The data graph $G$, the query graph $Q$}
\KwOut{The subgraph matching results $R=R(Q, G)$}
\SetKwFunction{Enumerate}{Enumerate}


$C\leftarrow$ Generate candidate sets\;
$o=(o_1, o_2,\cdots, o_n)\leftarrow$  Generate a matching order\;
\Enumerate{$Q, G, o, \emptyset, 1$}\;

\BlankLine
\SetKwProg{myproc}{Procedure}{}{}
\myproc{\Enumerate{$Q, G, o, M ,i$}}{
	\If{$i = |o| + 1 $}{ 
            $R\leftarrow R\cup \{M\}$; \Return
	}
        $LC(u_{o_i}, M)\leftarrow$ Compute local candidates\;
    \ForEach{$v \in LC(u_{o_i},M)$}
    {
        \If{$\nexists\, u' \in V_Q, s.t.\ M[u']=v$} 
	    {
                $\Enumerate(Q, G, o, M\cup \{(u_{o_i}, v)\}, i+1)$\;
        
	    } 
    }
}
\end{algorithm}
\end{small}
\vspace{-0.1cm}


First, the engine analyzes the query graph and applies filters to identify a smaller data graph from which all matching results can be retrieved. Next, an execution plan, typically a matching order, is formulated by the database engine's optimizer. Finally, the executor enumerates all subgraph matchings to obtain the final results according to the matching order.


\nosection{Filtering.}
Filtering is an important technique in subgraph matching. It employs various restrictions to quickly exclude vertices and edges in the data graph that are not relevant to the query, thereby constructing a candidate set for each query vertex. Vertices that do not appear in a candidate set can be pruned during execution. The formal definition of a candidate set is provided below:

\begin{definition}[Candidate Set]
    Given a query graph $Q$ and a data graph $G$, the candidate set $C(u)$ of a query vertex $u\in V_Q$ is a subset of vertices in $G$, such that if there exists a subgraph matching $f$ of $Q$, then $f(u)\in C(u)$.
\end{definition}




\nosection{Plan Generation (Optimization).}
An execution plan is generated during the planning process. 
For subgraph queries, the main difference across different execution plans is the matching order (also known as the join order) defined in \defref{def:matching_order}. 
In this paper, we use matching order and execution plan interchangeably as they essentially mean the same.


\begin{definition}[Matching Order]
\label{def:matching_order}
    For a query graph $Q = Q(\{u_1, u_2, \cdots, u_n\})$, a matching order is a permutation of the query vertices: $o = (u_{o_1}, u_{o_2}, \cdots, u_{o_n})$.  Additionally, we denote the prefix $i$-subquery vertex set as $V_i^o = \{u_{o_1}, \cdots, u_{o_i}\}$.
\end{definition}

It is reasonable to require that the matching order preserves prefix connectivity. Otherwise, it will result in the Cartesian product. Formally, along with the matching order, each partial query $Q_i = Q(V_i^o) = Q(\{u_{o_1}, \cdots, u_{o_i}\})$ is a connected subquery, and the next vertex to match, $u_{o_{i+1}}$, is a neighbor of the set $V_i^o$. We denote the backward neighbors of $u$ as $N_o^-(u)$, which represents the set of neighbor vertices of $u$ with matching orders preceding $u$. 

\nosection{Enumeration.}
During the enumeration of the $i$-th vertex $u_{o_i}$, the backward neighbors of $u_{o_i}$ already have matches. Therefore, the local candidates for $u_{o_i}$ in the partial matching $M$ can be computed using the partial matching $M$ intersecting the neighbor lists from the matches of $N_o^-(u_{o_i})$ (line 7 of \algref{alg:subgraph_matching_execution}).

\subsection{Challenges in Subgraph Query Optimization}

In relational databases, the most common cost-based optimizer usually consists of three parts: a plan enumerator, a cost estimator, and a cardinality estimator. When users input a query, the plan enumerator generates several plans for the input query. Then, the cost estimator invokes the cardinality estimator for the cardinality estimation of subqueries, which helps estimate the cost of candidate plans. Then the plan with minimum estimated cost is transferred to the execution engine for execution. 

For subgraph queries, there are also some works like the traditional relational optimizers which enumerate the whole join space in a dynamic programming framework \cite{mhedhbi2019optimizing, yang2022gcbo, feng2023kuzu}. However, 
the larger number of vertices and edges of graph queries greatly enlarges the plan space, making dynamic programming optimizers impractical for real scenarios. For instance, a query graph with 24 vertices may have up to millions of connected subqueries. Dynamic programming methods would generate one plan for each subquery as an optimal subsolution, which can be extremely time-consuming.

Thus, most of the existing methods \cite{bonnici2013subgraph,sun2020rapidmatch,he2008graphs,bi2016efficient,bhattarai2019ceci,han2019efficient,kim2021versatile} rely on greedy heuristic strategies. However, they usually lack the detailed exploration of data characteristics and often lead to local optimum.

\section{Overview}
\label{sec:overview}



Our method, \method,  consists of three main components: a query graph encoder, a cardinality \& cost predictor, and a plan enumerator. 

\begin{figure*}
    \setlength{\abovecaptionskip}{0.1cm}
    \setlength{\belowcaptionskip}{-0.35cm}
    \centering    
    \includegraphics[width=\linewidth]{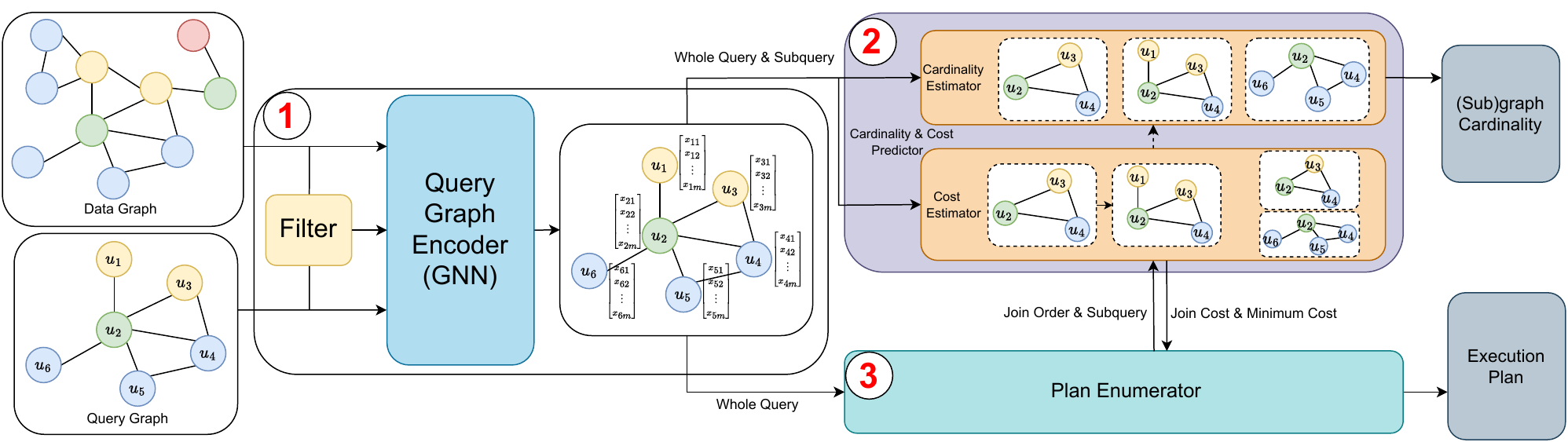}
    \caption{Overview of \method. 
    (1) GNN-based query graph encoder processes input after feature initialization from filtered data graph statistics to generate representation for (sub)query (\secref{sec:query_graph_encoder});
    (2) For each (sub)query, the cardinality \& cost predictor uses MLP to estimate the (sub)query cardinality and execution cost (\secref{sec:card_cost_estimator});
    (3) A top-down plan enumerator that utilizes the cost estimator to identify the execution plan with the lowest estimated cost (\secref{sec:plan_enumerator}).}    
    \label{fig:overview}
    \vspace{-0.1cm}
\end{figure*}

As illustrated in \figref{fig:overview}, the proposed method begins with a filter that generates query-related statistics upon receiving a new query graph. These statistics, along with the query graph, are processed by a GNN-based query graph encoder, which produces vertex-level representations. 
\nop{Subsequently, these vertex representations are aggregated to derive representations for each subquery and the entire query graph.}
Subsequently, these vertex representations can be aggregated to derive subquery representations on demand, as required by the plan enumeration phase. A representation for the entire query graph is also computed in this way.

Next, the cardinality \& cost predictor leverage these (sub)graph representations to estimate the cardinality and execution cost of the queries. Finally, a plan enumerator module utilizes these estimates to generate join plans by minimizing the cost estimates.


\method differs from existing learning-based optimizers, as illustrated in \figref{fig:method_classification}. After encoding the query graph, \method employs a multi-task learning framework to simultaneously predict both the cardinality and execution cost using the same subquery representation. Multi-task learning has gained significant attention recently and demonstrated effectiveness across various domains, such as NLP \cite{caruana1997multitask, zhang2021survey}. In the context of subgraph query optimization, this approach offers the following four key advantages:
\begin{itemize}[leftmargin=8pt]
    \item \textbf{Enhanced Representation Learning.} By sharing the same subquery representation, the model can inherently capture the relationships between cardinality and cost. This leads to improved performance through higher-quality subquery representations. 
    \item \textbf{Improved Robustness.} The multi-task framework allows the model to leverage supervisory signals from one task to mitigate noise or errors in the other, resulting in greater overall robustness. 
    \item \textbf{Increased Interpretability.} It is reasonable that the model produces a large cost estimate when the cardinality estimate is also large. Additionally, the intermediate cardinality and cost estimates provide insights that can help diagnose issues when the model produces suboptimal execution plans. 
    \item \textbf{Flexibility to Meet Diverse Requirements.} This multi-task framework is adaptable to different application needs. For instance, users with a well-designed cost estimator can utilize only the learned cardinality estimator, while those with less expertise and seeking an end-to-end solution can rely on the model's learned cost predictor. This flexibility also enables seamless integration with existing components.
\end{itemize}


\section{Query Graph Encoder}
\label{sec:query_graph_encoder}


In \method, we first utilize a graph neural network to encode the query graph, leveraging query-related information from the data graph (the query graph encoder in \figref{fig:overview}) to generate representations for (sub)queries. Details on the initial feature construction and the specific graph neural network used are provided in \secref{sec:feature_init} and \secref{sec:triangle_gnn}, respectively. Each query vertex is represented by a high-dimensional embedding after the encoding process.

Subsequently, as discussed in \secref{sec:subgraph_representation}, an aggregation pooling derives a representation for the (sub)query graph, serving as the input of the subsequent optimization process.

\subsection{Feature Initialization}
\label{sec:feature_init}

The design of initial features for an encoder is crucial. For a query graph, the most basic feature of a query vertex is its label. Given the relatively limited number of labels, one-hot encoding is widely adopted in existing subgraph tasks \cite{wang2022neural, ye2024efficient}. However, this encoding method presents two significant drawbacks: (1) it results in highly sparse representations that are challenging for subsequent learning processes; (2) it fails to incorporate insights from the topological structures of the data graph. 

An embedding-based approach effectively alleviates these issues. To obtain informative embeddings of each label in the data graph $G$, we pre-trained on a \textit{label-augmented} graph $G_A = G_A(V_G\cup V_L, E_G\cup E_L)$, which adds a vertex $l$ for each label in the label set $\Sigma_G$ ($V_L$) and an edge $e(u, L(u))$ for every vertex $u$ and its label $L(u)$ ($E_L$).  
Any existing embedding method \cite{perozzi2014deepwalk, grover2016node2vec, dong2017metapath2vec} can be applied to this label-augmented graph, and the resulting embedding $\bm{x}_l$ for a label vertex $l$ is utilized as the initial feature for the query vertex with label $l$.
In our experiments, we employ ProNE \cite{zhang2019prone}, which integrates sparse matrix factorization with embedding propagation, offering a fast and effective solution for downstream tasks.

It should be noted that relying solely on label features is inadequate, as it assigns identical features to vertices with the same label, regardless of whether they belong to the same query graph or different ones. To incorporate more query-specific information, we enhance the initial label features by integrating filtered statistics.
As described in \secref{sec:pre_execution_queries}, the filter stage produces a candidate set $C(u)$ for each query vertex $u$. Additionally, the candidate edge count $|C(u_1, u_2)|$ can be computed for each query edge $e(u_1, u_2)$ with a very small cost. These statistics are more accurate and query-relevant than those derived directly from the original data graph.

In summary, the initial features of the query graph encoder are constructed for a vertex $u$ and an edge $e(u_1, u_2)$ as follows:
\begin{equation}\label{eqn:vertex_feature}
    \bm{x}_u^{(0)} = \bm{x}_{L(u)} \oplus |C(u)|,
\end{equation}
\begin{equation}\label{eqn:edge_feature}
    \bm{x}_{e(u_1, u_2)}^{(0)} = \bm{x}_{u_1}^{(0)}\oplus\bm{x}_{u_2}^{(0)}\oplus |C(u_1, u_2)|, 
\end{equation}
where $\oplus$ denotes concatenation.

\subsection{TriAT: Triangle Attention Network}
\label{sec:triangle_gnn}

\subsubsection{Motivation for TriAT}
\label{sec:motivation_triat}

Graph neural networks (GNNs) are widely used to extract features and identify patterns in graph-structured data. Popular GNN such as GCN \cite{kipf2017semisupervised}, GraphSAGE \cite{ham2017graphsage}, GIN \cite{xu2018how} and GAT \cite{velivckovic2018graph} belongs to the message-passing neural network (MPNN) framework. These models iteratively update vertex embeddings as follows:
\begin{equation}
\label{eqn:MPNN}
    \bm{x}_u^{(k)} = \gamma^{(k)}\{\bm{x}_u^{(k-1)}, \phi^{(k)}\{\bm{x}_v^{(k-1)}, \bm{x}_{e(u, v)}^{(k-1)}\mid v\in N(u)\}\},
\end{equation}
where $\phi$ aggregates information from neighbors, and $\gamma$ combines it with the vertex’s prior representation $\bm{x}_u^{(k-1)}$ and edge representation $\bm{x}_{e(u, v)}^{(k-1)}$, producing the next-layer representation $\bm{x}_u^{(k)}$.


Despite their success, theoretical studies have shown that the expressive power of MPNNs is limited by the 1-WL test \cite{leman1968reduction,xu2018how}, preventing them from distinguishing certain graph topologies. For example, in \figref{fig:wl_test}, two graphs have distinct topologies but cannot be distinguished by MPNNs, as vertices with identical neighborhood label sets receive identical embeddings. For instance, vertices $A_1$ and $A_2$ in both graphs gather information from neighbors with the same label distributions, resulting in identical representations under the update rule in \eqnref{eqn:MPNN}.

\begin{figure*}[ht]
\setlength{\abovecaptionskip}{0.22cm}
\begin{minipage}[b]{7.7cm}
    \centering
    \begin{subfigure}[b]{0.44\linewidth}
        \centering
        \includegraphics[width=\linewidth]{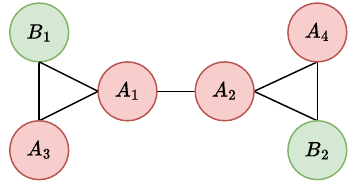}
        \caption{Graph 1}
        \label{fig:wl_1}
    \end{subfigure}
    \hspace{0.6cm}
    \begin{subfigure}[b]{0.4\linewidth}
        \centering
        \includegraphics[width=\linewidth]{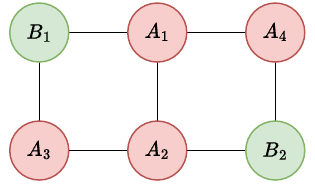}
        \caption{Graph 2}
        \label{fig:wl_2}
    \end{subfigure}
    \caption{Two graphs that cannot be distinguished by MPNNs or the 1-WL test. Vertices with identical notations receive the same representation under MPNNs.}
    \label{fig:wl_test}
\end{minipage}
\hfill
\begin{minipage}[b]{6cm}
    \centering
    \includegraphics[width=0.65\linewidth]{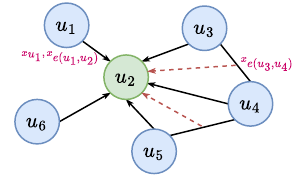}
    \caption{An example of the update procedure in TriAT (for $u_2$). Each vertex updates its embedding based on both its 1-hop neighbors and the edges between those neighbors.}
    \label{fig:triat_example}
\end{minipage}
\end{figure*}


Distinguishing between such graphs is crucial for subgraph query optimization, for different query graphs may lead to different optimal execution plans. To overcome this limitation, we propose \textbf{TriAT}, which explicitly incorporates triangle structures, enhancing the network's ability to capture higher-order patterns. In \figref{fig:wl_1}, triangles are formed by two $B$-label vertices linking to two $A$-label vertices, while no such triangle exists in \figref{fig:wl_2}. TriAT leverages these triangle structures to differentiate between graphs that standard MPNNs cannot.



It is worth noting that triangle patterns are ubiquitous in real-world graph structures and are commonly found in subgraph queries. As the simplest cyclic structure, triangles serve as a fundamental building block for many denser and more complex motifs, such as $k$-cliques. 
We observed that over $90\%$ of the cyclic queries tested in our experiments contain at least one triangle substructure. Previous research has also highlighted the prevalence of triangle structures in real graph queries \cite{bonifati2020analytical}. 
By identifying and leveraging these triangle patterns, we can effectively model complex relationships and higher-order dependencies, making them indispensable for robust graph representation.

\subsubsection{TriAT Architecture.}
\label{sec:trial_architecture}

TriAT updates the embedding of vertex $u$ by considering both its 1-hop neighbors $N(u)$ and the edges among these neighbors $E_{N(u)} = \{e(v_1, v_2)\mid v_1, v_2\in N(u)\}$, forming triangles with $u$. This is illustrated in \figref{fig:triat_example}, where vertex $u_2$ updates its embedding using both neighbor information (solid black arrows, such as $\{\bm{x}_{u_1}, \bm{x}_{e(u_1, u_2)}\}$ for neighbor $u_1$) and neighbor linkages (brown dotted arrows, such as $\{\bm{x}_{e(u_3,u_4)}\}$ for the edge $e(u_3, u_4)$).


Formally, TriAT follows the update procedure below:
\begin{equation}\label{eqn:triat_framework}
\begin{aligned}
    \bm{x}_u^{(k)} = \gamma^{(k)}\{\bm{x}_u^{(k-1)},\; &\phi^{(k)}\{\bm{x}_v^{(k-1)}, \bm{x}_{e(u, v)}^{(k-1)}\mid v\in N(u)\}, \\ &\tau^{(k)}\{\bm{x}_{e(v_1, v_2)}^{(k-1)}\mid e(v_1, v_2)\in E_{N(u)}\}\},
\end{aligned}
\end{equation}
\begin{equation}
\label{eqn:triat_edge_update}
    \bm{x}_{e(u, v)}^{(k)} = W_e^{(k)}(\bm{x}_u^{(k)}\oplus\bm{x}_v^{(k)}), (k\geq 1)
\end{equation}
here, $W_e^{(k)}$ is the learnable linear transformation matrix used for updating edge representations. $\phi^{(k)}$ and $\tau^{(k)}$ are attention-based aggregation functions for neighbor vertices and their interconnections:
\begin{equation}\label{eqn:triat_neighbor}
    \phi^{(k)}\{\bm{x}_v^{(k-1)}, \bm{x}_{e(u, v)}^{(k-1)}\mid v\in N(u)\} = \sum_{v\in N(u)}\alpha_{u,v}^{(k)}\Theta^{(k)} \bm{x}_v^{(k-1)},
\end{equation}
\begin{equation}\label{eqn:triat_neighbor_linkage}
    \tau^{(k)}\{\bm{x}_{e}^{(k-1)}\mid e\in E_{N(u)}\} = \sum_{e\in E_{N(u)}}\beta_{u,e}^{(k)}\Psi^{(k)} \bm{x}_{e}^{(k-1)}.
\end{equation}

$\Theta^{(k)}$ and $\Psi^{(k)}$ are learnable projection matrices. $\alpha_{u, v}^{(k)}$ and $\beta_{u, e}^{(k)}$ are attention coefficients computed as follows:
\begin{equation}\label{eqn:nei_attention}
    \alpha_{u,v}^{(k)}=\frac{\exp(\text{LR}(\bm{a}^{(k)}[W_1^{(k)}\bm{x}_u^{(k-1)}\oplus W_2^{(k)}\bm{x}_{e(u, v)}^{(k-1)}]))}{\sum_{w\in N(u)}\exp(\text{LR}(\bm{a}^{(k)}[W_1^{(k)}\bm{x}_u^{(k-1)}\oplus W_2^{(k)}\bm{x}_{e(u, w)}^{(k-1)}]))},
\end{equation}
\begin{equation}\label{eqn:nei_link_attention}
\begin{aligned}
    &\beta_{u,e}^{(k)}= &\frac{\exp(\text{LR}(\bm{b}^{(k)}[W_1^{(k)}\bm{x}_u^{(k-1)}\oplus W_2^{(k)}\bm{x}_{e(v_1, v_2)}^{(k-1)}]))}{\sum_{e'\in E_{N(u)}}\exp(\text{LR}(\bm{b}^{(k)}[W_1^{(k)}\bm{x}_u^{(k-1)}\oplus W_2^{(k)}\bm{x}_{e'}^{(k-1)}]))},
\end{aligned}
\end{equation}
where $\bm{a}^{(k)}$ and $\bm{b}^{(k)}$ are learnable attention weight vectors. $W_1^{(k)}$ and $W_2^{(k)}$ are learnable linear transformation matrices. 
LR denotes the LeakyReLU nonlinearity, and we set it with a negative slope of $0.2$ in our experiments. The attention mechanism in TriAT enables the model to assign varying weights to different neighbors and their connections based on their importance, enhancing its ability to capture complex patterns and dependencies.



For the update function $\gamma^{(k)}$, we sum the inputs and apply a ReLU nonlinearity. Additionally, we incorporate a multi-head mechanism \cite{vaswani2017attention} to further enhance TriAT's expressive power.

\subsubsection{Expressive Power \& Complexity}


The following theorems reveal the expressiveness and complexity of TriAT. Specifically, TriAT is more expressive than standard MPNNs and the 1-WL test, as it considers adjacency between neighboring vertices. Its time and space complexities are also characterized. 
\ifbool{fullversion}{The proofs of both results are provided in \appref{sec:proof_of_triat}.}{\textcolor{red}{The proof of \thmref{the:triat} can be found in the full version of our paper \cite{}.}}

\begin{theorem}[TriAT's Expressive Power]
\label{the:triat}
The following expressiveness inclusion relation holds: 1-WL test = MPNN $\prec$ TriAT.
\end{theorem}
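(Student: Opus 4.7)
The plan is to decompose the statement into two parts: (i) the equivalence $\text{1-WL} = \text{MPNN}$ in distinguishing power, and (ii) the strict inclusion $\text{MPNN} \prec \text{TriAT}$. Part (i) is classical, so I would appeal to the upper bound of Xu et al.\ \cite{xu2018how} and the matching construction therein: every MPNN layer can only refine the coloring produced by 1-WL, and this bound is attained by message-passing architectures whose aggregate/combine functions are injective on multisets of labels. Consequently, the classes of graph pairs distinguishable by the two formalisms coincide, giving $\text{1-WL} = \text{MPNN}$.

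For the forward inclusion $\text{MPNN} \preceq \text{TriAT}$, I would observe that the TriAT update rule in \eqnref{eqn:triat_framework} strictly generalizes the MPNN rule in \eqnref{eqn:MPNN}. Setting $\Psi^{(k)} = 0$ (equivalently, $\tau^{(k)} \equiv 0$) and letting $\phi^{(k)}, \gamma^{(k)}$ mirror the aggregation and combination of a given MPNN, every MPNN computation is reproduced by a TriAT with appropriately chosen parameters. Hence any function realizable by an MPNN is realizable by TriAT.

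The strictness $\text{MPNN} \prec \text{TriAT}$ is witnessed by the graph pair in \figref{fig:wl_test}. The argument proceeds in two steps. First, I would verify by induction on the WL iteration index that 1-WL assigns identical color multisets to the corresponding vertices of the two graphs, so by part (i) no MPNN can distinguish them. Second, I would show that TriAT separates them via its triangle aggregator $\tau^{(1)}$: for the vertex $A_1$ in \figref{fig:wl_1} the two $B$-labeled neighbors are joined by an edge (forming a triangle with $A_1$), whereas in \figref{fig:wl_2} they are not, so the multiset $\{\bm{x}_e^{(0)} \mid e \in E_{N(A_1)}\}$ fed to $\tau^{(1)}$ differs across the two graphs. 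With $\tau^{(1)}$ and $\gamma^{(1)}$ injective on the relevant multiset domain, the resulting embeddings $\bm{x}_{A_1}^{(1)}$ differ, and TriAT distinguishes the pair.

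The main obstacle will be justifying the injectivity assumption on the attention-based aggregators $\phi^{(k)}$, $\tau^{(k)}$, and $\gamma^{(k)}$. Softmax-weighted sums are not injective on multisets in general, so I would mirror the strategy used for GIN: on a class of bounded-degree labeled graphs the input space of each aggregator is countable, and the learnable projections $\Theta^{(k)}, \Psi^{(k)}, W_1^{(k)}, W_2^{(k)}, W_e^{(k)}$ together with the multi-head mechanism described in \secref{sec:trial_architecture} provide enough capacity to realize parameter settings that make the composite update injective on this domain. Once this is pinned down, the separation of the \figref{fig:wl_test} graphs, and hence the strict inclusion, follows directly from the triangle-edge observation above.
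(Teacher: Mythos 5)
Your proposal follows essentially the same route as the paper's proof: the containment $\text{MPNN} \preceq \text{TriAT}$ is obtained by degenerating the triangle aggregator $\tau^{(k)}$ to zero, and strictness is witnessed by the graph pair in \figref{fig:wl_test}, where only the first graph contains triangles incident to the $A$-labeled vertices. The one place you go beyond the paper is in flagging (and sketching a fix for) the non-injectivity of softmax-weighted aggregation on multisets --- a genuine subtlety that the paper's proof passes over silently --- so your version is, if anything, the more careful of the two.
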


\begin{theorem}[TriAT's Complexity]
\label{the:triat_complexity}
Let $\Delta$ denote the set of all triangles in a graph $Q(V, E)$. Then the time complexity of one layer of TriAT is $O(|E| + |\Delta|)$, and the space complexity is $O(|E|)$.
\end{theorem}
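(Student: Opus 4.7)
The plan is to bound the time complexity by decomposing one TriAT layer into its elementary operations (Eqns. \ref{eqn:triat_framework}--\ref{eqn:nei_link_attention}) and then charging the cost of each operation to either an edge or a triangle via a handshake-style counting argument. The space complexity will be handled by arguing that the intermediate quantities can be computed in a streaming manner over the adjacency lists, so no triangle needs to be materialized.

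For the time complexity, I would first handle the neighbor-side pieces. The neighbor attention coefficients $\alpha_{u,v}^{(k)}$ in \eqnwithbrref{eqn:nei_attention} and the neighbor aggregation $\phi^{(k)}$ in \eqnwithbrref{eqn:triat_neighbor} each cost a constant (embedding-dimension) amount of work per pair $(u,v)$ with $v\in N(u)$, so the total over all vertices is $\sum_{u\in V}|N(u)| = 2|E|$, giving $O(|E|)$. The edge update in \eqnwithbrref{eqn:triat_edge_update} contributes another $O(|E|)$. The triangle-side pieces are analogous: both $\beta_{u,e}^{(k)}$ in \eqnwithbrref{eqn:nei_link_attention} and $\tau^{(k)}$ in \eqnwithbrref{eqn:triat_neighbor_linkage} cost constant work per pair $(u,e)$ with $e\in E_{N(u)}$. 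The key counting step is the identity
\begin{equation*}
\sum_{u\in V} |E_{N(u)}| \;=\; 3|\Delta|,
\end{equation*}
which follows because every triangle $\{u,v_1,v_2\}$ contributes exactly one such incidence at each of its three vertices (the edge between the other two). Summing the combining step $\gamma^{(k)}$ (constant work per vertex, $O(|V|) = O(|E|)$ under standard connectivity assumptions, or absorbed into the $O(|E|)$ term otherwise) then yields the overall bound $O(|E| + |\Delta|)$.

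For the space complexity, I would store the persistent per-vertex and per-edge embeddings, which take $O(|V| + |E|) = O(|E|)$ space, together with the adjacency list representation of $Q$, also $O(|V|+|E|)$. The potential pitfall is the $\tau^{(k)}$ aggregation: a naive implementation might precompute and store all of $E_{N(u)}$ for every $u$, requiring $\Theta(|\Delta|)$ extra memory. I would avoid this by generating the relevant triples on the fly: for each vertex $u$, iterate over pairs $(v_1,v_2)\in N(u)\times N(u)$ and check adjacency (or, more efficiently, iterate over each edge $e(v_1,v_2)\in E$ and stream its contribution into the $\tau$-accumulator of every common neighbor $u\in N(v_1)\cap N(v_2)$). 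Either variant accumulates the sums in-place into the $O(|V|)$-sized running vectors used for the next-layer embedding, so no triangle is ever stored. The only auxiliary storage is the softmax denominator in \eqnwithbrref{eqn:nei_link_attention}, which is one scalar per vertex and thus $O(|V|)$.

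The main obstacle is the space argument, since the time bound follows almost directly from counting. Specifically, I need to argue carefully that the softmax in \eqnwithbrref{eqn:nei_link_attention} does not force us to materialize $E_{N(u)}$: this is handled by a standard two-pass streaming aggregation per layer (first pass accumulates the normaliser; second pass accumulates the weighted sum), each pass touching every triangle-incidence exactly once in $O(|E|+|\Delta|)$ time with $O(|V|)$ extra scratch space. Combined with the multi-head mechanism (which only multiplies costs by a constant number of heads), this gives the stated $O(|E|+|\Delta|)$ time and $O(|E|)$ space per TriAT layer.
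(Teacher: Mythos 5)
Your proof is correct and follows essentially the same counting argument as the paper: charge each edge to $\phi$ via its two endpoints (giving $O(|E|)$) and each triangle to $\tau$ via the identity $\sum_{u\in V}|E_{N(u)}|=3|\Delta|$ (giving $O(|\Delta|)$), with the edge and vertex updates absorbed into $O(|E|)$. The only difference is that you make these handshake identities explicit and add a streaming argument to justify the $O(|E|)$ space bound, which the paper simply asserts from the vertex-update framework without worrying about whether the softmax over $E_{N(u)}$ forces materialization of triangle incidences; that extra care is a strict refinement, not a different route.
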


\ifbool{fullversion}{}{
\begin{proof}[Proof of \thmref{the:triat_complexity}]
Each layer of TriAT requires gathering information from neighbors ($\phi$) and their mutual connections ($\tau$) for every vertex.
The total computational complexity for $\phi$ and $\tau$ across all vertices is $O(|E|)$ and $O(|\Delta|)$, respectively.
This is because each edge is involved in $\phi$ through its two endpoints, and contributes to $\tau$ only if it is part of a triangle.
These two components together yield the overall time complexity of $O(|E| + |\Delta|)$. The space complexity is $O(|E|)$, as TriAT adopts a vertex-update framework that requires storing features for all vertices and edges.
\end{proof}
}



\nop{
In terms of complexity, each layer of TriAT requires gathering information from neighbors and their connections. The computational complexity of $\phi$ and $\tau$ for all vertices in a single layer is $O(|E|)$ and $O(|\Delta|)$, respectively, where $\Delta$ represents the set of all triangles in the graph. Consequently, the time complexity of one layer of TriAT is $O(|E| + |\Delta|)$, while the space complexity is $O(|E|)$, following the vertex-update framework of graph neural networks, which only requires updating the vertex features at each layer.
}

Although there are other GNNs that are more expressive than MPNNs, such as GNNs based on 2-FWL \cite{maron2019provably, zhang2023complete}, these methods incur a computational cost of $O(|V|^3)$ per layer and require $O(|V|^2)$ space. This high complexity significantly affects the runtime efficiency of 2-FWL. However, our experiments show that TriAT achieves comparable accuracy to 2-FWL without a substantial drop in performance, while maintaining a much lower computational cost (\secref{sec:ablation_model}). Therefore, TriAT provides a favorable trade-off between efficiency and effectiveness.

\subsection{Representation of (sub)Graphs}
\label{sec:subgraph_representation}

After applying the graph neural network to the query graph, we obtain an embedding matrix $X\in \mathbb{R}^{n\times m}$, where $n$ is the number of query vertices and $m$ is the vertex embedding dimension of the final layer of the TriAT used. Each row in the matrix $X$ corresponds to the embedding $\bm{x}_u^T$ of a query vertex $u$. To get the representation of the (sub)query graph $q$, we need to apply a pooling layer. 

It is important to note that some vertices may have a greater influence on the final result than others. For example, vertices with high degrees impose stricter topological constraints and prioritizing those vertices in matching order leads to lower cardinality. To capture the varying contributions of vertices, we apply a self-attention-based weighted pooling layer that assigns adaptive importance to each vertex in the final (sub)graph representation:
\begin{equation}\label{eqn:attention_pool}
    \bm{x}_q = \sum_{u\in V_q}\alpha_u \bm{x}_u,
\end{equation}
where, $\alpha_u$ is the attention coefficient for vertex $u$:
\begin{equation}
    \alpha_u = (K_1 \bm{x}_u)\cdot (K_2 \bm{x}_u),
\end{equation}
here, $K_1$ and $K_2$ are two trainable parameter matrices, and $\cdot$ denotes the inner product.



\nop{
\subsubsection{Rationality Discussion}

During the running process of TriAT, a vertex $u$ perceives its $k$-hop neighborhood after $k$ rounds of message-passing. For the entire query graph $q$, the pooling methods introduced above will be no problem. However, for the represention of a subquery $q_{sub}$, it would inevitably receives more information than $q_{sub}$ itself. 
Specifically, $\bm{x}_{q_{sub}}$ may contains information that extends beyond its own structure. For example, after one layer of TriAT, the representation of subquery $q_{\{u_2, u_3, u_4\}}$ in \figref{fig:triat_example} will include information from vertex $u_5$, which is not part of $q_{\{u_2, u_3, u_4\}}$.

This apparent inconsistency is intentional and serves two key purposes.
First, during the filter stage, candidates for vertex $u$ are determined based on the candidate sets of vertices that are beyond $u$'s immediate neighbors. This means that the subquery would be influenced by query vertices that do not belong to it.
Second, this overlap of information helps to capture the relationships between subqueries, which ultimately aids in query optimization by capturing inter-subquery dependencies.
}






\section{Neural Graph Query Optimizer}
\label{sec:neural_optimizer}

After the query graph encoder, we can obtain representations for every subquery.
Next, we use learning-based estimators to predict cardinalities and execution costs and a top-down plan enumerator to generate matching order.

\subsection{Reformulation of Optimization}
\label{sec:reformulation}

\subsubsection{Intuition}
As discussed in \secref{sec:pre_execution_queries}, the execution plan of a graph query involves determining the matching order $o = (u_{o_1}, u_{o_2}, \cdots, u_{o_n})$ for the query graph $Q = Q(\{u_1, u_2, \cdots, u_n\})$. 
Along with the matching order $o$, each subquery $Q_i = Q(V_i^o) = Q(\{u_{o_1}, \cdots, u_{o_i}\})$ contains only one more query vertex $u_{o_i}$ than the previous subquery $Q_{i-1}$. 
Thus, the matching order can be seen as a state transition path from the initial state (empty query $Q_0$) to the final state (complete query $Q$). 
From this perspective, subgraph query optimization is the process of finding a near-optimal
state transition path within the linkage graph, consisting of all connected partial queries. 


\subsubsection{Formal Reformulation}

To start with, we define a lattice structure named \textbf{Cardinality-Cost Graph (CCG)} that captures the relationship between subqueries of one query graph $G$:

\begin{definition}[Cardinality-Cost Graph, CCG]
\label{def:ccg}
    Given a query graph $Q$, the CCG of $Q$ is a rooted (weakly) connected acyclic directed graph $CCG_Q(\mathcal{V}_Q, \mathcal{E}_Q, \mathcal{C}_Q)$, where:
    \begin{enumerate}[leftmargin=13pt]
        \item $\mathcal{V}_Q$ is the set of  connected partial subqueries of $Q$, including the empty subquery ($\emptyset$) and $Q$ itself. Each vertex $q$ in $\mathcal{V}_Q$ represents an intermediate state in the execution, referred to as a \textit{state}.
        \item $\mathcal{E}_Q$ is the edge set, where edges originate from smaller states and terminate at larger states that include one more query vertex. These edges represent \textit{state transitions}.
        \item $\mathcal{C}_Q$ is the cardinality and cost function. For each state $q \in \mathcal{V}_Q$, $\mathcal{C}_Q(q)$ denotes the actual cardinality of $q$, while for each state transition $e\langle q_{sub}, q(V_{q_{sub}} \cup \{u\})\rangle$, $\mathcal{C}_Q(e)$ indicates the actual execution cost of joining vertex $u$ to the partial intermediate results of $q_{sub}$. For convenience, we omit the subscript $Q$ when there is no ambiguity.
    \end{enumerate}
\end{definition}

Given a state $q_0$ from $CCG_Q$, we denote its out-neighbors as $\mathcal{N}^{out}(q_0) = \{q \in \mathcal{V}_Q \mid e\langle q_0, q\rangle \in \mathcal{E}_Q\}$. Similarly, we represent the in-neighbors of $q_0$ as $\mathcal{N}^{in}(q_0)$.

With the help of CCG, we can formulate the optimization as a shortest path problem on the CCG:

\begin{definition}[Optimization \& SP Problem on the CCG]
    Given a query graph $Q$, each join order $o = (o_1, \cdots, o_n)$ corresponds to a path $P$ from $\emptyset$ to $Q$ in $CCG_Q$:
    \begin{equation}
    \label{eqn:state_transition}
        P: (\emptyset =)\, q_0 \rightarrow q_1 \rightarrow q_2 \rightarrow \cdots \rightarrow q_n\, (= Q)
    \end{equation}
    with length defined as $l(P) = \sum_{i=1}^n \mathcal{C}_Q(e\langle q_{i-1}, q_i\rangle)$.

    The optimization's goal is to find the shortest path from $\emptyset$ to $Q$. 
\end{definition}

\nop{
\begin{theorem} 
\label{thm:shortest_path_optimal}
    The shortest path from $\emptyset$ to $Q$ is the join order leading to minimum execution cost.
\end{theorem}

\begin{proof}
    Each join order corresponds to a distinct path from $\emptyset$ to $Q$ in the CCG. For any given path $P=(q_0\rightarrow q_1\rightarrow\cdots\rightarrow q_n)$, its total execution cost can be decomposed into several one-step costs: $cost(P) = \sum_{i=1}^n \mathcal{C}_Q(e(q_{i-1}, q_i))=l(P)$. Consequently, minimizing execution cost is equivalent to finding the path $P^*$ with minimal length $l(P^*)$ in $CCG_Q$.
\end{proof}
}

\begin{figure*}[ht]
\setlength{\abovecaptionskip}{0.22cm}
\begin{minipage}[c]{6.7cm}
    \centering
    \begin{subfigure}[c]{0.3\linewidth}
        \centering
        \includegraphics[width=\linewidth]{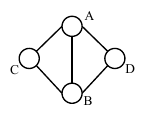}
        \caption{Query graph}
        \label{fig:query_example}
    \end{subfigure}
    \hspace{0.1cm}
    \begin{subfigure}[c]{0.45\linewidth}
        \centering
        \includegraphics[width=0.95\linewidth]{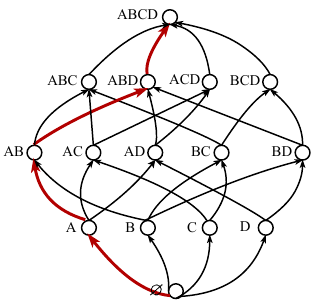}
        \captionsetup{skip=1.5pt}
        \caption{A CCG example}
        \label{fig:ccg_example}
    \end{subfigure}
    \vspace{-0.1cm}
    \caption{An example of a query graph and its corresponding CCG. For simplicity, we omit the cardinalities of states and the transition costs in the CCG.}
    \label{fig:ccg_example_with_query}
\end{minipage}
\hfill
\begin{minipage}[c]{6.7cm}
\begin{small}
\SetAlgoSkip{smallskip}
\begin{algorithm}[H]
\DontPrintSemicolon
\caption{Plan enumeration process}
\label{alg:plan_search}
\KwIn{The data graph $G$, the query graph $Q$}
\KwOut{The execution plan $o$}
$o\leftarrow \emptyset$, $q\leftarrow Q$\;
\While{$q\neq\emptyset$} {
    $q'\leftarrow \mathop{\arg\min}\limits_{q_0\in \mathcal{N}^{in}(q)} \hat{\mathcal{C}}(e\langle q_0, q\rangle)+\hat{\mathcal{MC}}(q_0)$\;
    $o.prepend(V_q-V_{q'})$\;
    $q\leftarrow q'$\;
}
\Return{o}\;
\end{algorithm}
\end{small}
\end{minipage}
\end{figure*}

\begin{example}
Consider the query graph in \figref{fig:query_example}. The topology of its corresponding CCG is depicted in \figref{fig:ccg_example}, where each state is annotated with the vertices of its associated subquery. And a possible vertex order $(A, B, D, C)$ is represented in the red path.


\end{example}

It is important to emphasize that the CCG is influenced by the underlying data graph, since the cardinality of each state and the cost of each transition are dependent on the specific data graph being matched.

If the cardinalities of the subqueries and the costs of state transitions are provided, a CCG for the query can be constructed. On this CCG, a shortest path algorithm is applied to determine the optimal execution plan. However, two major challenges arise:

\begin{enumerate}[leftmargin=10pt]
    \item The true cardinalities and costs are not precisely available during the optimization phase prior to execution. Therefore, it is necessary to employ a cardinality estimator and a cost estimator. 
    \item For large queries, the CCG may grow excessively large, rendering comprehensive exploration computationally prohibitive. For example, the CCG of a query graph with $24$ vertices can encompass millions of states.
\end{enumerate}

Our methods, as detailed in \secref{sec:card_cost_estimator} and \secref{sec:plan_enumerator}, are designed to address these two issues.

\subsection{Learning-based Cardinality \& Cost Predictor}
\label{sec:card_cost_estimator}

To address the first challenge (unavailable precise cardinality and cost during optimization), we leverage the powerful fitting and learning capabilities of neural networks. Specifically, we note that the cardinality is inherently tied to a single subquery, whereas the join cost is determined by the relationship between two adjacent subqueries, as represented by the edges on the CCG. 


\nosection{Cardinality.} For any subquery $q$, we use a multi-layer perceptron (MLP) $MLP_{card}$, for cardinality estimation. This can be formally expressed as:
\begin{equation}
    \hat{\mathcal{C}}(q) = MLP_{card}(\bm{x}_q).
\end{equation}

\nosection{Cost.} For a state transition $e\langle q_1, q_2\rangle$ on the CCG, where $q_2$ is a (sub)query that adds a linked vertex $u_0$ to $q_1$, we can utilize the above cardinality estimation and calculate the cost by any cost model such as gCBO \cite{yang2022gcbo} as below:
\begin{equation}
\label{eqa:cost_est_gcbo}
    \hat{\mathcal{C}}(e\langle q_1, q_2\rangle) = \hat{\mathcal{C}}(q_1) \times \min_{u \in N(u_0)\cap V_{q_1}}\frac{|C(u, u_0)|}{|C(u)|}.
\end{equation}

This cost estimate reflects the maximum possible number of intermediate results for $q_2$. Specifically, for each intermediate result $f$ of $q_1$, the term $\min_{u \in N(u_0) \cap V_{q_1}} \frac{|C(u, u_0)|}{|C(u)|}$ provides an upper bound on the number of matches of $q_2$ that would be generated from $f$.

This type of traditional method takes the cardinality estimate as input. However, it often overlooks certain aspects (such as storage cost), and the inherent inaccuracies in input cardinality estimation would propagate to the cost estimates. Therefore, we recommend an end-to-end approach, using neural networks to learn the relationship between the cost and the representation of the state transition in \method:
\begin{equation}
\label{eqn:join_cost_mlp}
    \hat{\mathcal{C}}(e\langle q_1, q_2\rangle) = MLP_{cost}(\bm{x}_{q_1}\oplus \bm{x}_{q_2}),
\end{equation}
where $\bm{x}_{q_1} \oplus \bm{x}_{q_2}$ represents the concatenation of the embeddings for the start state $q_1$ and the end state $q_2$.





\subsection{Top-down Cost-based Plan Enumerator}
\label{sec:plan_enumerator}

To tackle the second challenge (large CCG), we introduce the concept of the \textit{minimum cost of states} to guide the plan exploration on CCG:

\begin{definition}\label{def:minimumcost}
    The minimum cost of a state $q$ in the CCG is the minimum length of paths from $\emptyset$ to $q$:
    \begin{equation}
        \mathcal{MC}(q) = \min_{p\,:\, p_0=\emptyset,\, p_{|V_q|}=q}l(p).
    \end{equation}
\end{definition}

\begin{example}
    Consider the CCG in \figref{fig:ccg_example_with_query}. There are 6 different paths from $\emptyset$ to $q_{\{A, B, D\}}$. Suppose the shortest path among these is $P^* = \{\emptyset, q_{\{A\}}, q_{\{A, B\}}, q_{\{A, B, D\}}\}$. Then the minimum cost of $q_{\{A, B, D\}}$ is $l(P^*)$.
\end{example}

In traditional optimizers, the minimum cost of a state $q$ is determined through dynamic programming (DP) enumeration, which is time-consuming. Instead, we use a neural network (MLP) to directly predict the minimum cost:
\begin{equation}
    \hat{\mathcal{MC}}(q) = MLP_{mc}(\bm{x}_q).
\end{equation}

Based on the estimated minimum cost, we design a top-down greedy search for plan enumeration, as described in \algref{alg:plan_search}.


The critical step is in line 3. A state $q$ can be reached from states $\mathcal{N}^{in}(q)$. For a state $q_0$ among them, it can be accessed via a path from $\emptyset$ of length $\mathcal{MC}(q_0)$, which is $q_0$'s minimum cost. Therefore, the sum $\hat{\mathcal{C}}(e\langle q_0, q\rangle) + \hat{\mathcal{MC}}(q_0)$ gives the estimated length of the shortest path from $\emptyset$ to $q$ through $q_0$. The shortest path is selected and the additional vertex $V_q - V_{q'}$ is added to the matching order.


By utilizing the enumeration procedure in \algref{alg:plan_search}, we avoid constructing the entire CCG for the query $Q$. Instead, only promising states and transitions are explored, thereby reducing the cost of plan enumeration.

\begin{example}
    Take the CCG of \figref{fig:ccg_example} as an illustration. The enumerator of \algref{alg:plan_search} first explores all subqueries of size $3$. If it determines that subquery $q_{\{ABD\}}$ lies on the optimal path (as guided by the equation in line 3 of \algref{alg:plan_search}), it infers that vertex $C$ is the last to be joined. It then examines $q_{\{AB\}}$, $q_{\{AD\}}$, and $q_{\{BD\}}$, continuing the top-down process until reaching the initial state ($\emptyset$).
\end{example}

\subsubsection{Complexity}

Let the time complexities of running the models $MLP_{cost}$ and $MLP_{mc}$ be $T_{cost}$ and $T_{mc}$, respectively. During each selection step (line 3 of \algref{alg:plan_search}), the time cost is $O(|V_Q|(T_{cost} + T_{mc}))$, since the number of possible child states is bounded by $O(|V_Q|)$. As each selection step chooses a single vertex, the while loop repeats $O(|V_Q|)$ times. Therefore, the total complexity of \algref{alg:plan_search} is $O(|V_Q|^2(T_{cost} + T_{mc}))$. If we assume both $T_{cost}$ and $T_{mc}$ are $O(1)$, then the complexity simplifies to $O(|V_Q|^2)$.

In comparison, traditional dynamic programming (DP) methods typically have a time complexity of $O(|\mathcal{E}_Q|)$, which is often exponential in terms of $|V_Q|$, due to the need to explore all possible states. Consequently, our approach provides a substantial improvement in efficiency, particularly for large graphs.

\nop{
\subsubsection{Discussions}  
The plan enumeration approach employed by \method follows a top-down paradigm, distinguishing it from reinforcement learning-based optimization techniques such as RLQVO, which rely on bottom-up action selection. The advantages of \method's methodology can be summarized as follows:  

\begin{itemize}[leftmargin=8pt]  
    \item \textbf{History Information Reusability.}  
    The top-down enumeration strategy enables the reuse of historical optimization results. For instance, if a subquery matches a previously processed query in the log, the system might leverage cached vertex ordering information to eliminate redundant computations, though this optimization is not implemented in our current work. 

    \item \textbf{State Cost as an Intrinsic Property.}  
    In \method, the minimum cost associated with a state (subquery) is defined as an intrinsic property of the state itself (or more precisely, of the state and the initial empty state, which remains consistent across all queries). In contrast, reinforcement learning methods like RLQVO evaluate states based on their relation to the final state (i.e., the complete query $Q$), which varies per query. Consequently, for states of identical dimensionality, \method's cost computation is more efficient and less computationally demanding, which also reduces the learning difficulty. 
\end{itemize} 

}

\section{Training Details \& Extensions}
\label{sec:training_and_extensions}

In \secref{sec:data_collection}, we will first introduce
the training data collection, followed by the training process in \secref{sec:training_process}.
Finally, we show that our approach can also be extended to other matching semantics and directed edge-labeled graphs in \secref{sec:extension}.

\subsection{Training Data Collection}
\label{sec:data_collection}



To collect the training data for a data graph $G$, we first obtain a representative set of query graphs $\mathcal{T}$, which can be selected from user logs or simulated based on query templates.
For each query in $\mathcal{T}$, we construct its corresponding complete CCG. We then traverse each state within the CCG, recording (1) the cardinalities of the states, (2) the execution costs (running times) of the transitions, and (3) the minimum cost associated with each state. These data are stored as the training dataset.

However, for large-scale queries with many vertices or those producing substantial intermediate results, exhaustive traversal of all CCG states becomes computationally infeasible. To address this challenge, we employ a partial data collection strategy for such queries.
Specifically, for a large or complex query $Q$, we first generate an optimized matching order $o$ using the optimizer from \cite{yang2022gcbo}, while other optimization methods can also be employed. Then we explore not only the states $q_i$ (where $i = 1, 2, \ldots, |o|$) but also their adjacent states $N_{CCG}^{out}(q_i)$. In this process, we only collect (1) the state cardinalities and (2) the transition costs.

We categorize queries based on the extent of exploration: queries that are fully explored are labeled as  $q.state = Full$, while those partially explored are labeled as $q.state = Partial$.
Additionally, to ensure computational efficiency, we impose memory and time thresholds for each transition (3 minutes per transition and a space limit of 100 million embeddings per state in our experiments).

\subsection{Training Process}
\label{sec:training_process}

For fully explored query graphs, we utilize cardinalities, costs, and minimum costs during training. In contrast, for partially explored query graphs, we only use cardinalities and costs, as minimum costs are unavailable. In both scenarios, loss coefficients $\alpha_{i}$, where $i \in \{card, cost, mc\}$, are used to balance the multiple training objectives (we set $\alpha_{i}$ as 0.4, 0.3 and 0.3 for $i\in \{card, cost, mc\}$ in our experiments). For these objectives, we apply the L2 loss over the log transformation which imposes higher weights on larger error over the average \cite{dutt2019selectivity}:
\begin{equation}
\label{eqn:mse_error}
\mathcal{L}_q(\hat{y}, y)=|\log\hat{y}-\log y|^2=|\log(\max\{\hat{y}/y, y/\hat{y}\})|^2.
\end{equation}



In addition to these three separate losses from each neural predictor, we introduce a supplementary \textit{constraint loss} term to ensure consistency between the single-step cost predictor ($MLP_{cost}$) and the minimum cost predictor ($MLP_{mc}$). This constraint loss penalizes any unrealistic minimum cost estimate for a state $q_0$ that are lower than the minimum single-step cost estimates:
\begin{equation}
\label{eqn:single_step_punishment}
\mathcal{L}_c(q_0) = ||\max\{0, \min_{q'\in \mathcal{N}^{in}(q_0)}\hat{\mathcal{C}}(e\langle q',\, q_0\rangle)-\hat{\mathcal{MC}}(q_0)\}||^2.
\end{equation}

In our training process, the constraint loss is applied to the explored states $Q_e$. Note that for fully explored query graphs, the set of explored states $Q_e$ includes all subqueries of $Q$.
\ifbool{fullversion}{The entire training process is summarized in pseudocode in Appendix \ref{sec:alg_of_training_process}.}{\textcolor{red}{A pseudocode summarizing the entire training process is included in the full version of our paper \cite{}.}}



\nop{
\vspace{-0.25cm}

\begin{small}
\begin{algorithm}[htbp]
\DontPrintSemicolon
\caption{Training process for one epoch}
\label{alg:training_process}
\KwIn{The training set $\mathcal{T}$}
\ForEach{$Q\in \mathcal{T}$}{
    $\bm{X}^{(0)}\leftarrow$ Initialize vertices' features \tcp*[r]{\secref{sec:feature_init}}
    $\bm{X} \leftarrow TriAT(Q, \bm{X}^{(0)})$\tcp*[r]{\secref{sec:triangle_gnn}} 
    $loss\leftarrow 0$\;
    \If {$Q.state = Full$}{
        \For{$i \in \{card, cost, mc\}$}{
            \ForEach(\tcp*[f]{for each batch}){$bt_i \in Q.i$} {
                $loss\leftarrow loss+\mathcal{L}_q(MLP_{i}(X_{bt_i}), y_{bt_i})\times \alpha_{i}$\;
            }
        }
    }
    \Else(\tcp*[f]{$Q.state = Partial$}) { 
        \For{$i \in \{card, cost\}$}{
            \ForEach{$bt_i \in Q.i$} {
                $loss\leftarrow loss+\mathcal{L}_q(MLP_{i}(X_{bt_i}), y_{bt_i})\times \alpha_{i}$\;
            }
        }
    }
    $loss \leftarrow loss + \sum_{q_0\in Q_e}\mathcal{L}_c(q_0)$\tcp*[r]{constraint loss}
    Update models' parameters with gradient descent on $loss$\;
}

\end{algorithm}
\end{small}

}


\subsection{Extensions}
\label{sec:extension}

\subsubsection{Extension to Other Matching Semantics}
\label{sec:extension_matching}



Thanks to the generalization ability of neural networks, \method can be extended to other subgraph matching semantics with minimal changes. Specifically, only two adjustments are needed: (1) adapting the query execution engine to reflect the desired matching semantics, and (2) collecting training data accordingly so that the supervision aligns with the new semantic. The neural model will then learn to approximate the new outputs automatically.

For example, when using \method for subgraph counting and optimization of homomorphism,
the execution engine do not need to do the isomorphic check (line 9 of \algref{alg:subgraph_matching_execution}), and the training data of cardinality and cost should be the number of homomorphisms and the homomorphism execution time, respectively.



\subsubsection{Extension to Directed Edge-labeled Graphs}
\label{sec:extension_to_directed_graphs}

Extending \method to directed and edge-labeled graphs is straightforward. 
The most basic modification involves filtering and enumerating matching results according to the directions and labels of the query edges.

Additionally, to account for the directionality of edges in the query graphs, the out-edges and in-edges should be handled separately in the query graph encoder. 
Specifically, we modify \eqnref{eqn:triat_framework} to the following iterative process:
\begin{equation}
\begin{aligned}
    \bm{x}_u^{(k)} = \gamma^{(k)}\{\bm{x}_u^{(k-1)},\, &\phi_1^{(k)}\{\bm{x}_v^{(k-1)}, \bm{x}_{e\langle u, v\rangle}^{(k-1)}\mid v\in N_{out}(u)\}, \\ &\phi_2^{(k)}\{\bm{x}_v^{(k-1)}, \bm{x}_{e\langle v, u\rangle}^{(k-1)}\mid v\in N_{in}(u)\}, \\ &\tau^{(k)}\{\bm{x}_{e\langle v_1, v_2\rangle}^{(k-1)}\mid e\langle v_1, v_2\rangle\in E_{N(u)}\}\}.
\end{aligned}
\end{equation}

We also incorporate edge label information into the initial edge features, as a modification of \eqnref{eqn:edge_feature}:
\begin{equation}\label{eqn:dir_labeled_edge_feature}
    \bm{x}_{e\langle u_1, u_2\rangle}^{(0)} = \bm{x}_{u_1}^{(0)}\oplus\bm{x}_{u_2}^{(0)}\oplus \bm{x}_{L(e\langle u_1,u_2\rangle)}\oplus |C\langle u_1, u_2\rangle|,
\end{equation}
where $\bm{x}_{L(e)}$ denotes the embedding of the edge label $L(e)$.
Existing methods can compute a representation for edge labels, such as RDF2Vec \cite{ristoski2016rdf2vec} and Ridle \cite{weller2021predicting}. In our experiments (\secref{sec:kuzu_exp}), we simply use one-hot encodings based on edge labels, which also yield good performance.


\section{Experiments}
\label{sec:experiments}

\subsection{Experiments Setup}

\subsubsection{Datasets}

We select six real-world datasets for the experiments, as they are widely used in prior work on subgraph matching \cite{bi2016efficient, sun2020memory, sun2020rapidmatch} and subgraph counting \cite{zhao2021learned, wang2022neural}. They cover various domains, including biology (Yeast, HPRD), social networks (DBLP, YouTube), the web (EU2005), and citation networks (Patents). These datasets vary in terms of scale and difficulty (e.g., topology, density), and their statistics are provided in \tabref{tab:datasets}.

\begin{small}
\begin{table}[ht]
    \setlength{\abovecaptionskip}{0.1cm}
    \setlength{\belowcaptionskip}{-0.27cm}
    \caption{Datasets statistics}
    \label{tab:datasets}
  \begin{tabular}{ c | c | c | c | c } 
    \toprule 
    Dataset & $|V|$ & $|E|$ & $|\Sigma|$ & average degree \\
    \midrule 
        Yeast & 3,112 & 12,519 & 71 & 8.0 \\
        HPRD & 9,460 & 34,998 & 307 & 7.4 \\
        DBLP & 317,080 & 1,049,866 & 15 & 6.6 \\
        EU2005 & 862,664 & 16,138,468 & 40 & 37.4 \\
        YouTube & 1,134,890 & 2,987,624 & 25 & 5.3 \\
        Patents & 3,774,768 & 16,518,947 & 20 & 8.8 \\
    \bottomrule
\end{tabular}
\end{table}
\end{small}

\subsubsection{Query graphs.}
We use queries from an influential subgraph matching survey \cite{sun2020memory}.
These queries are generated by randomly extracting connected subgraphs from the data graph, following the approach adopted in previous work \cite{sun2012efficient, bi2016efficient, han2019efficient}. This will ensure every query has at least one embedding in the data graph.
The queries vary in the number of vertices, ranging from 4 to 32, with 1800 queries for each dataset. Specifically, each query set with $i$ query vertices, denoted as $Q_i$ ($i = 4, 8, 16, 24, 32$), includes two types of queries: $Q_{iD}$ (dense queries) and $Q_{iS}$ (sparse queries).
Following \cite{sun2020memory}, a query is classified as dense if its average degree exceeds 3; otherwise, it is considered sparse.
Each category contains 200 queries per dataset, except for the size-4 queries, which are not further subdivided into dense or sparse.

During the data collection phase for training, we fully explore the queries in $Q_4$ and partially explore those in $Q_8$, $Q_{16}$, and $Q_{24}$ (\secref{sec:data_collection}). The query sets with 32 query vertices are only used for testing.
For the training process, we randomly sample 80\% of the queries from $Q_4$, $Q_8$, $Q_{16}$, and $Q_{24}$ as the training set, with the remaining 20\% and $Q_{32}$ used for testing.



\subsubsection{Compared Methods}

We compare our method with several traditional graph query ordering methods: QSI \cite{shang2008taming}, GQL \cite{he2008graphs}, RI \cite{bonnici2013subgraph}, RM \cite{sun2020rapidmatch}, and DPiso \cite{han2019efficient}. The first four methods are static ordering methods, while the last one, DPiso, is a dynamic ordering method that selects the next matched query vertex based on the partial match. All of these methods have been shown to be effective in numerous experiments and outperform other not selected methods \cite{sun2020memory, zhang2024comprehensive}. We include RLQVO \cite{wang2022reinforcement} in our comparison and adopt its experimental settings. Additionally, we did not find any other learning-based ordering methods apart from it.

For a fair comparison, we keep the filter and enumeration components of the pipeline execution identical across all methods; the only difference lies in the matching order. We do not include other dynamic methods, such as VEQ \cite{kim2021versatile}, as it is almost identical to DPiso in terms of order, except for degree-one vertices, and its ordering requires a special enumerator.

For \method, we implement it in Python using the AdamW optimizer \cite{loshchilovdecoupled}. The initial feature dimension of the query graph encoder is 128, and the number of TriAT layers is 2, with each layer having a dimension of 64. We train the models for 100 epochs with a learning rate of 0.002. A learning rate scheduler is used to decays the learning rate by 20\% every 20 steps. The query execution part of our experiments is adapted from the code in \cite{sun2020memory}, 
which is implemented in C++ and stores the graph using adjacency lists.
The codes of \method can be found at \href{https://github.com/fyulingi/NeuSO}{https://github.com/fyulingi/NeuSO}.

\subsubsection{Setup}  

Our experiments are conducted on a server with 256GB RAM, running Ubuntu 20.04.5 LTS. The server has an Intel Xeon Gold 6326 2.90GHz CPU and an NVIDIA A100-PCIE-40GB GPU. We set a 350-second time threshold for each query. Each experiment is repeated three times, and the mean values are reported.
\ifbool{fullversion}{
Due to space limitations, we sometimes only present experimental results on a few selected datasets. Additional results can be found in Appendix \ref{sec:appendix_more_exp_results} of the supplementary materials.}

\subsection{Results for Optimization}
\label{sec:exp_opt}

\subsubsection{Enumeration Time Comparison}
\label{sec:exp_enumeration_time_comparison}

In this experiment, we use enumeration time as an indicator of the quality of the generated matching orders. To isolate the effect of matching orderings, we ensure that all methods use the same filtering mechanism (GQL) and enumeration method (QSI), eliminating the influence of other factors. This allows us to focus solely on performance differences arising from the various matching order strategies.

\begin{figure}
    \setlength{\abovecaptionskip}{0.08cm}
    \setlength{\belowcaptionskip}{-0.3cm}
    \centering
    \includegraphics[width=0.7\linewidth]{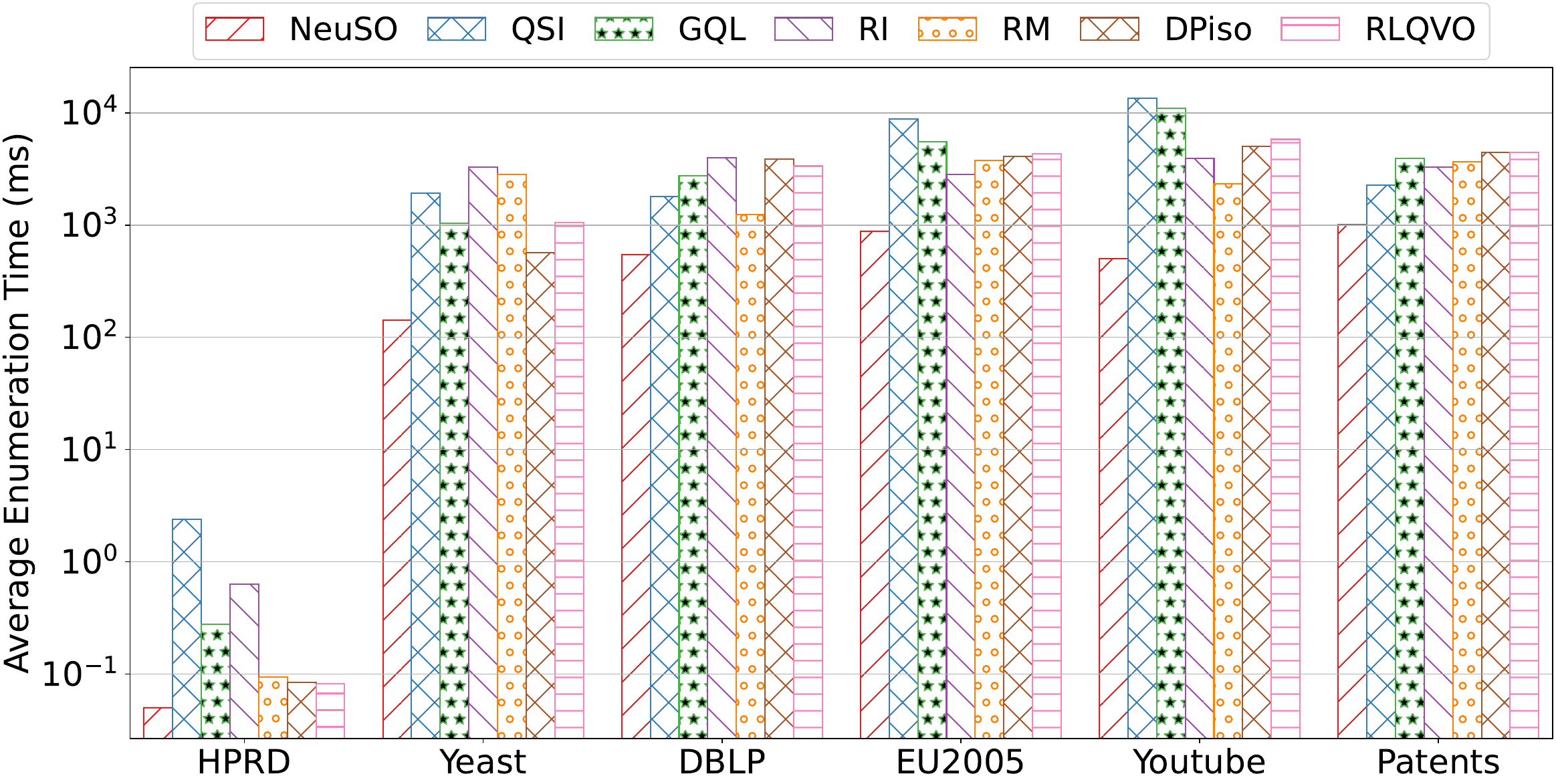}
    \caption{Average enumeration time comparison.}
    \vspace{-0.25cm}
    \label{fig:enumeration_all_datasets}
\end{figure}

\figref{fig:enumeration_all_datasets} presents the average enumeration time across the six datasets for different methods. Our results show that the matching orders generated by \method improve the enumeration efficiency compared to other methods by a factor of $1.63$ to $47.93$. This demonstrates that, after training, the model effectively learns optimization-relevant information from the data graph, enabling it to generalize its experience to unseen queries.

From \figref{fig:enumeration_all_datasets}, we observe that different matching orders result in significant variations in enumeration times. On the relatively simple HPRD dataset, all methods achieve low enumeration times, with the learning-based method RLQVO outperforming the other traditional methods. However, on more complex datasets such as DBLP and YouTube, RLQVO does not consistently outperform other methods and is 4.40 to 11.62 times slower than \method.
This performance gap can be attributed to several factors.
First, RLQVO is trained using reinforcement learning and may not have encountered certain transition patterns in the entire CCG graph during training. In contrast, \method is trained in a supervised manner using carefully collected training data, enabling it to learn more comprehensive mapping relationships between transition patterns and costs. 
Second, RLQVO employs GCN as its query graph encoder, which is less expressive and powerful compared to the TriAT encoder used by \method. 
Additionally, \method utilizes filtered statistics for feature initialization, which are more accurate than the naive statistics derived from the data graph used by RLQVO. 
Finally, RLQVO selects the next matching vertex by computing scores based on vertex representations, without considering higher-level subqueries, which limits its ability to optimize matching orders effectively.

\begin{figure}
    \setlength{\abovecaptionskip}{0.2cm}
    \centering
    \begin{subfigure}[c]{0.49\linewidth}
        \centering
        \includegraphics[width=0.9\linewidth]{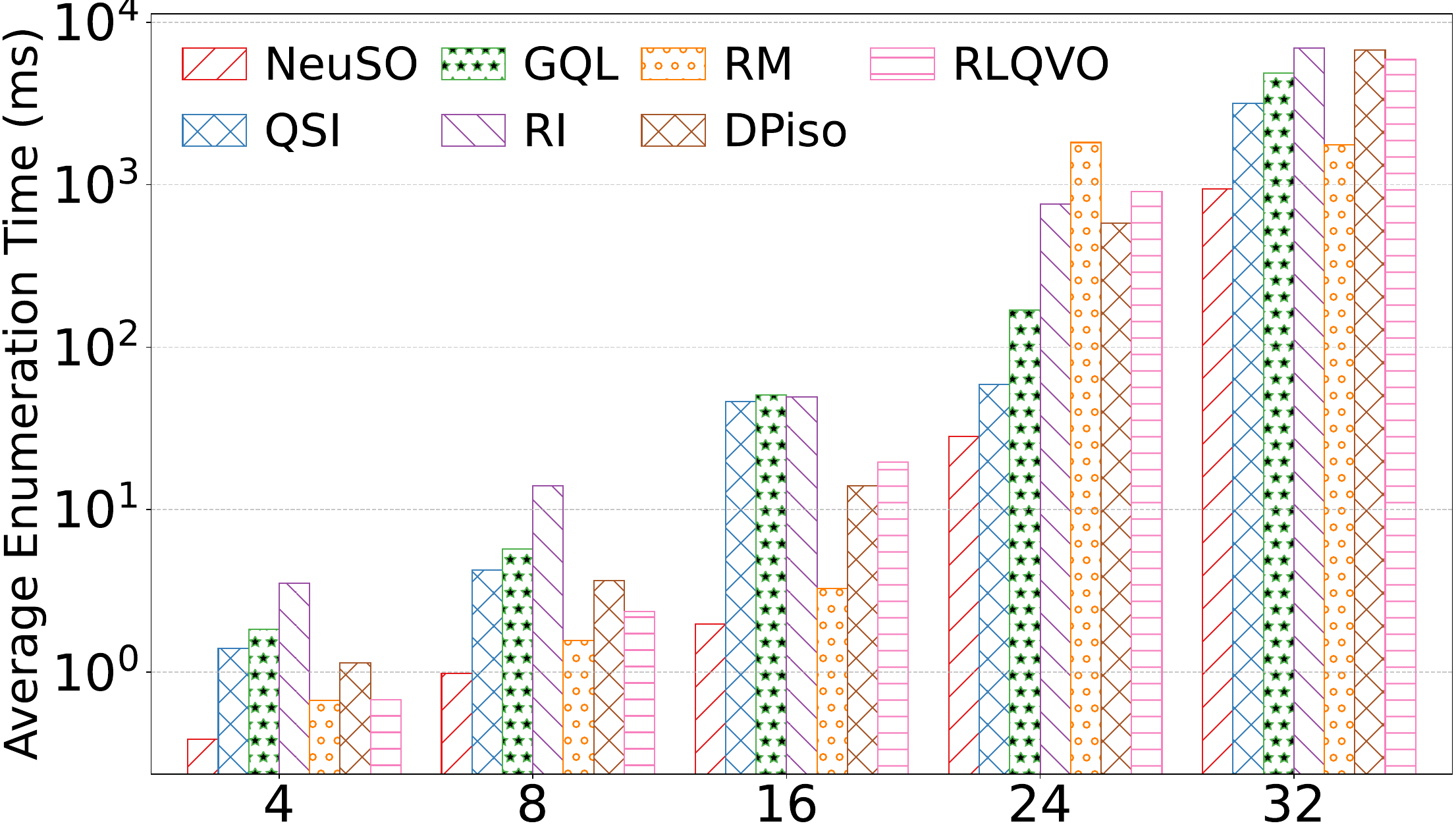}
        \captionsetup{skip=2pt}
        \caption{DBLP}
        \label{fig:dblp_enumeration}
    \end{subfigure}
    \begin{subfigure}[c]{0.49\linewidth}
        \centering
        \includegraphics[width=0.9\linewidth]{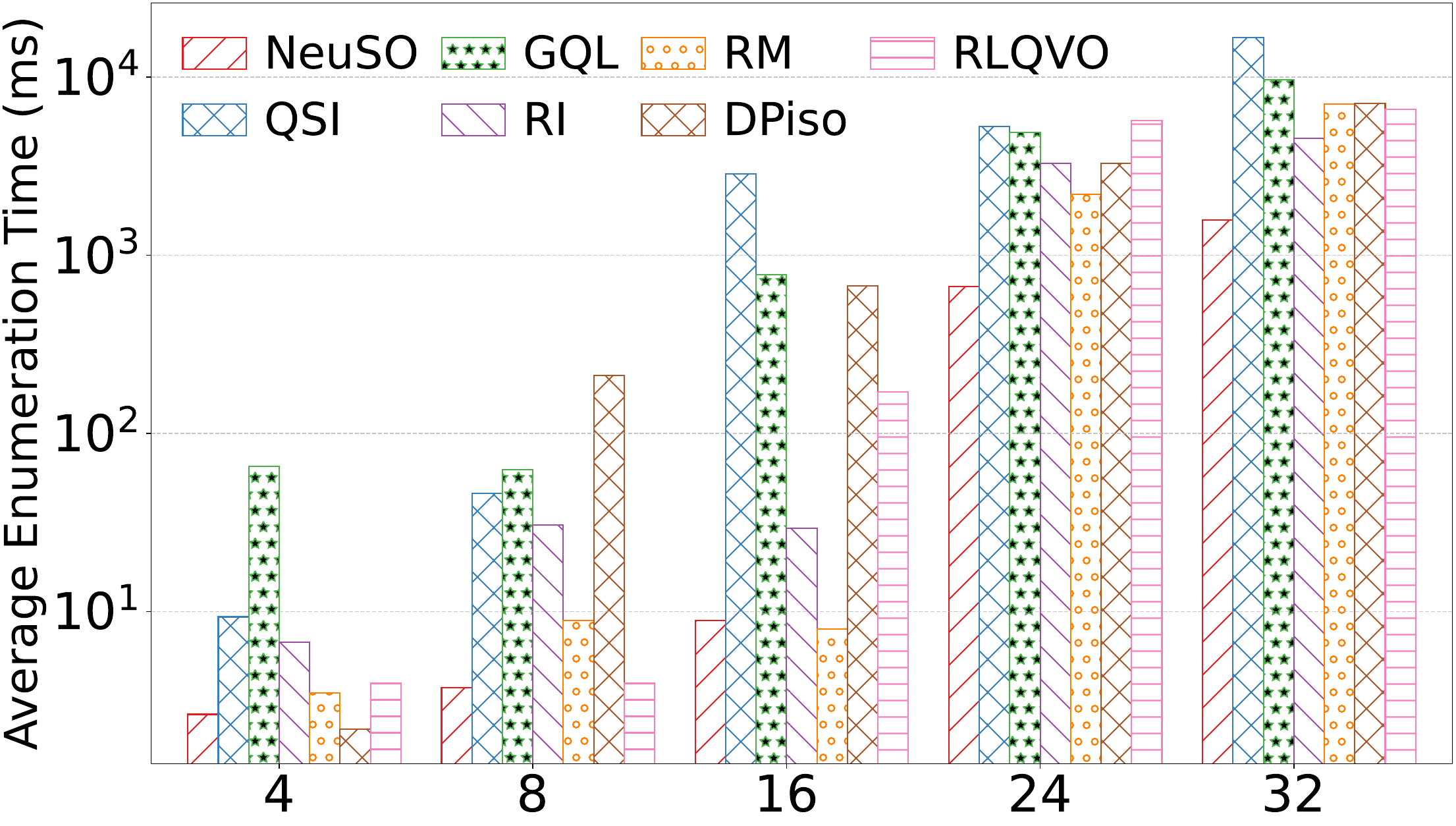}
        \captionsetup{skip=2pt}
        \caption{EU2005}
        \label{fig:eu2005_enumeration}
    \end{subfigure}
    \caption{Average enumeration time on DBLP and EU2005. The x-axis represents different query graph sizes $V_Q$.}
    \label{fig:dblp_eu2005_enumeration}
    \vspace{-0.2cm}
\end{figure}

\figref{fig:dblp_eu2005_enumeration} shows the enumeration time comparison on DBLP and EU2005 with different query graph sizes. Due to space limitations, we omit results for other datasets, which exhibit similar trends. 
The results indicate that \method maintains consistently strong performance across different query size.


Note that the average enumeration time can be significantly affected by extreme values, such as those very slow queries. To better understand \method's performance, we compared the enumeration time of each query on EU2005 between \method and the second-best method (RI). The results are shown in a scatter plot in \figref{fig:eu2005_scatter}. We find that \method accelerates queries especially that other methods take a long time to process. Most of these queries are large, with 24 or 32 vertices, which require a significant amount of time to retrieve results (some even take up to $10^5$ ms).
\figref{fig:eu2005_violin} present the log10 speedup of \method compared RI on EU2005. In general, the results show a long-tail distribution. The green point means that the average speedup is 2.58, indicating that when other methods choose a slow query plan, \method selects a better execution plan, leading to significant improvements.


\begin{figure}
    \setlength{\abovecaptionskip}{0.25cm}
    \setlength{\belowcaptionskip}{-0.15cm}
    \centering
    \begin{subfigure}[c]{0.36\linewidth}
        \centering
        \includegraphics[width=0.8\linewidth]{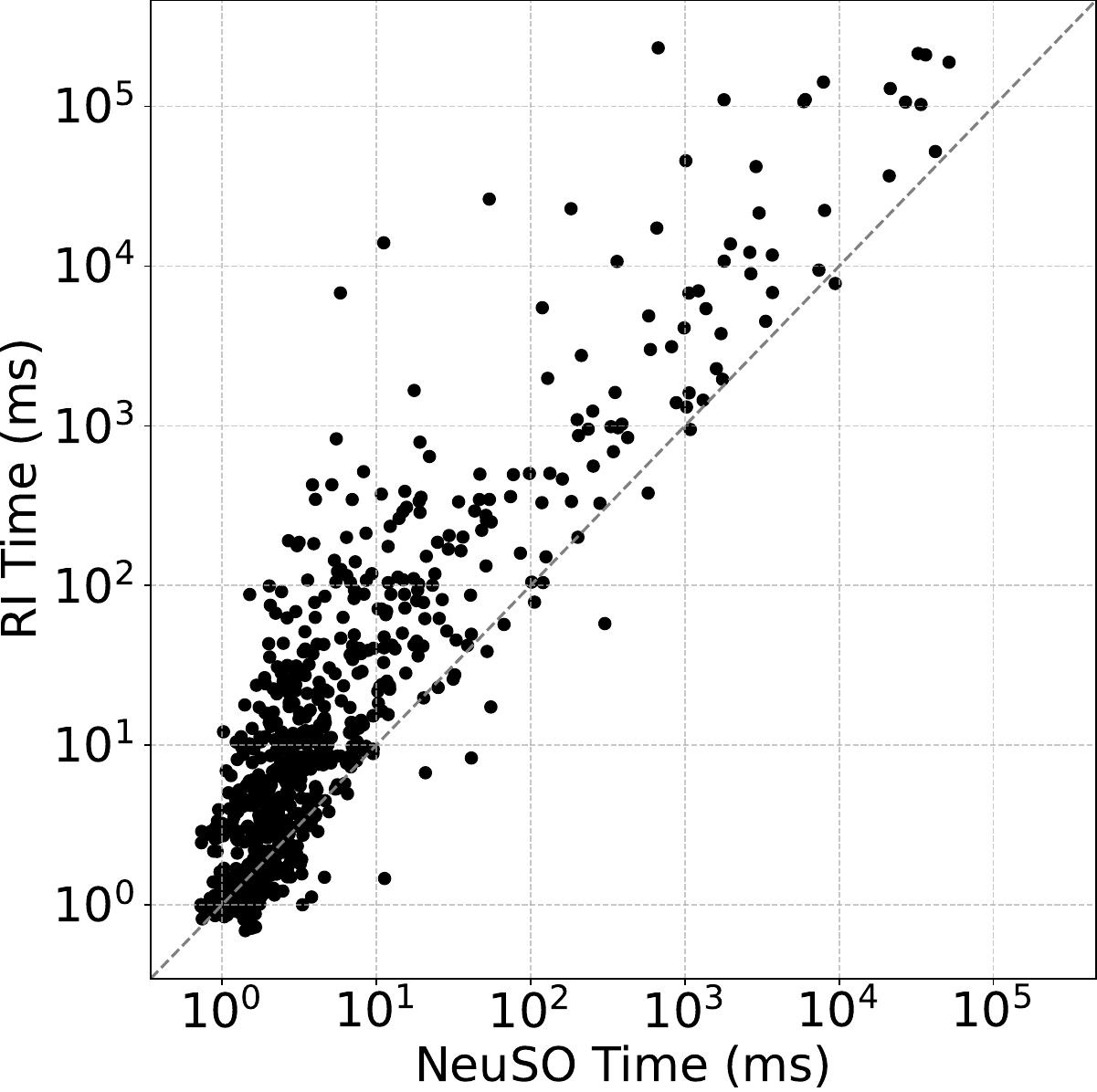}
        \caption{Enumeration time}
        \label{fig:eu2005_scatter}
    \end{subfigure}
    \hspace{0.2cm}
    \begin{subfigure}[c]{0.37\linewidth}
        \centering
        \raisebox{0.1\height}{
        \includegraphics[width=0.8\linewidth]{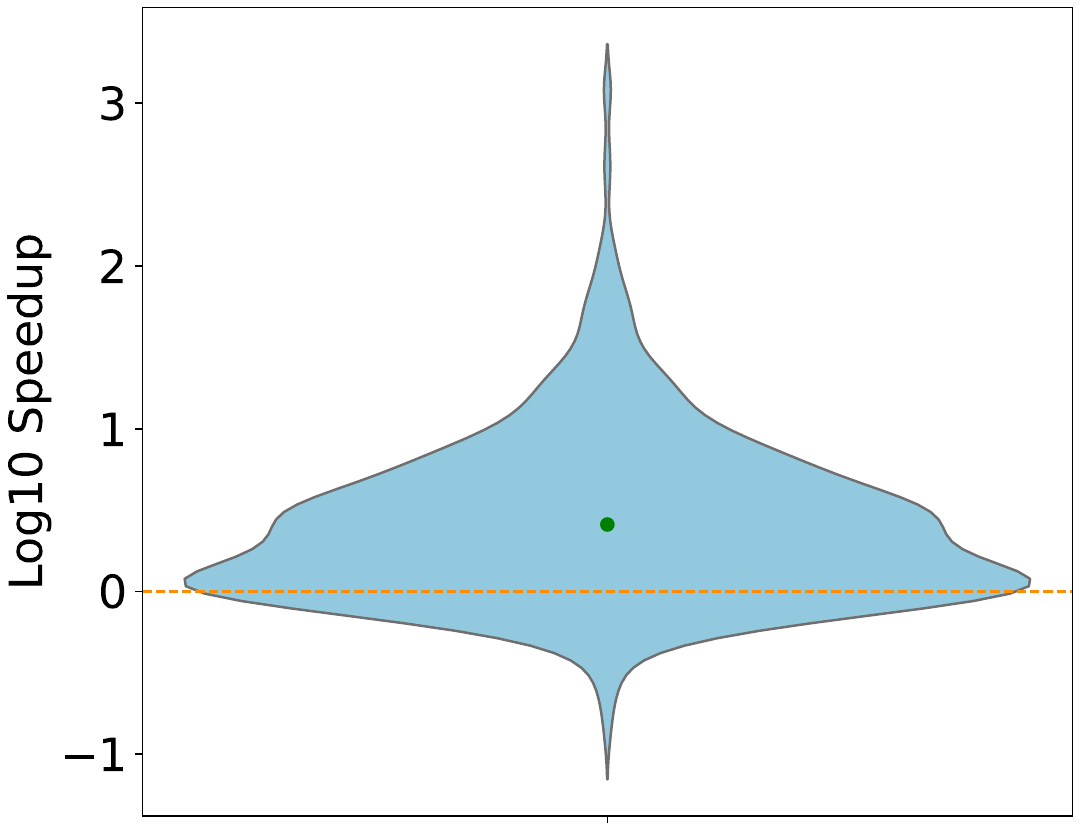}
        }
        \captionsetup{skip=2pt}
        \caption{Log10 of speedup}
        \label{fig:eu2005_violin}
    \end{subfigure}
    \caption{Enumeration time comparison on EU2005 (vs. RI).}
    \label{fig:eu2005_scatter_violin}
\end{figure}


\subsubsection{End-to-end Running Time Comparison}
\label{sec:end_to_end_time_comparison}

We designed \method to strike a balance between efficiency and effectiveness. \figref{fig:youtube_patents_end_to_end} compares the end-to-end running time for the YouTube and Patents datasets. While heuristic methods achieve faster optimization times, the gains from query optimization achieved by \method outweigh the additional time costs, resulting in superior overall performance.

\begin{figure}
    \setlength{\abovecaptionskip}{0.3cm}
    \setlength{\belowcaptionskip}{-0.3cm}
    \centering
    \begin{subfigure}[c]{0.49\linewidth}
        \centering
        \includegraphics[width=\linewidth]{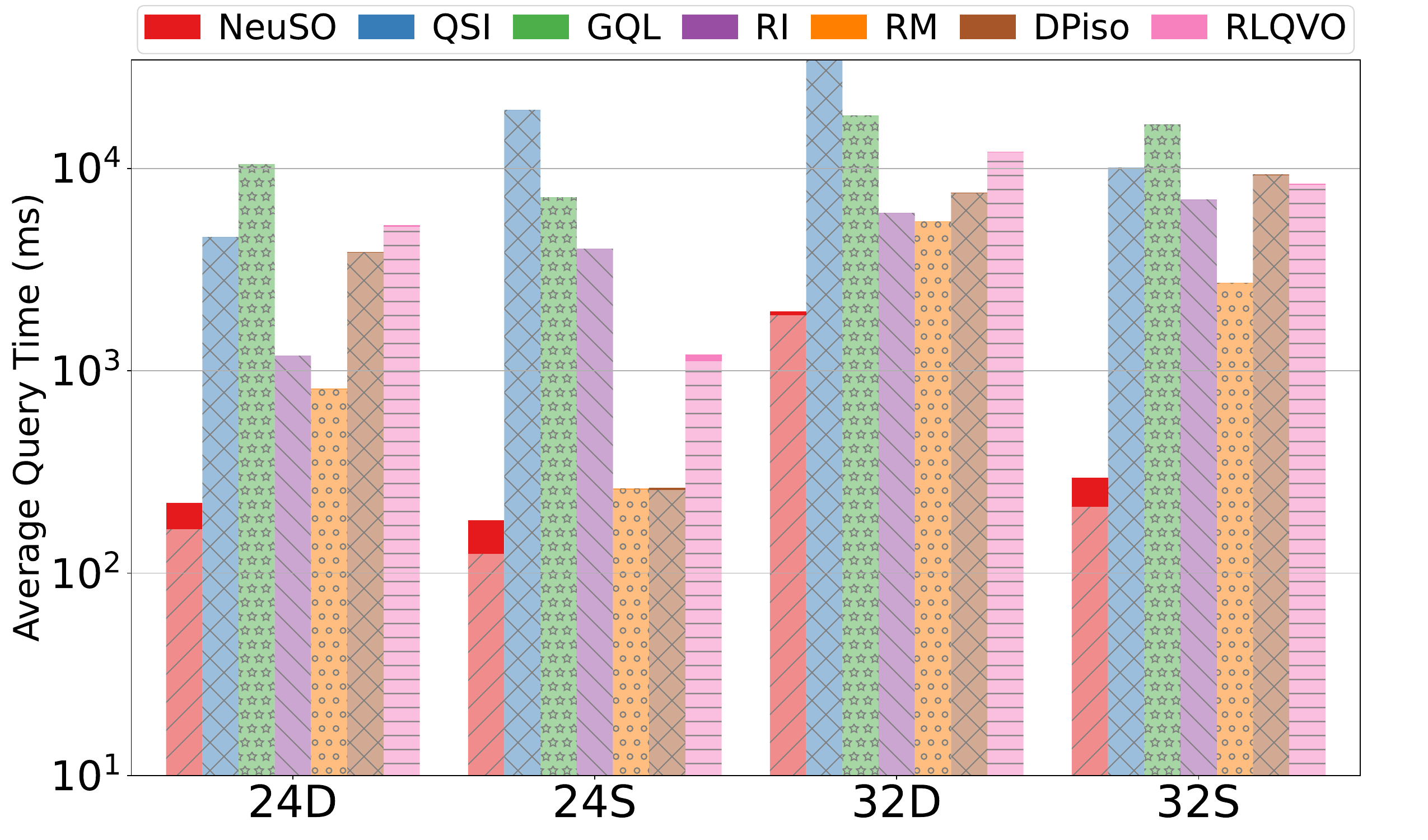}
        \captionsetup{skip=2pt}
        \caption{YouTube}
        \label{fig:youtube_end_to_end}
    \end{subfigure}
    \begin{subfigure}[c]{0.49\linewidth}
        \centering
        \includegraphics[width=\linewidth]{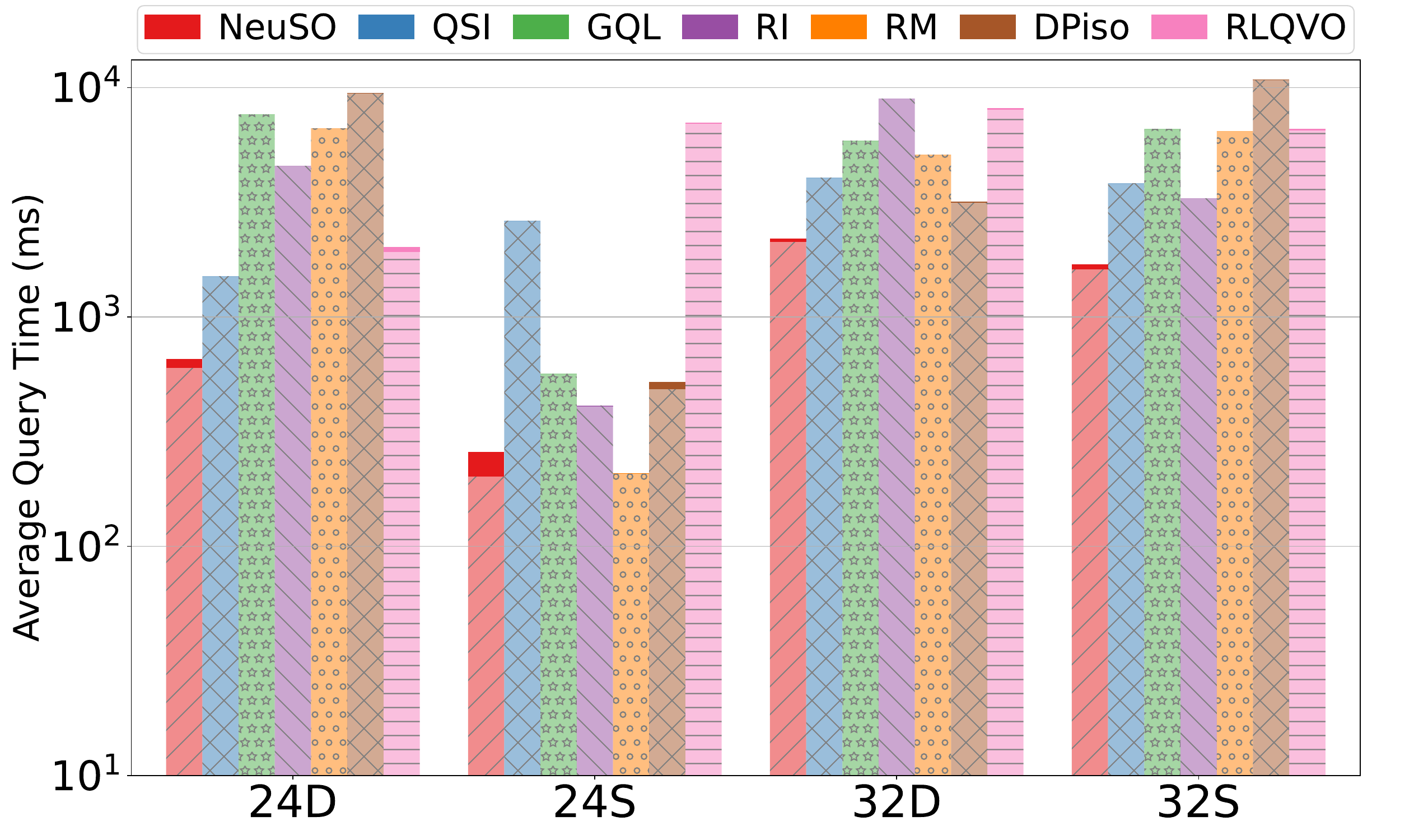}
        \captionsetup{skip=2pt}
        \caption{Patents}
        \label{fig:patents_end_to_end}
    \end{subfigure}
    \caption{Average end-to-end query time comparison. The bottom bars with hatching represent filter and enumeration time, while the solid bars above represent optimization times.}
    \label{fig:youtube_patents_end_to_end}
    \vspace{-0.2cm}
\end{figure}

For a query graph $Q$, \method requires $O(|V_Q|)$ decisions, each of which triggers one deep model invocation.
\tabref{tab:opt_time} compares the optimization time of \method and RLQVO. 
The efficiency advantage of \method primarily stems from the fact that \method runs only one GNN on the query graph, eliminating the need to re-run the query encoder after each decision, as required by RLQVO.

\begin{figure*}[ht]
\centering
\begin{minipage}[c]{0.45\textwidth} 
\begin{small}
\captionof{table}{Optimization time comparison (ms)}
    \vspace{-0.2cm}
    \label{tab:opt_time}
\resizebox{\linewidth}{!}{
  \begin{tabular}{  c | c | c | c | c | c  } 
    \toprule 
    Methods & $Q_4$ & $Q_8$ & $Q_{16}$ & $Q_{24}$ & $Q_{32}$ \\
    \midrule
        \plainmethod & 9.49 & 20.05 & 36.90 & 56.63 & 82.55 \\ \hline
        RLQVO & 12.94 & 29.35 & 61.18 & 93.30 & 125.48 \\
    \bottomrule
\end{tabular}
}
\end{small}
\end{minipage}
\hfill
\begin{minipage}[c]{0.52\textwidth} 
\begin{small}
\captionof{table}{Number of unsolved queries}
    \vspace{-0.2cm}
    \label{tab:unsolved}
\resizebox{\linewidth}{!}{
  \begin{tabular}{ | c | c | c | c | c | c | } 
    \hline
    Methods & Yeast & DBLP & EU2005 & YouTube & Patents \\
    \hline
        \plainmethod & \textbf{1} & 28 & \textbf{37} & \textbf{13} & \textbf{16} \\ \hline
        QSI & 6 & 42 & 96 & 41 & 24 \\ \hline
        GQL & 5 & 50 & 72 & 39 & 25 \\ \hline
        RI & 2 & 35 & 51 & 31 & 21 \\ \hline
        RM & \textbf{1} & \textbf{27} & 44 & 19 & 20 \\ \hline
        DPiso & 3 & 39 & 64 & 51 & 28 \\ \hline
        RLQVO & 9 & 58 & 48 & 45 & 34 \\
    \hline
\end{tabular}
}
\end{small}
\end{minipage}
\end{figure*}




\subsubsection{Unsolved Queries Number}

The number of unsolved queries is an important metric for evaluating the quality of the matching order. \tabref{tab:unsolved} presents the number of unsolved queries across five datasets, excluding HPRD, where all methods return results within the required time due to the simplicity of the dataset. A query is considered unsolved if it cannot be answered within 350 seconds. On most datasets, \method produces better matching orders, leading to a smaller number of unsolved queries compared to other methods.


\subsection{Ablation Study}

\subsubsection{Ablation on Multi-task Learning}

As described in \secref{sec:overview}, the multi-task learning framework of \method jointly trains the model to estimate both cardinality and cost. We hypothesize that this design improves the quality of subquery representations and enhances model robustness by enabling shared feature learning. To validate this hypothesis, we conduct the following ablation study.


\begin{figure}
    \setlength{\abovecaptionskip}{0.23cm}
    \setlength{\belowcaptionskip}{-0.25cm}
    \centering
    \begin{subfigure}[c]{0.49\linewidth}
        \centering
        \includegraphics[width=0.8\linewidth]{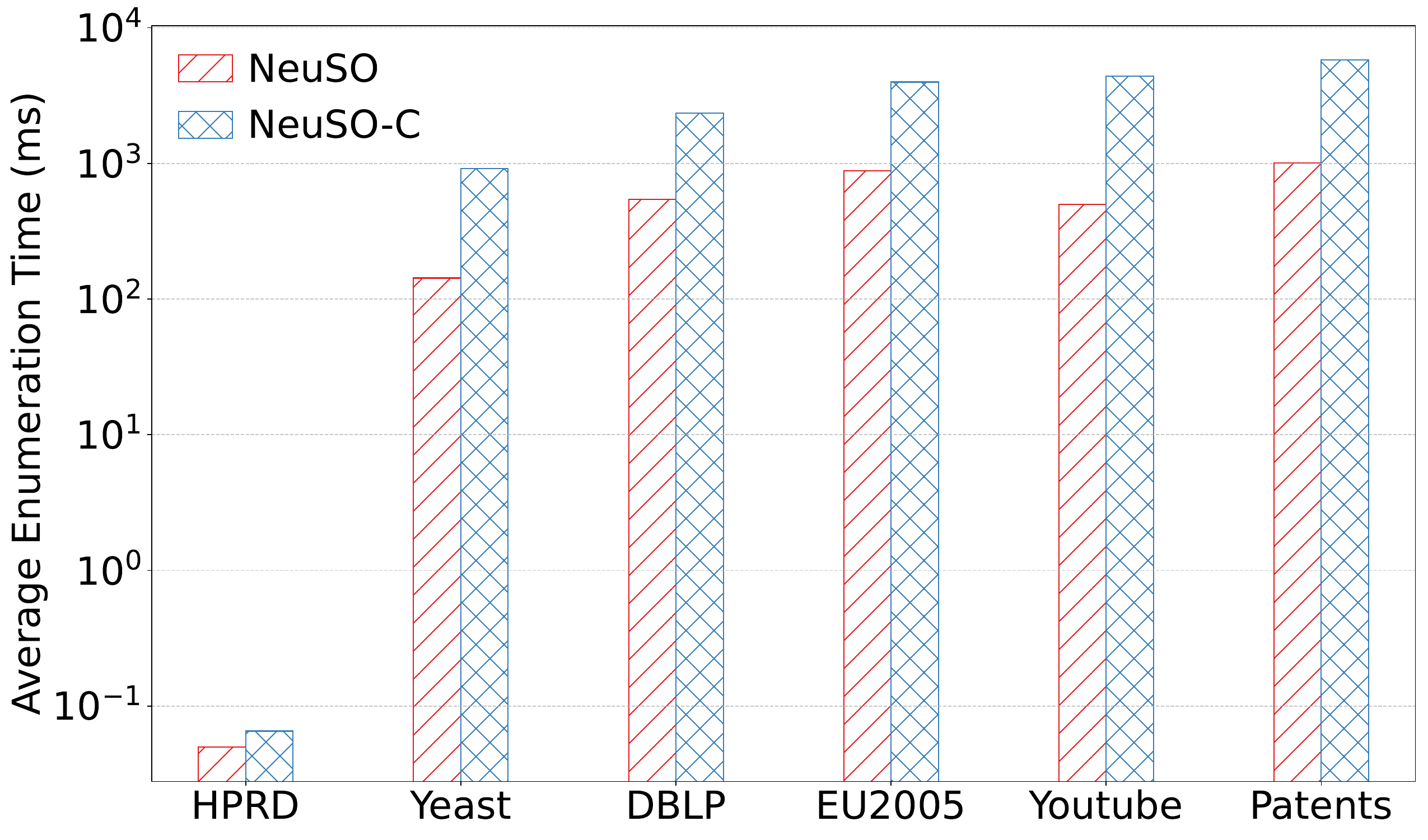}
        \captionsetup{skip=1.5pt}
        \caption{Ablation study on multi-task learning. }

        \label{fig:multi_task_ablation}
    \end{subfigure}
    \begin{subfigure}[c]{0.49\linewidth}
        \centering
        \includegraphics[width=0.8\linewidth]{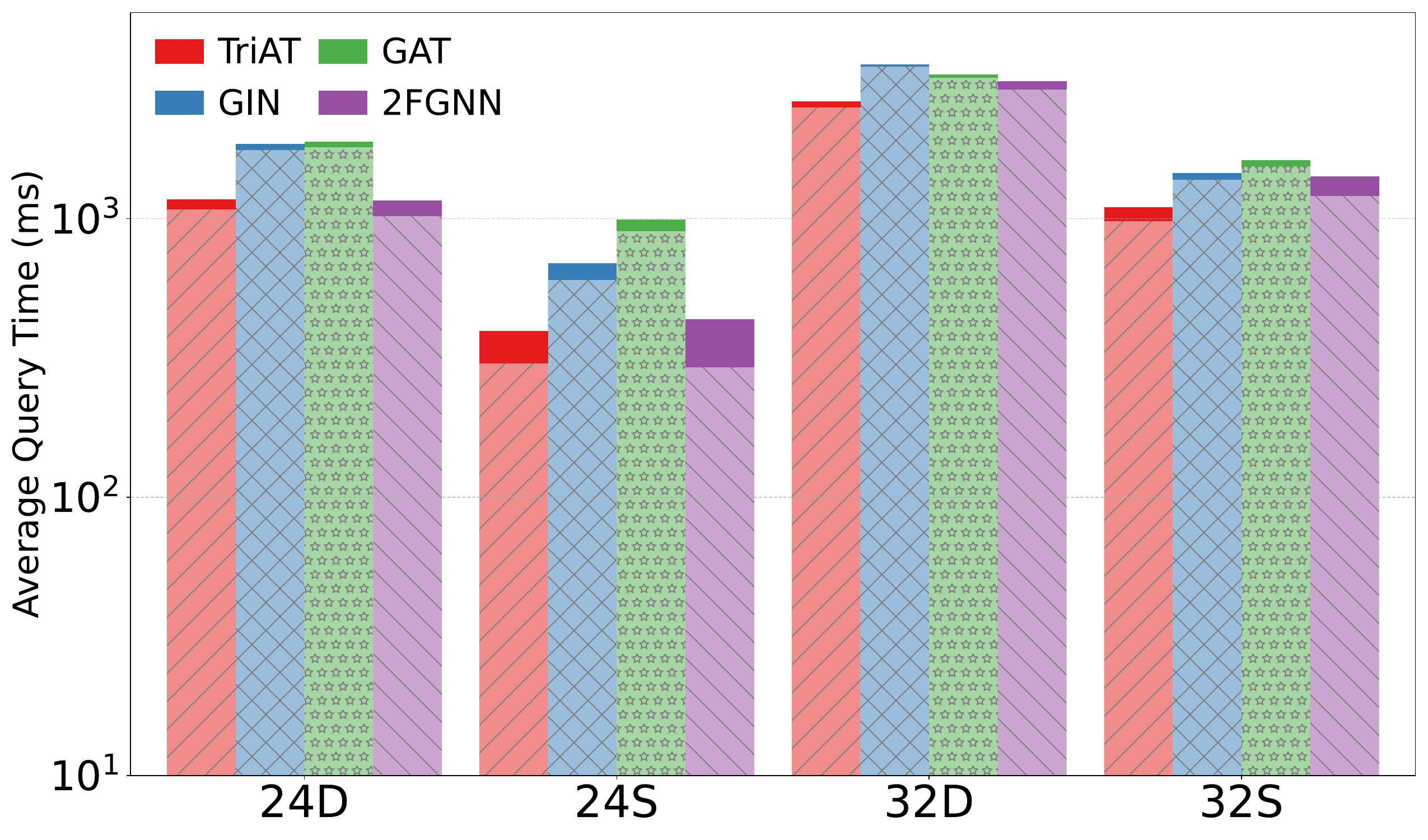}
        \captionsetup{skip=1.5pt}
        \caption{Ablation study on different GNN models (EU2005).}
        \label{fig:model_ablation}
    \end{subfigure}
    \caption{Ablation studies. (a) Enumeration time comparison with NeuSO-C (trained w/o cardinality estimation); (b) End-to-end performance of different GNN variants used in \method. The solid bars on top indicate optimization time.}
    \label{fig:ablation_exp}
    \vspace{-0.3cm}
\end{figure}

As shown in \figref{fig:multi_task_ablation}, compared to \texttt{NeuSO-C}, which is trained solely using subquery cost information (excluding cardinality), \method achieves a speedup in enumeration time ranging from $1.31\times$ to $8.78\times$. This result demonstrates the effectiveness and advantage of the multi-task learning framework.

\subsubsection{Ablation For GNN Model}
\label{sec:ablation_model}

To evaluate the effectiveness of our proposed TriAT module, we conduct an ablation study by replacing it with several representative GNN architectures, including 
GIN \cite{xu2018how}, GAT \cite{velivckovic2018graph}, and 2FGNN (based on 2FWL) \cite{maron2019provably}. 

The results on the EU2005 are shown in \figref{fig:model_ablation}. TriAT consistently outperforms GIN and GAT, benefiting from its stronger expressive power. It also achieves performance comparable to the theoretically more powerful model 2FGNN. However, 2FGNN incurs higher computational costs, resulting in weaker end-to-end performance for some large queries. We further observe that 2FGNN occasionally produces suboptimal plans, suggesting that excessive structural awareness does not necessarily translate to better optimization performance. All these results highlight the advantage of TriAT in balancing expressiveness and efficiency for query optimization tasks.

\vspace{-0.2cm}
\subsection{Results for Cardinality Estimation}
\label{sec:exp_ret_card}

\method is trained via a multi-task approach, with cardinality estimation as a byproduct (although it is not required for matching order generation in \method). We evaluate its effectiveness and efficiency compared to other methods.

We selected three state-of-the-art (SOTA) neural methods, LSS \cite{zhao2021learned}, NeurSC \cite{wang2022neural}, and GNCE \cite{schwabe2024cardinality}, as well as two traditional methods, Alley \cite{kim2021combining} and Fast \cite{shin2024cardinality}, for comparison.
These methods were chosen because they have significantly outperformed other cardinality estimation techniques. We used the official implementations provided by the authors and conducted experiments following the settings and parameters described in the original papers.

\subsubsection{Accuracy}

\nop{The q-error for the five compared methods is illustrated in \figref{fig:cardinality_accuracy}. Results are shown only for queries where cardinality can be accurately determined within a reasonable time frame. 
To distinguish between underestimation and overestimation, we plot the q-error as $\hat{c}/c$, where overestimation is depicted above $y=0$ and underestimation below $y=0$.}

The results of cardinality estimation accuracy 
are illustrated in \figref{fig:cardinality_accuracy}, where we only show the results for queries whose cardinalities can be accurately determined within a reasonable time frame.  
To distinguish between underestimation and overestimation, we plot the logarithmic estimation ratio between predicted and true cardinalities: $\log_{10}(\hat{c}/c)$, where overestimation is depicted above $y=0$ and underestimation below $y=0$.


\begin{figure}
    \setlength{\abovecaptionskip}{0.1cm}
    \setlength{\belowcaptionskip}{-0.3cm}
    \centering
    \includegraphics[width=0.88\linewidth]{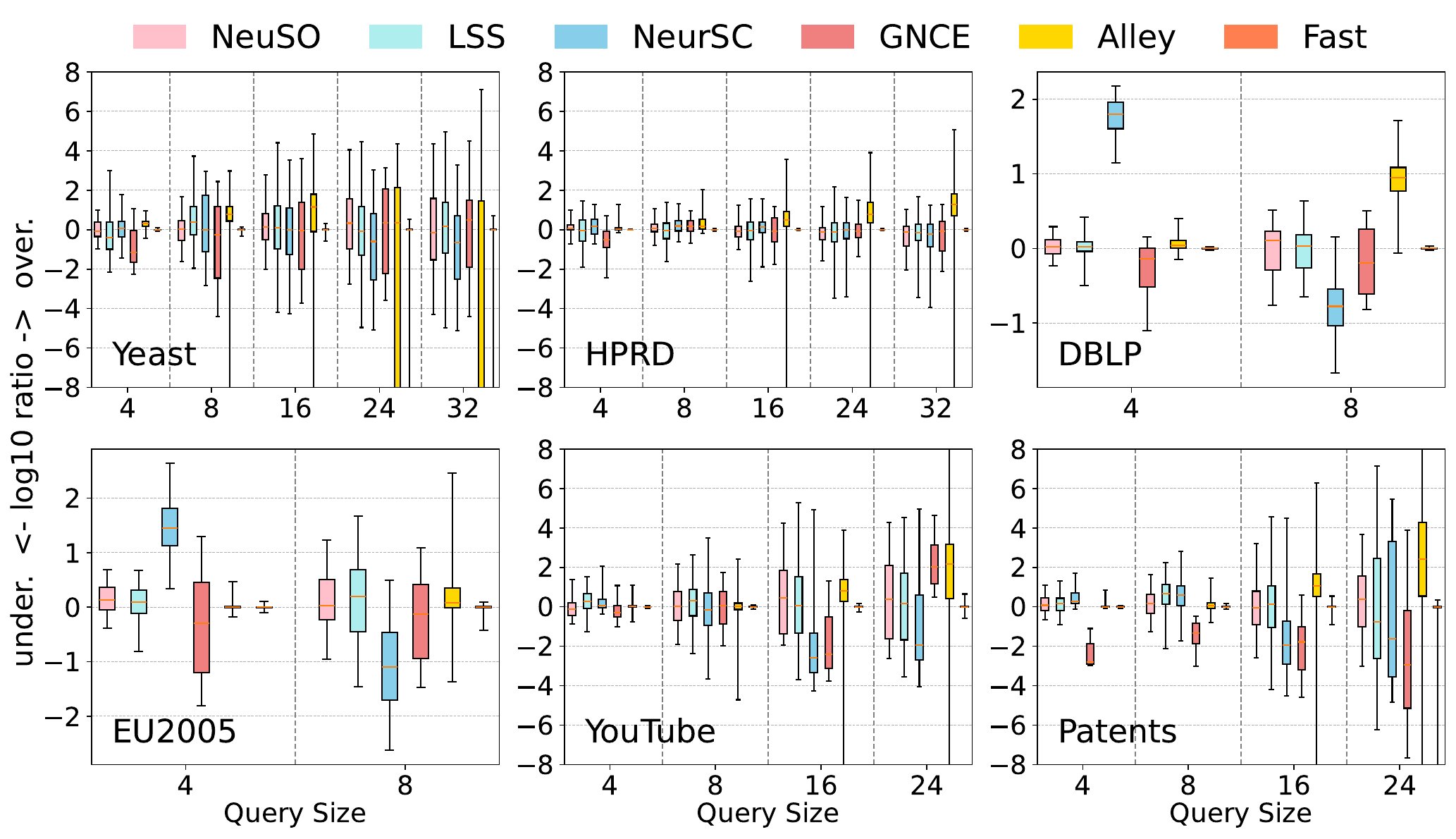}
    \caption{Cardinality estimation accuracy comparison. The boxplot shows $\log_{10}(\hat{c}/c)$, the base-10 log of the estimation ratio. The lower and upper whiskers denote the min and max values. The bounds of the box represent the 25th and 75th percentiles, with the median indicated by the line inside the box. A denser box closer to 0 indicates higher accuracy.}
    \label{fig:cardinality_accuracy}
    \vspace{-0.15cm}
\end{figure}

\nop{As depicted in \figref{fig:cardinality_accuracy}, \method consistently outperforms the two learning-based cardinality estimation methods, LSS and NeurSC. LSS provides relatively stable estimates, with balanced overestimations and underestimations. NeurSC performs better than LSS on the HPRD and Yeast datasets due to its filter effectively captures more query-relevant information from the data graph. , as filter effectively captures more query-relevant information from the data graph.
However, NeurSC performs poorly on other large datasets such as DBLP and EU2005, and tends to underestimate as the query size increases.}

As depicted in \figref{fig:cardinality_accuracy}, \method consistently outperforms the other three learning-based cardinality estimation methods: LSS, NeurSC, and GNCE. 
LSS provides stable estimates, with balanced overestimations and underestimations. 
NeurSC performs better than LSS on the HPRD and Yeast datasets, as filter effectively captures more query-relevant information from the data graph.
However, NeurSC performs poorly on larger datasets such as DBLP and EU2005, and tends to underestimate as the query size increases. 
Similarly, GNCE performs poorly on large datasets such as YouTube and Patents.



Alley performs well for small queries (e.g., 4 vertices), but deteriorates significantly as query size grows. This is primarily due to sampling failure: with over 16 vertices, Alley's method struggles to generate valid results, often leading to underestimation.

Fast, the SOTA sampling-based method, improves upon the sampling approach by sampling from the filtered data graph, increasing the likelihood of successful sampling. Our experiments show that Fast achieves the best performance in most cases. However, it also encounters sampling failures for large queries (e.g., size 32 on Yeast and size 24 on Patents). \method performs better than other methods except Fast in most scenarios, leveraging a more expressive GNN and utilizing query-specific information through its filter.


\subsubsection{Efficiency}
\label{sec:card_efficiency_exp}

Although Fast achieves higher accuracy than \method, its estimation cost is prohibitively high for query optimizers. \figref{fig:card_est_time_comp} compares the running time for a single estimate (entire query) on the EU2005 and DBLP datasets. 
For \method and NeurSC, the filtering time is included.

\begin{figure*}[ht]
\begin{minipage}[b]{6.7cm}
    \begin{subfigure}[c]{0.48\linewidth}
        \centering
        \includegraphics[width=\linewidth]{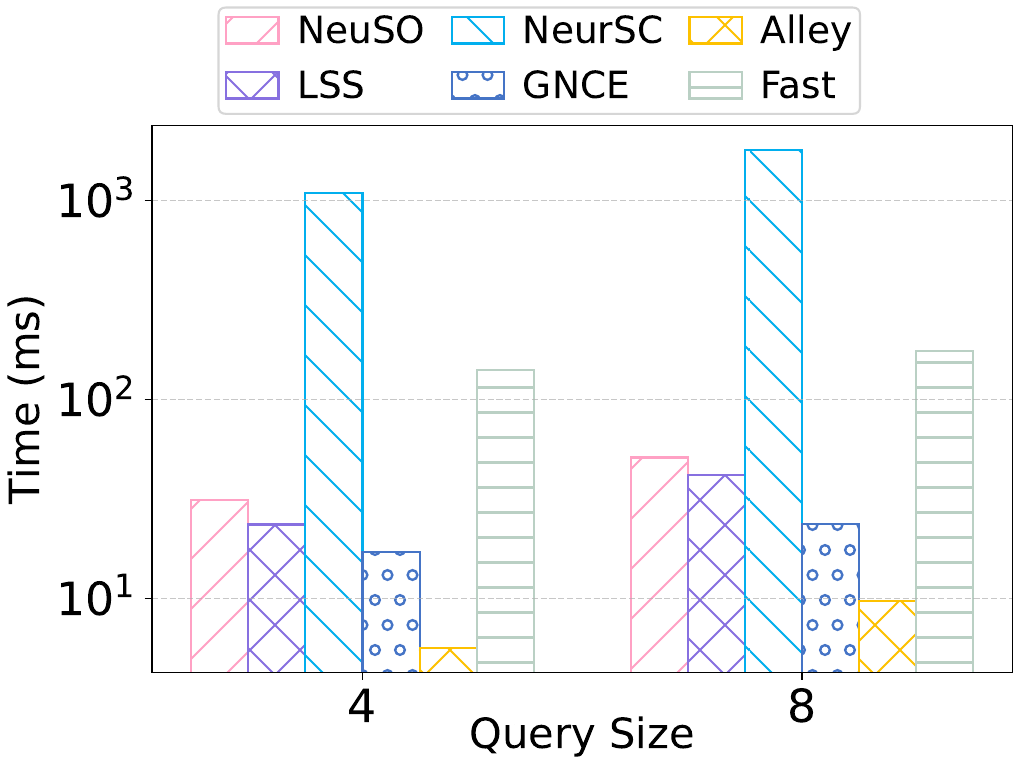}
        \captionsetup{skip=1.5pt}
        \caption{DBLP}
        \label{fig:card_time_dblp}
    \end{subfigure}
    \hfill
    \begin{subfigure}[c]{0.48\linewidth}
        \centering
        \includegraphics[width=\linewidth]{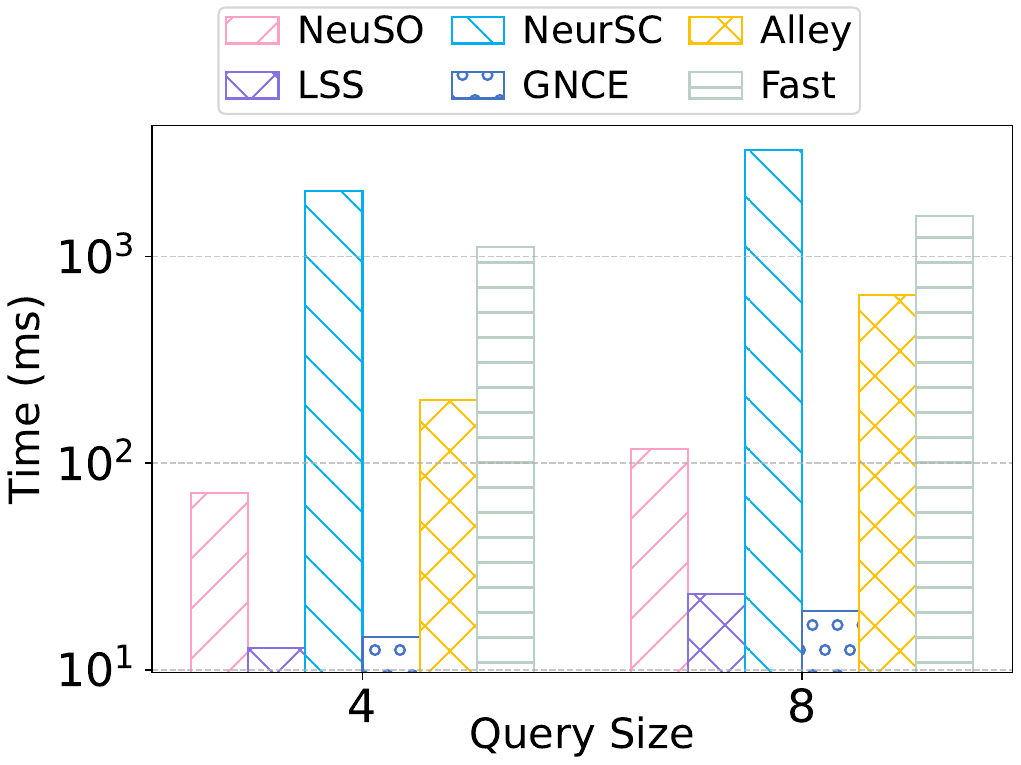}
        \captionsetup{skip=1.5pt}
        \caption{EU2005}
        \label{fig:card_time_eu2005}
    \end{subfigure}
    \caption{Average cardinality estimate time.}
    \label{fig:card_est_time_comp}
\end{minipage}
\hfill
\begin{minipage}[b]{7cm}
    \begin{subfigure}[c]{0.49\linewidth}
        \centering
        \includegraphics[width=\linewidth]{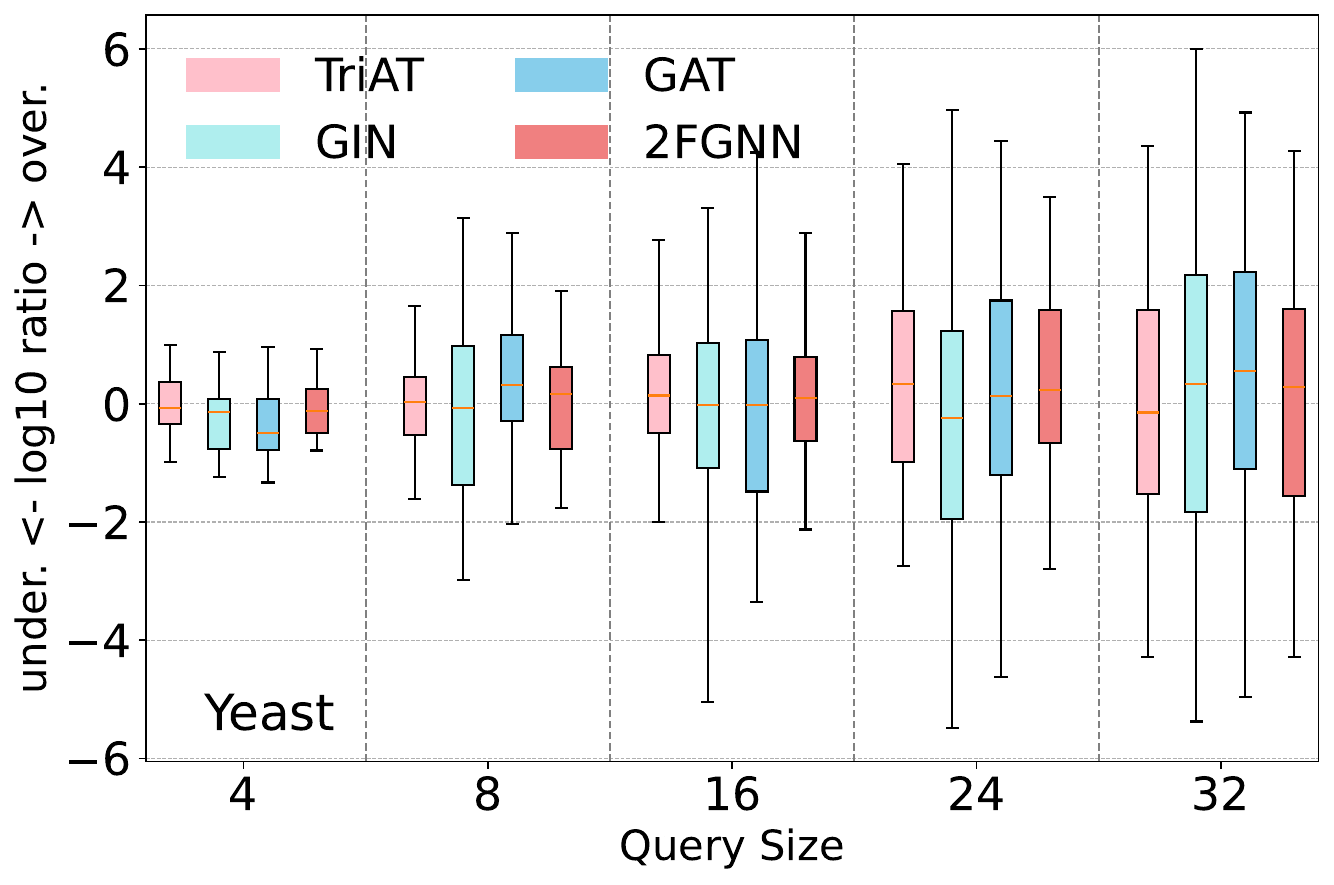}
        \captionsetup{skip=1.3pt}
        \caption{Yeast}
        \label{fig:yeast_ablation}
    \end{subfigure}
    \hfill
    \begin{subfigure}[c]{0.49\linewidth}
        \centering
        \includegraphics[width=\linewidth]{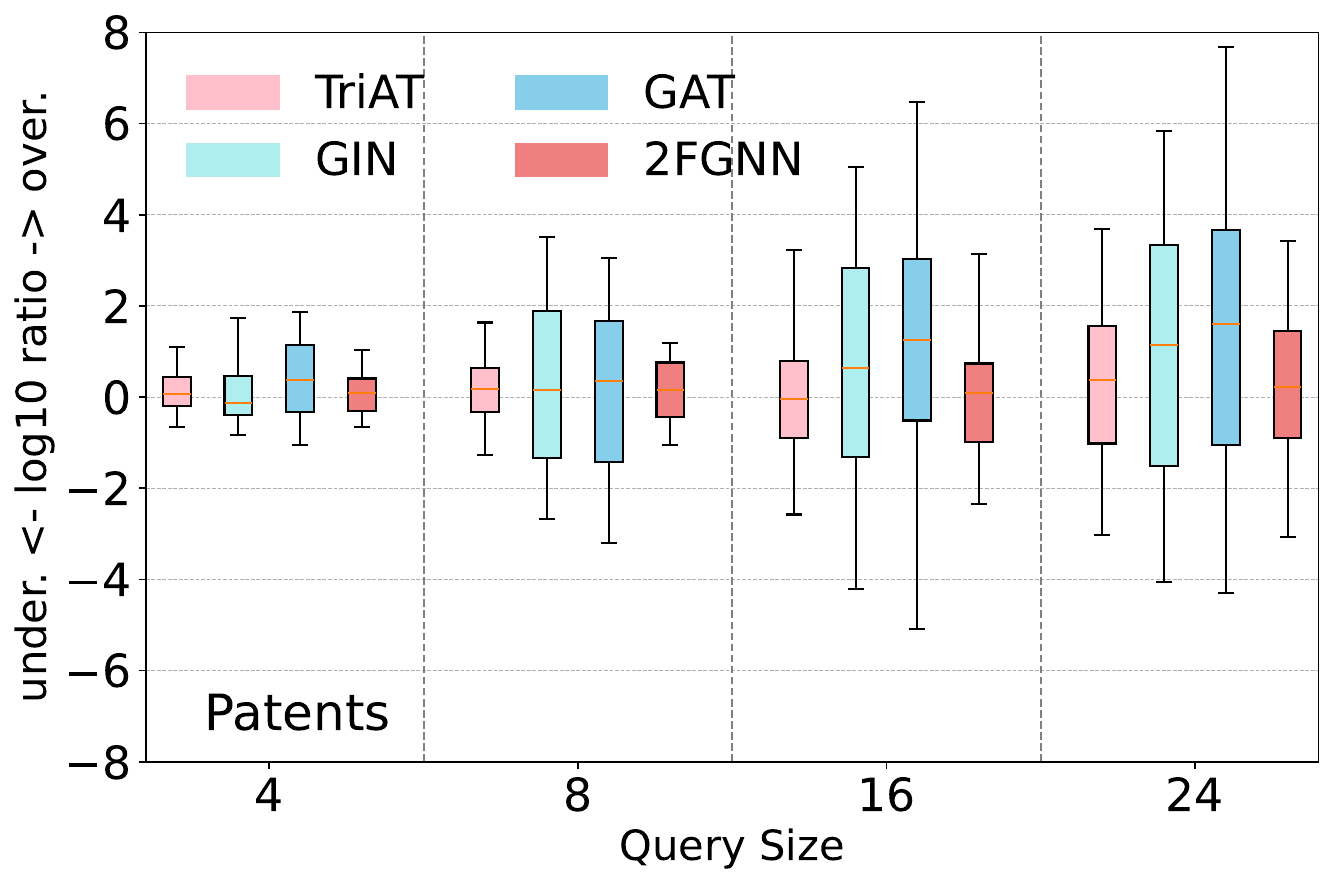}
        \captionsetup{skip=1.3pt}
        \caption{Patents}
        \label{fig:patents_ablation}
    \end{subfigure}
    \caption{Model ablation for cardinality estimation.}
    \label{fig:ablation_card}
\end{minipage}
\end{figure*}


NeurSC incurs the highest time cost due to its use of a GNN on the filtered graph. Fast likewise requires considerable time for filtering and sampling. In contrast, \method only gathers lightweight statistics during filtering, leading to improved efficiency. In our experiments, \method achieved a speedup of 1.75× to 222.12× over Fast across various query sizes and datasets.
LSS and GNCE apply a GNN directly to the query graph without a filtering stage, and are thus faster per invocation than \method.
However, the filtering in \method is performed only once and amortized across multiple model calls during subquery optimization.
As a result, when integrated into a query optimizer, \method benefits from amortized filtering costs and reduced redundant computation across subqueries, whereas LSS and GNCE are not designed for such use cases.

\subsubsection{Further Ablation Results for TriAT}
As an additional ablation study, we replace our proposed TriAT with other GNN architectures, including GIN \cite{xu2018how}, GAT \cite{velivckovic2018graph}, and 2FGNN \cite{maron2019provably}, for cardinality estimation task.
\figref{fig:ablation_card} presents the experimental results on the Yeast and Patents datasets.
As shown in the figure, TriAT consistently outperforms both GIN and GAT, which aligns with its more expressive power.
Moreover, while TriAT exhibits comparable accuracy with more expressive 2FGNN, it achieves this with lower computational cost (2FGNN requires up to 8.16 times more computation time than TriAT for query graphs with 32 vertices). This highlights TriAT’s favorable trade-off between accuracy and efficiency.

\subsection{Experiments on Edge-labeled Graphs}
\label{sec:kuzu_exp}

In addition to the datasets containing only vertex labels, we also conduct experiments on the Labeled Subgraph Query Benchmark (LSQB) \cite{mhedhbi2021lsqb}, a directed benchmark featuring both vertex and edge labels based on LDBC SNB \cite{erling2015ldbc}. LSQB focuses on complex join queries that are typical in social network analysis. Our experiments were conducted using LSQB with a scale factor of 3, which contains 11.3 million vertices and 66.2 million edges.

Since LSQB includes a schema graph representing possible vertex linkages, certain learning-based optimization methods for relational databases, such as JGMP \cite{reiner2023sample}, can also be applied. We therefore include it in our experiments for comparison.
As the original LSQB only provides 9 queries, we generate 75 queries for each number of query vertices from $\{3, 4, 5, 6, 7, 8\}$. We randomly split 80\% of the generated queries for training and use the remaining 20\% for testing. Reported results are based on the test set. 
\ifbool{fullversion}{
More details are provided in Appendix \ref{sec:appendix_more_kuzu_details}.}{\textcolor{red}{More details can be found in the full version of our paper \cite{}.}}

\begin{figure}
    \centering
    \begin{subfigure}[c]{0.45\linewidth}
        \centering
        \includegraphics[width=0.97\linewidth]{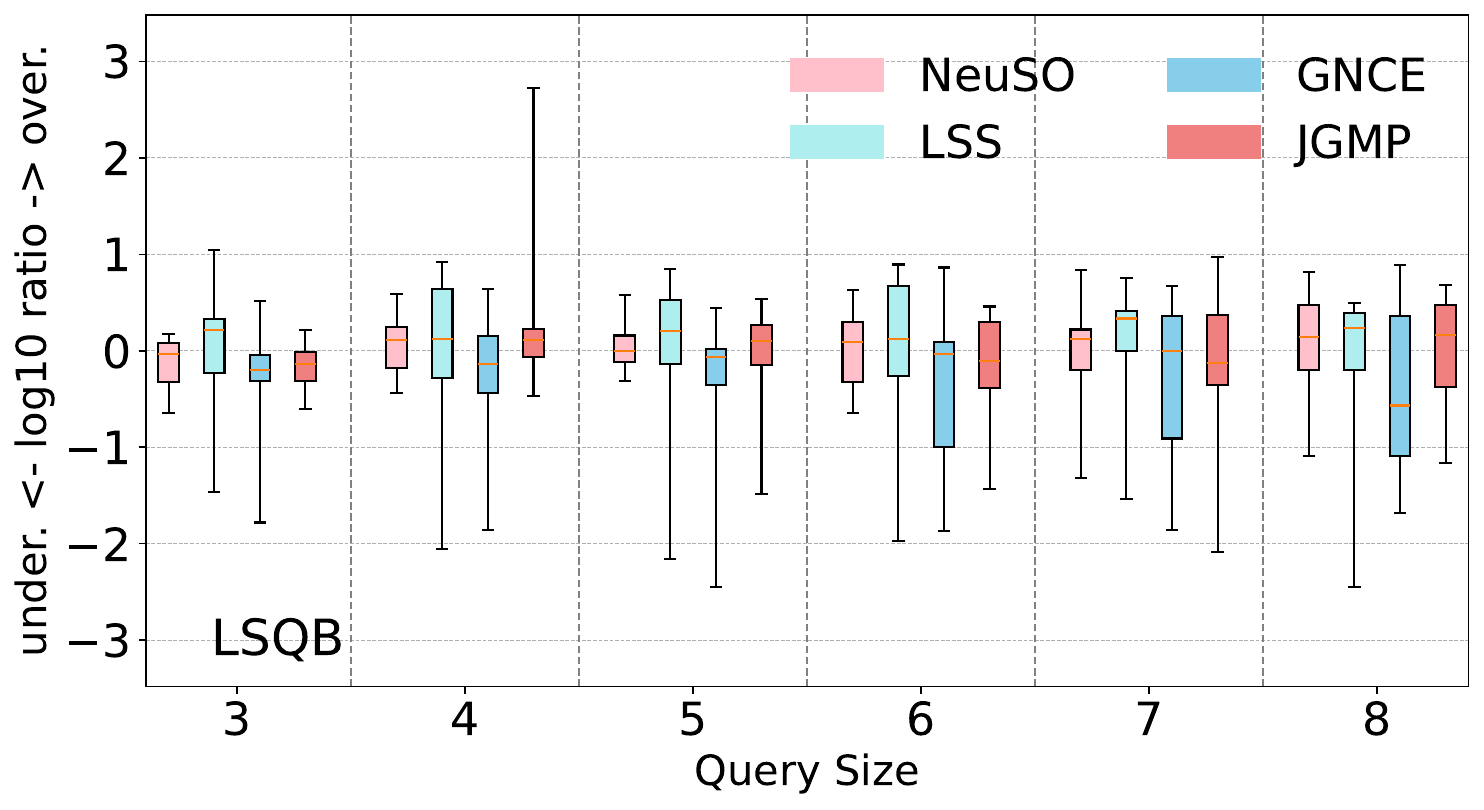}
        \captionsetup{skip=1.5pt}
        \caption{Cardinality estimation accuracy on LSQB.}
        \label{fig:lsqb_card}
    \end{subfigure}
    \hspace{0.03cm}
    \begin{subfigure}[c]{0.52\linewidth}
        \centering
        \includegraphics[width=0.78\linewidth]{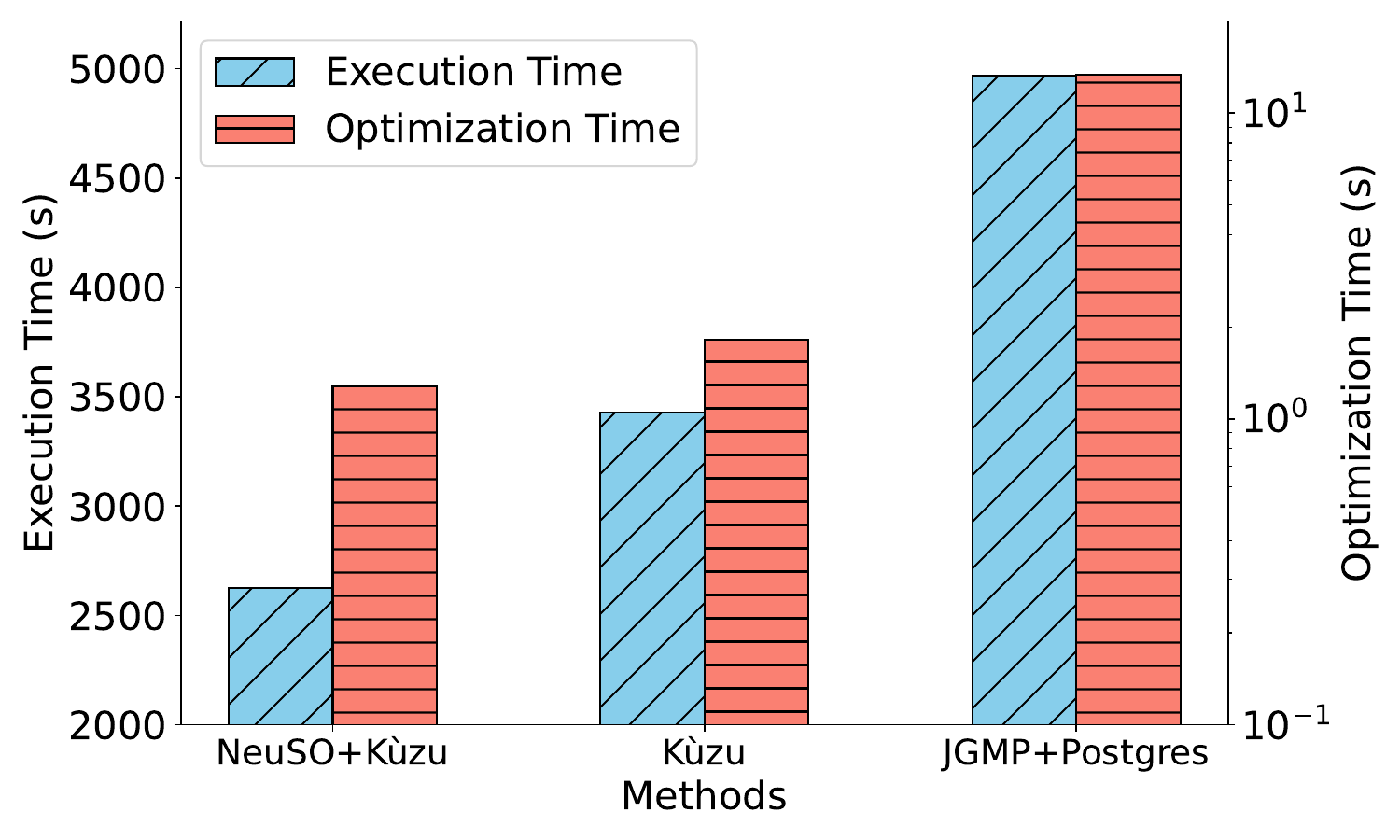}
        \captionsetup{skip=1.3pt}
        \caption{Execution \& optimization time comparison on LSQB.}
        \label{fig:lsqb_exe}
    \end{subfigure}
    \caption{Experiments on LSQB.}
    \label{fig:lsqb_exp}
\end{figure}

\subsubsection{Cardinality Estimation Accuracy}
We compare \method with other subgraph cardinality estimation methods (LSS and GNCE) that support directed and edge-labeled graphs. Methods such as NeurSC, Alley, and Fast are excluded as they do not handle directed edge-labeled graphs. We also include JGMP \cite{reiner2023sample}, a state-of-the-art cardinality estimator for relational databases, for comparison. As shown in \figref{fig:lsqb_card}, \method achieves a good balance between accuracy and stability, consistent with the results in \secref{sec:exp_ret_card}.

\subsubsection{Results for Optimization}
\label{sec:kuzu_ret_opt}
We further evaluate the optimization performance of \method on K{\`u}zu \cite{feng2023kuzu}, an open-source graph database designed for complex, join-heavy analytical workloads.
Since the baseline methods discussed in \secref{sec:exp_opt} do not support directed edge-labeled graphs, we exclude them from this experiment. Instead, we inject the query plans generated by \method into K{\`u}zu and compare them with plans produced by K{\`u}zu's built-in DP-based optimizer.
Additionally, we consider a relational baseline by replacing PostgreSQL’s cardinality estimator with JGMP \cite{reiner2023sample}, and executing the queries on PostgreSQL \cite{postgres}. A uniform timeout of 10 minutes is set for each query across all methods.

\figref{fig:lsqb_exe} presents the comparison in terms of total execution time (including optimization time). On K{\`u}zu, the execution plans generated by \method run 1.31$\times$ faster than those generated by the default optimizer, and the optimization process itself is 1.41$\times$ faster than K{\`u}zu’s optimizer. Due to differences in storage and execution engines, the relational method is significantly slower. These results demonstrate that \method can generate more efficient execution plans and is applicable in real-world systems.

\subsection{Robustness and Transferability}

In this section, we evaluate the robustness and transferability of \method under two realistic scenarios: 
(i) updates to the data graph, and (ii) workload shifts. 
We test on Yeast dataset as the queries are evenly distributed on various sizes and true cardinality.

\begin{figure}
    \setlength{\abovecaptionskip}{0.1cm}
    \centering
    \begin{subfigure}[c]{\linewidth}
        \centering
        \includegraphics[width=0.62\linewidth]{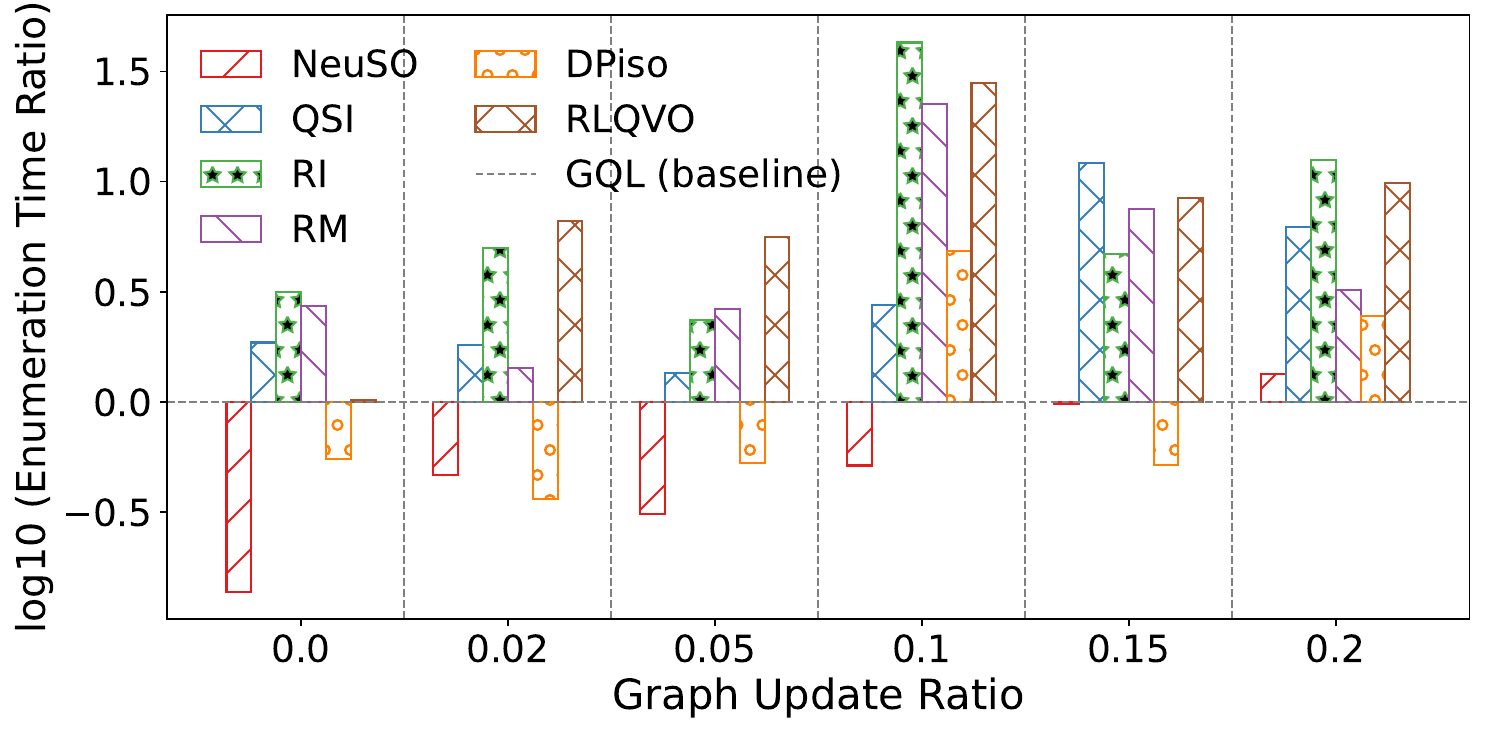}
        \captionsetup{skip=1.5pt}
        \caption{Enumeration time comparison under graph updates,  where GQL is taken as the baseline method.}
        \label{fig:yeast_enumeration_time_update}
    \end{subfigure}
    \begin{subfigure}[c]{\linewidth}
        \centering
        \includegraphics[width=0.82\linewidth]{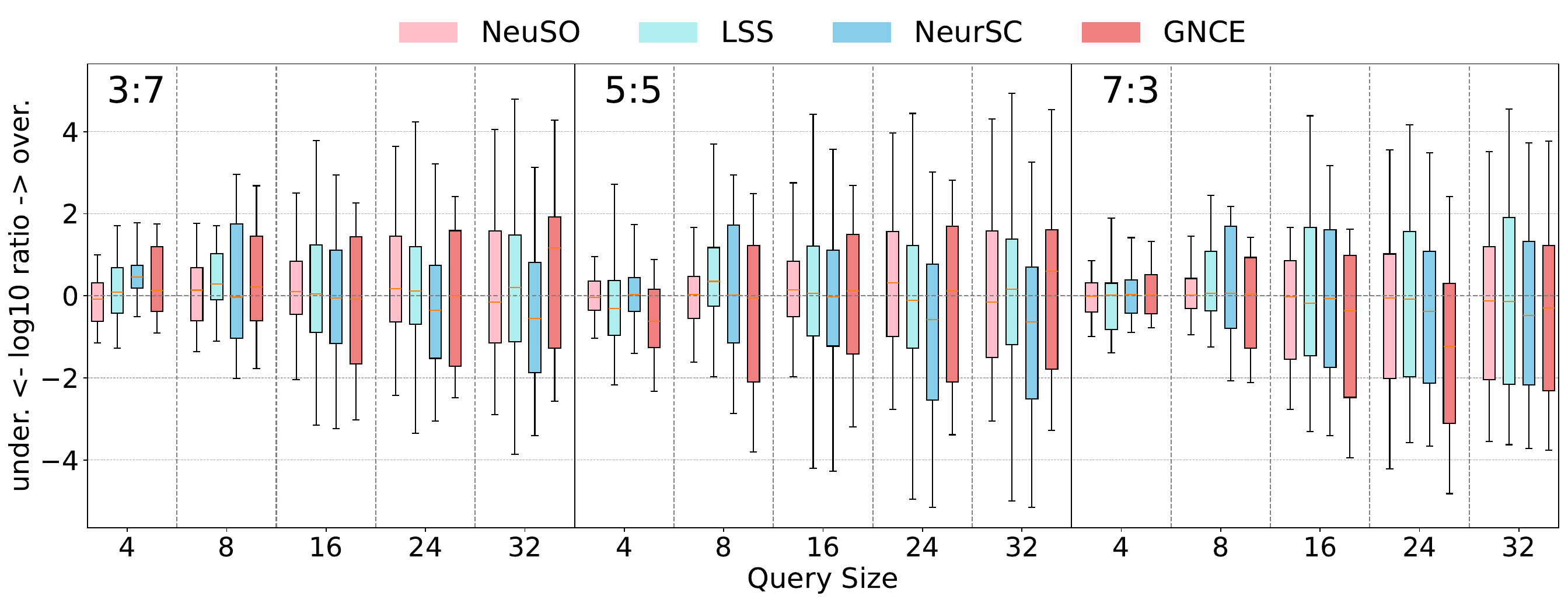}
        \captionsetup{skip=1.5pt}
        \caption{Cardinality estimation under varying training workload.}
        \label{fig:yeast_card_workload_shift}
    \end{subfigure}
    \caption{Robustness (a) and transferability (b) on Yeast.}
    \label{fig:transferability_robustness}
\end{figure}

\subsubsection{Robustness to Graph Updates}
Although \method is designed for static graphs, it can tolerate moderate changes in the data graph, as the query encoder receives input statistics that reflect the updated data graph after filtering.
We consider three types of updates: vertex label changes, edge insertions, and deletions, applied in a fixed ratio of $1{:}1{:}1$, with total update ratios of $0\%$, $2\%$, $5\%$, $10\%$, $15\%$, and $20\%$. 
As the data graph differs across update ratios, absolute enumeration time is incomparable. Thus, we use GQL as the baseline within each ratio to compute relative enumeration times.

The results are presented in \figref{fig:yeast_enumeration_time_update}. When the update ratio is no more than $10\%$, \method still produces high-quality matching orders. However, when the graph changes significantly (i.e., over $10\%$), \method becomes slower than GQL and DPiso, suggesting that retraining is necessary.

\subsubsection{Workload Transferability}
We evaluate the transferability of \method by constructing diverse workloads and comparing its estimation accuracy with other learning-based methods, including LSS, NeurSC, and GNCE.
Specifically, we enlarge the query pool to a size of 2000, consisting of 400 queries for each query size in $\{4, 8, 16, 24, 32\}$, whose cardinalities can be computed without exceeding limit time.
We designate 500 queries as the test set (100 for each size), and construct three distinct training workloads, each containing 1000 queries drawn from sizes 4, 8, 16, and 24. Queries of size 32 are excluded from training and used solely for evaluating generalization to unseen query sizes.
In each workload, the ratio between small queries (sizes 4 and 8) and large queries (sizes 16 and 24) varies among $\{3{:}7, 5{:}5, 7{:}3\}$.

The result is shown in \figref{fig:yeast_card_workload_shift}. Existing methods tend to underestimate the cardinalities of large queries in workloads dominated by small queries, and overestimate the cardinalities of small queries in workloads dominated by large queries. In contrast, \method achieves more stable estimation performance across different workload compositions, as it leverages information from subqueries of large queries during training.

\section{Related Works \& Discussions}

\subsection{Related Works}




\subsubsection{Learned Query Optimization for Relational Databases.}
Learning-based optimizers for relational databases can be categorized into three types:
(1) learning-based cardinality estimators combined with traditional cost estimators and DP plan enumerators \cite{kipf2019mscn,hilprecht13deepdb,yang2020neurocard,reiner2023sample,li2023alece}, (2) learning-based cost estimators with traditional DP plan enumerators \cite{marcus2019plan,  sun2019end,marcus2021bao}, and (3) direct learning-based query plan generators \cite{marcus2018deep,yu2020reinforcement,yang2022balsa,chen2023loger}. Each of these approaches replaces one or more components of the traditional optimizer to improve efficiency and accuracy.
There are also some recent works that utilize LLMs for optimization. Most of them focus on hint generation \cite{akioyamen2024unreasonable, yao2025query} or query rewriting \cite{liu2024query, zhou2021learned, li2024llm}.

\subsubsection{Sugraph Query Optimizer}

Subgraph matching order has been widely studied, and most approaches rely on relatively simple heuristics. For instance, GraphQL \cite{he2008graphs} processes queries by prioritizing vertices with smaller candidate sets before moving on to those with larger sets. Similarly, CFL \cite{bi2016efficient} and RapidMatch \cite{sun2020rapidmatch} initiate the matching process from the denser regions of the query graph. DPiso \cite{han2019efficient} and VEQ \cite{kim2021versatile} further enhance the matching process by dynamically adjusting the matching order during execution. RLQVO \cite{wang2022reinforcement} proposes a matching order method based on reinforcement learning, which models the matching order selection as an MDP. In addition to optimizations for general subgraph queries, several works focus on specific structures, e.g., paths \cite{yakovets2016query,nguyen2022estimating,pang2024materialized}.


In practical graph database systems, most modern systems (e.g., Neo4j \cite{neo4j}, GraphFlow \cite{mhedhbi2019optimizing}, K{\`u}zu \cite{feng2023kuzu}, and gStore \cite{yang2022gcbo}) employ dynamic programming to determine optimal execution plans. When dealing with complex query graphs, these systems may also resort to greedy heuristic search strategies. 
Some optimization methods \cite{DBLP:conf/edbt/Abul-BasherYGCC21,leeuwen2022avantgraph,kalumin2025optimizing} rely on cost models tailored to the specific execution operators they adopt (such as factorization). These designs are orthogonal to our method, which can be applied independently of such operator-specific assumptions.


\subsubsection{Subgraph Counting}

Subgraph counting aims to provide cardinality estimates for the entire given query graph.
Exact methods compute the count by exhaustively enumerating all subgraph matches in the data graph \cite{paredes2013towards,hovcevar2017combinatorial,himanshu2017impact}, while approximate methods utilize graph summaries or sampling strategies for estimation.
For instance, CharacteristicSets \cite{neumann2011characteristic} decomposes queries into star-shaped subqueries and estimates them via precomputed statistics; SumRDF \cite{stefanoni2018estimating} summarizes the graph by label and estimates counts on the abstracted structure.
Sampling-based approaches further improve efficiency by probabilistically exploring subgraphs \cite{vengerov2015join,li2016wander,zhao2018random,kim2021combining,yang2022gcbo,shin2024cardinality}. These methods focus on reducing bias and variance through advanced sampling strategies.

Recent advances leverage machine learning to model query-data graph interactions for count prediction \cite{liu2020neural,zhao2021learned,wang2022neural,schwabe2024cardinality}, offering a paradigm shift from traditional algorithmic designs.

\subsection{Discussions}

Compared with reinforcement learning-based optimizer RLQVO \cite{wang2022reinforcement} and other learning-based subgraph counting approaches including LSS \cite{zhao2021learned}, NeurSC \cite{wang2022neural}, and GNCE \cite{schwabe2024cardinality}, our \method demonstrates three key distinctions:

\textbf{Enhanced Graph Encoding.} While RLQVO adopts GCN \cite{kipf2017semisupervised} and both LSS and NeurSC employ GIN \cite{xu2018how}, more recent efforts such as GNCE \cite{schwabe2024cardinality} propose advanced encoders like TPN. In comparison, \method introduces TriAT, which provides stronger expressive power than these message-passing neural networks.

\textbf{Multi-Task Capability.} Existing methods exhibit functional limitations: RLQVO focuses solely on matching order generation, while LSS, NeurSC, and GNCE estimate only the full query graph's cardinality. In contrast, \method establishes a unified multi-task framework that simultaneously predicts both cardinality and computational cost for arbitrary subqueries, while maintaining the capability to generate high-performance matching orders.

\textbf{Efficient Cross-Graph Interaction.} 
All of RLQVO, LSS, and GNCE rely on static statistical features from data graphs (loose statistics on the number of vertices (RLQVO) or pre-learned embeddings on the data graph (LSS and GNCE)). Additionally, RLQVO needs to perform GNN computations iterately during matching order selection. 
Although NeurSC incorporates dynamic graph filtering, its dual GNN architecture (query graph + filtered data graph) incurs significant overhead. 
\method innovates through a lightweight interaction mechanism that propagates filtered statistics through a single GNN operating exclusively on the query graph, which is more efficient than NeurSC while maintaining high accuracy.

\section{Conclusions}
\label{sec:conclusions}

In this paper, we propose \method, a unified and efficient framework for subgraph query optimization that generates high-quality vertex matching orders. 
\method employs a novel GNN-based architecture on the query graph and leverages a multi-task learning strategy to jointly predict subquery cardinalities and costs. 
Based on these predictions, we design a new top-down plan enumerator that selects the next vertex accordingly.
Experimental results show that \method outperforms existing methods in both enumeration time and end-to-end execution time.



\begin{acks}
This work was supported by The National Key Research and Development Program of China under grant 2023YFB4502303 and NSFC under grant 62532001. Lei Zou is the corresponding author of this work.
\end{acks}

\clearpage

\bibliographystyle{ACM-Reference-Format}
\bibliography{sample-base}

\ifbool{fullversion}{
    \appendix
    \clearpage

\section{Proof of Theorems}
\label{sec:proof_of_triat}

In this appendix, we provide the proofs of \thmref{the:triat} and \thmref{the:triat_complexity}, which establish the expressive power and computational complexity of TriAT.

\subsection{Proof of \thmref{the:triat}}

We will prove the following extension version of \thmref{the:triat}:

\begin{theorem}[TriAT's Expressive Power (Extension Version)]
The following expressiveness inclusion relation holds: 1-WL test = MPNN $\prec$ TriAT $\prec$ 2-FWL.
\end{theorem}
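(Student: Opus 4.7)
The statement bundles three separate claims, and I would prove them in order: (i) $1\text{-WL} = \text{MPNN}$; (ii) $\text{MPNN} \prec \text{TriAT}$; (iii) $\text{TriAT} \prec \text{2-FWL}$. For (i), the $\preceq$ direction is the standard argument of Xu et al.\ (``How Powerful are Graph Neural Networks?''): a structural induction on the layer index $k$ shows that the coloring produced by the MPNN update in Eq.~\ref{eqn:MPNN} can only be coarser than the 1-WL coloring, because any function of the multiset $\{\bm{x}_v^{(k-1)} : v\in N(u)\}$ is determined by the 1-WL refinement. The $\succeq$ direction is witnessed by GIN, which realises an injective multiset aggregator via an MLP by the universal approximation + sum-decomposition lemma, so MPNN $=$ 1-WL in distinguishing power.

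For (ii), the non-strict direction $\text{MPNN}\preceq \text{TriAT}$ is immediate: setting $\tau^{(k)}\equiv 0$ in Eq.~\ref{eqn:triat_framework} recovers a generic MPNN update, so anything an MPNN can express is realisable inside TriAT's hypothesis class. For strictness I would exploit precisely the example in Fig.~\ref{fig:wl_test}: both graphs have the same multiset of vertex labels and, for every vertex, the same multiset of neighbor labels (as argued in Sec.~\ref{sec:motivation_triat}), so 1-WL (hence any MPNN) collapses them. The graphs differ in the number of triangles through each vertex; taking $\tau^{(1)}$, $\phi^{(1)}$, $\gamma^{(1)}$ to be injective on their respective multiset arguments (again realised via an MLP as in the GIN construction), the TriAT embedding $\bm{x}_u^{(1)}$ must differ between the two graphs, establishing strict inclusion.

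For (iii), the non-strict direction $\text{TriAT}\preceq \text{2-FWL}$ I would prove by induction on $k$: assume the 2-FWL colour of the pair $(u,u)$ refines $\bm{x}_u^{(k-1)}$ (base case via initial labels). A single 2-FWL refinement step at $(u,u)$ aggregates the multiset $\{(c^{(k-1)}(u,w), c^{(k-1)}(w,u)) : w\in V\}$, and these pair colours already encode (a) whether $w\in N(u)$, (b) the current vertex information of $w$ via $c^{(k-1)}(w,w)$, and (c) by a second round of 2-FWL refinement, the colours of pairs $(v_1,v_2)$ with $v_1,v_2\in N(u)$, which in turn encode whether $e(v_1,v_2)\in E$. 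Consequently, both the neighbour aggregation $\phi^{(k)}$ and the neighbour-linkage aggregation $\tau^{(k)}$ of Eq.~\ref{eqn:triat_framework} can be read off from the 2-FWL colour of $(u,u)$ at step $k+1$, closing the induction.

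The main obstacle is the strict separation $\text{TriAT}\prec \text{2-FWL}$, for which I need an explicit witness pair. My plan is to exhibit two non-isomorphic, triangle-free regular graphs on the same vertex set that are 1-WL indistinguishable: on such a pair the triangle aggregator $\tau$ is identically empty, so TriAT collapses to an MPNN and fails to distinguish them, yet 2-FWL separates them by counting $4$-cycles (or length-2 walks between fixed endpoints) in the pair colour refinement. Concretely, I would use a small handcrafted pair such as $C_6 \sqcup C_6$ vs.\ $C_{12}$ relaxed to an appropriate regular triangle-free example where 1-WL fails but 4-cycle counts differ; verifying 2-FWL-distinguishability then reduces to checking that the multiset of pair-colours at some step differs, which I would do by direct calculation on the small instance. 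If the handcrafted pair turns out to be awkward, the fallback is to cite known triangle-free CFI-style constructions that separate $k$-WL at level $k=2$.
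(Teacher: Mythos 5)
Your proposal is correct and follows essentially the same four-part decomposition as the paper: non-strict containment in TriAT by zeroing out $\tau$, strictness via the Fig.~\ref{fig:wl_test} pair, a simulation argument for TriAT $\preceq$ 2-FWL, and a triangle-free witness pair (disjoint cycles vs.\ one long cycle — the paper uses $C_4\sqcup C_4$ vs.\ $C_8$, you propose $C_6\sqcup C_6$ vs.\ $C_{12}$) on which $\tau$ is vacuous so TriAT degenerates to 1-WL while 2-FWL separates them. The only cosmetic slip is your appeal to differing 4-cycle counts for the last pair (both counts are zero); the correct 2-FWL certificate there is distance/connectivity or 6- vs.\ 12-cycle structure, but the witness itself is valid.
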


We separate our proof into four parts. The detailed introduction to 2-FWL is omitted in this paper. Readers interested in 2-FWL can refer to \cite{cai1992optimal, huang2021short} for more information.

\nosection{1. 1-WL $\preceq$ TriAT.}

The TriAT model considers more comprehensive neighborhood linkages compared to the 1-WL test, making this expressiveness relation trivial. The TriAT will degrade into MPNNs with expressiveness equivalent to the 1-WL test if the neighborhood linkage aggregation $\tau^{(k)}$ in \eqnref{eqn:triat_framework} is removed or set to a zero function.

\nosection{2. 1-WL $\neq$ TriAT.}

To prove that two graph models have different expressiveness power, we only need to find two graphs where one model produces the same representations while the other produces different representations. The two graphs in \figref{fig:wl_test} serve as the touchstone. The 1-WL cannot discriminate between the two graphs, while TriAT can.

\nosection{3. TriAT $\preceq$ 2-FWL.}

We can consider the representation of a vertex $u$ in the graph as the set of its representations with all vertexes: $(u)\triangleq \{(u, v)| v\in V\}$. This representation will be updated to $\{\{((u, w), (w, v), (u, v))|w\in V\}| v\in V\}$ according to the update procedure of 2-FWL.

However, the representation of $u$ will be updated to $\{u, \{(u, v)|v\in N(u)\}, \{(w, v)|w, v\in N(u)\}\}$ in TriAT, which is weaker than $\{\{((u,\\ w), (u, v), (w, v))|w\in V\}| v\in N(u)\}$. From this, we can see that it is weaker than the update of 2-FWL.

\nosection{4. TriAT $\neq$ 2-FWL.}

Although TriAT recognizes the triangle patterns (a triangle is a cycle with three vertices) in graphs, it cannot recognize larger cycles in graphs. TriAT cannot discriminate between the two graphs in \figref{fig:2fwl}, while 2-FWL can. \figref{fig:2fwl_1} presents an unconnected graph with two 4-vertex cycles, and \figref{fig:2fwl_2} presents a graph with 8-vertex cycles.

\begin{figure}[htbp]
    \centering
    \begin{subfigure}[b]{0.3\linewidth}
        \centering
        \includegraphics[width=\linewidth]{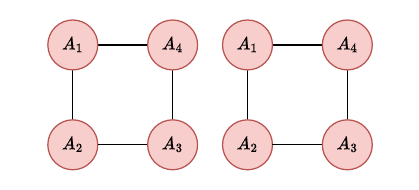}
        \caption{Graph 1}
        \label{fig:2fwl_1}
    \end{subfigure}
    \hspace{2.5cm}
    \begin{subfigure}[b]{0.3\linewidth}
        \centering
        \includegraphics[width=\linewidth]{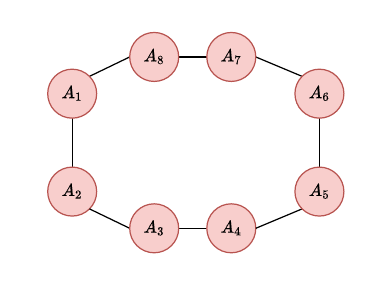}
        \caption{Graph 2}
        \label{fig:2fwl_2}
    \end{subfigure}
    \caption{Two graphs cannot be distinguished by TriAT. The vertexes with the same notations will get the same representation by TriAT.}
    \label{fig:2fwl}
\end{figure}

\subsection{Proof of \thmref{the:triat_complexity}}

Each layer of TriAT requires gathering information from neighbors ($\phi$) and their mutual connections ($\tau$) for every vertex.
The total computational complexity for $\phi$ and $\tau$ across all vertices is $O(|E|)$ and $O(|\Delta|)$, respectively.
This is because each edge is involved in $\phi$ through its two endpoints, and contributes to $\tau$ only if it is part of a triangle.
These two components together yield the overall time complexity of $O(|E| + |\Delta|)$. The space complexity is $O(|E|)$, as TriAT adopts a vertex-update framework that requires storing features for all vertices and edges.

\section{Equivalence Between the Shortest Path and the Optimal Join Plan}

\begin{theorem}
    The shortest path from $\emptyset$ to $Q$ in the CCG corresponds to the join order that minimizes the execution cost.
\end{theorem}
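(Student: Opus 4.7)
The plan is to exhibit a bijection between valid matching orders of $Q$ and directed paths from $\emptyset$ to $Q$ in $CCG_Q$, and then verify that the two notions of ``total cost'' agree term-by-term along any such path. Once both facts are in place, the equivalence follows immediately, because minimizing one quantity over a set is the same as minimizing any function that equals it pointwise.

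First, I would fix notation and show the correspondence between orders and paths. Given a matching order $o = (u_{o_1}, \ldots, u_{o_n})$ satisfying the prefix-connectivity requirement from Section~\ref{sec:pre_execution_queries}, define $q_i = Q(V_i^o)$ for $i=0,1,\ldots,n$, with $q_0 = \emptyset$ and $q_n = Q$. By prefix connectivity, each $q_i$ is a connected partial subquery of $Q$ and hence a state of $CCG_Q$ by Definition~\ref{def:ccg}. Since $V_i^o = V_{i-1}^o \cup \{u_{o_i}\}$, the pair $(q_{i-1}, q_i)$ satisfies the condition in Definition~\ref{def:ccg} for being a state transition, so $e\langle q_{i-1}, q_i\rangle \in \mathcal{E}_Q$. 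Thus $o$ induces a path $P(o): q_0 \to q_1 \to \cdots \to q_n$ in $CCG_Q$. Conversely, any path $\emptyset = q_0 \to q_1 \to \cdots \to q_n = Q$ has $|V_{q_i}| = |V_{q_{i-1}}| + 1$ at every step, so the sequence of added vertices $u_{o_i} \in V_{q_i} \setminus V_{q_{i-1}}$ recovers a matching order whose induced path is exactly the given one. This establishes a bijection.

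Next, I would show that the two cost notions agree. The execution cost of a matching order $o$, as described in Section~\ref{sec:pre_execution_queries} and formalized by the \textsc{Enumerate} procedure of Algorithm~\ref{alg:subgraph_matching_execution}, decomposes into a sum of per-vertex costs: the $i$-th cost is the work done when extending partial matches of $Q_{i-1}$ to partial matches of $Q_i$ by binding $u_{o_i}$. By Definition~\ref{def:ccg}, this is precisely $\mathcal{C}_Q(e\langle q_{i-1}, q_i\rangle)$. Summing over $i=1,\ldots,n$ gives the total execution cost of $o$, which coincides with $l(P(o)) = \sum_i \mathcal{C}_Q(e\langle q_{i-1}, q_i\rangle)$ as defined in Equation~\eqref{eqn:state_transition} and the surrounding text. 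Combining this with the bijection, the minimum over all matching orders equals the minimum over all $\emptyset$-to-$Q$ paths, so the shortest path corresponds to a cost-minimizing join order.

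The step most worth scrutinizing is the cost-decomposition claim: one must make the standard operational assumption that the executor's work is the sum of the per-vertex extension costs (ignoring overhead that is independent of the order), and that this per-step work depends only on the source and destination states $(q_{i-1}, q_i)$, not on which previous path reached $q_{i-1}$. This is implicit in Definition~\ref{def:ccg}, where $\mathcal{C}_Q(e)$ is defined on edges rather than on full histories, and it matches how the executor in Algorithm~\ref{alg:subgraph_matching_execution} computes local candidates from $N_o^-(u_{o_i})$, which depends only on the current partial query $q_{i-1}$ and the added vertex. Once this modeling assumption is stated explicitly, the proof reduces to the routine additive argument sketched above.
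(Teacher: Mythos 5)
Your proof is correct and follows essentially the same route as the paper's: both rest on the order-to-path correspondence and the additive decomposition of execution cost into per-transition costs $\mathcal{C}_Q(e\langle q_{i-1}, q_i\rangle)$. You simply make explicit two things the paper leaves implicit — the bijection argument and the modeling assumption that each step's cost depends only on $(q_{i-1}, q_i)$ — which is a reasonable elaboration rather than a different approach.
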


\begin{proof}
    Each join order corresponds to a distinct path from $\emptyset$ to $Q$ in the CCG. For any given path $P=(q_0\rightarrow q_1\rightarrow\cdots\rightarrow q_n)$, its total execution cost can be decomposed into several one-step costs: $cost(P) = \sum_{i=1}^n \mathcal{C}_Q(e(q_{i-1}, q_i))=l(P)$. Consequently, minimizing execution cost is equivalent to finding the path $P^*$ with minimal length $l(P^*)$ in $CCG_Q$.
\end{proof}

\section{Pseudocode of the Training Process (\secref{sec:training_process})}
\label{sec:alg_of_training_process}

To complement the description in \secref{sec:training_process}, we provide the pseudocode of \method's training process in \algref{alg:training_process}.

{\color{red}
\begin{small}
\begin{algorithm}[htbp]
\DontPrintSemicolon
\caption{Training process for one epoch}
\label{alg:training_process}
\KwIn{The training set $\mathcal{T}$}
\ForEach{$Q\in \mathcal{T}$}{
    $\bm{X}^{(0)}\leftarrow$ Initialize vertices' features \tcp*[r]{\secref{sec:feature_init}}
    $\bm{X} \leftarrow TriAT(Q, \bm{X}^{(0)})$\tcp*[r]{\secref{sec:triangle_gnn}} 
    $loss\leftarrow 0$\;
    \If {$Q.state = Full$}{
        \For{$i \in \{card, cost, mc\}$}{
            \ForEach(\tcp*[f]{for each batch}){$bt_i \in Q.i$} {
                $loss\leftarrow loss+\mathcal{L}_q(MLP_{i}(X_{bt_i}), y_{bt_i})\times \alpha_{i}$\;
            }
        }
    }
    \Else(\tcp*[f]{$Q.state = Partial$}) { 
        \For{$i \in \{card, cost\}$}{
            \ForEach{$bt_i \in Q.i$} {
                $loss\leftarrow loss+\mathcal{L}_q(MLP_{i}(X_{bt_i}), y_{bt_i})\times \alpha_{i}$\;
            }
        }
    }
    $loss \leftarrow loss + \sum_{q_0\in Q_e}\mathcal{L}_c(q_0)$\tcp*[r]{constraint loss}
    Update models' parameters with gradient descent on $loss$\;
}


\end{algorithm}
\end{small}
}

\section{More Experiment Results}
\label{sec:appendix_more_exp_results}
This section presents additional results on optimization and cardinality estimation that were omitted from the main text due to space limitations.

\figref{fig:enumeration_time_comp} presents the enumeration time comparison on four datasets that were not included in \secref{sec:exp_enumeration_time_comparison}.

\begin{figure}
    \centering
    \begin{subfigure}[c]{0.49\linewidth}
        \centering
        \includegraphics[width=0.9\linewidth]{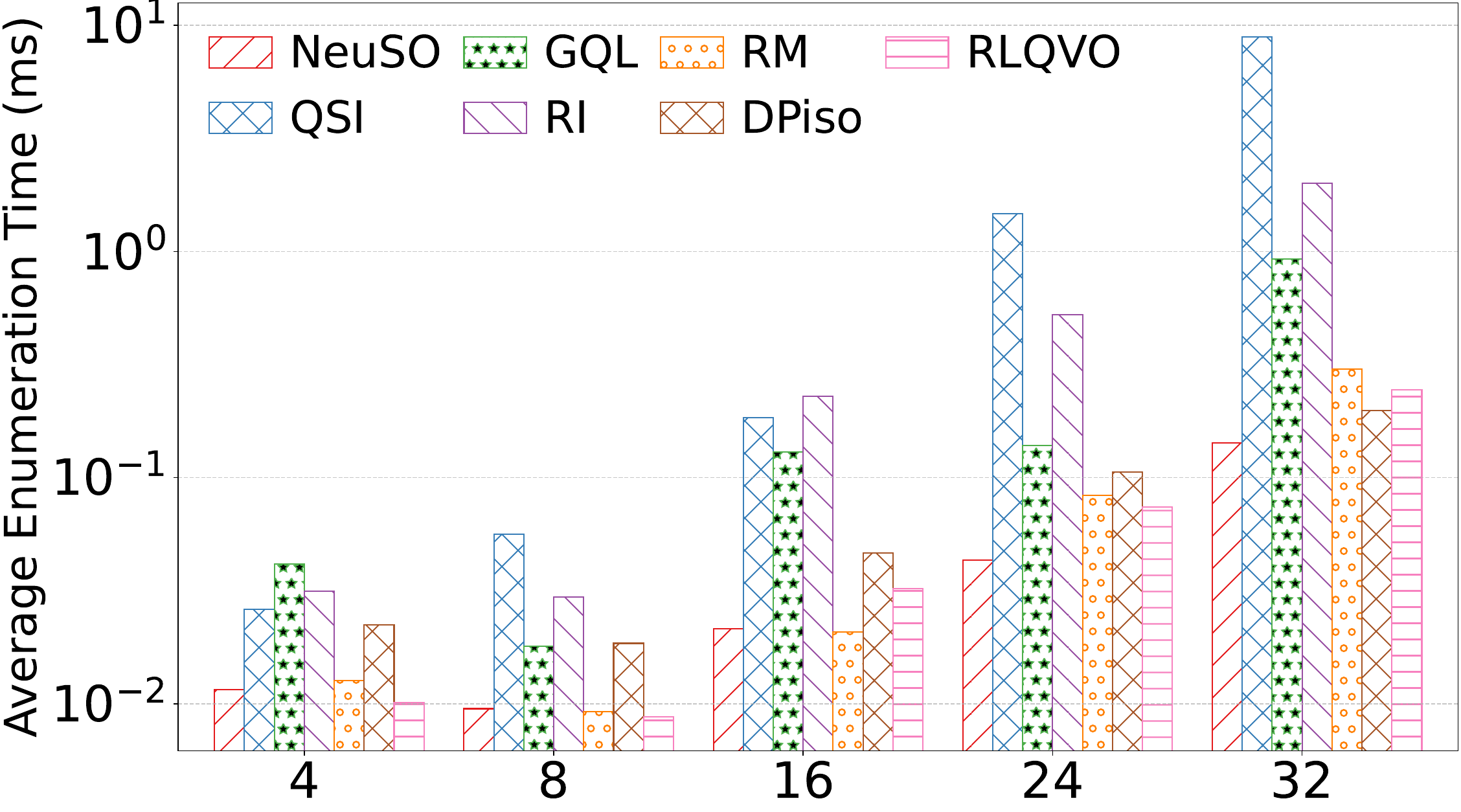}
        \caption{HPRD}
        \label{fig:hprd_enumeration}
    \end{subfigure}
    \begin{subfigure}[c]{0.49\linewidth}
        \centering
        \includegraphics[width=0.9\linewidth]{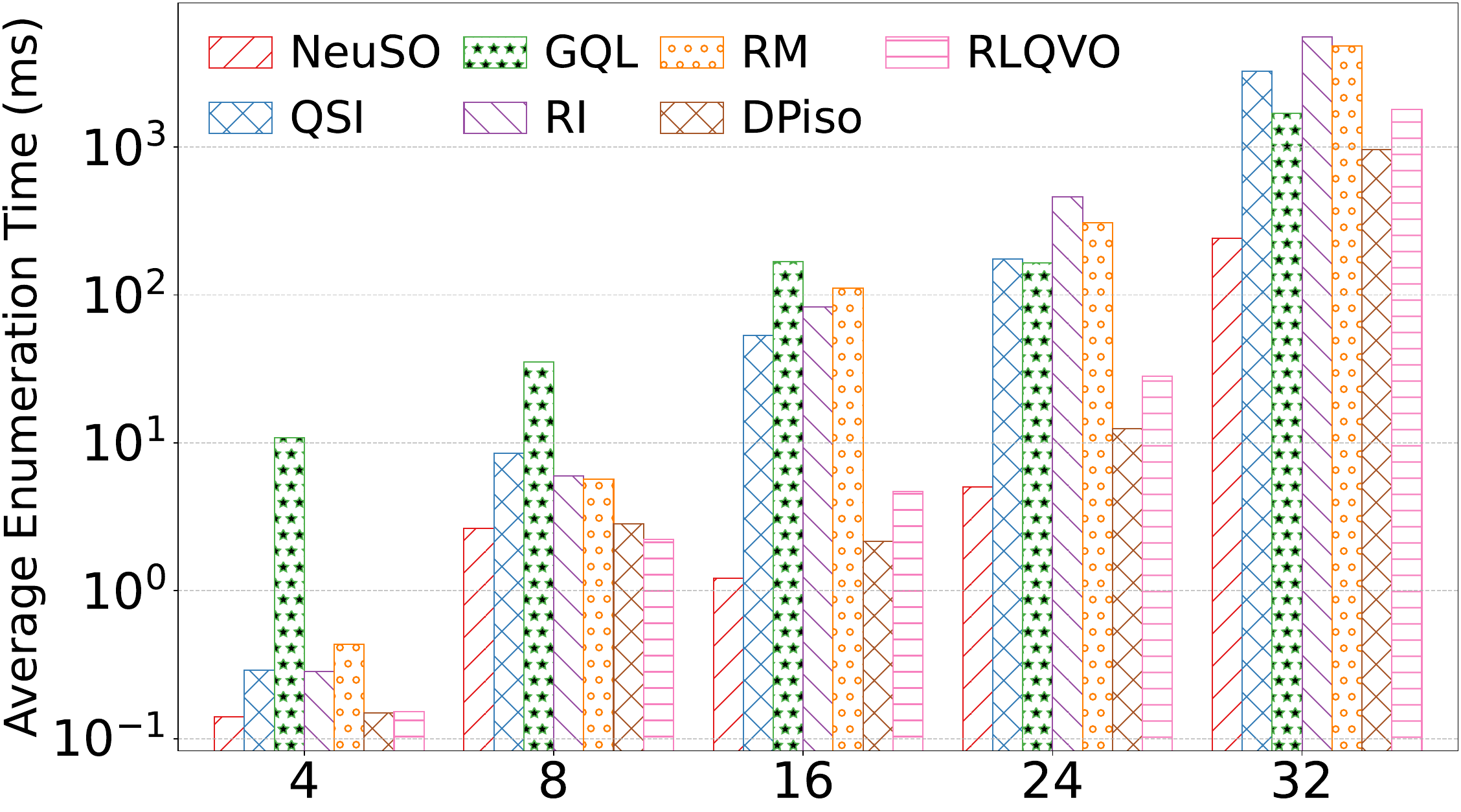}
        \caption{Yeast}
        \label{fig:yeast_enumeration}
    \end{subfigure}
    \begin{subfigure}[c]{0.49\linewidth}
        \centering
        \includegraphics[width=0.9\linewidth]{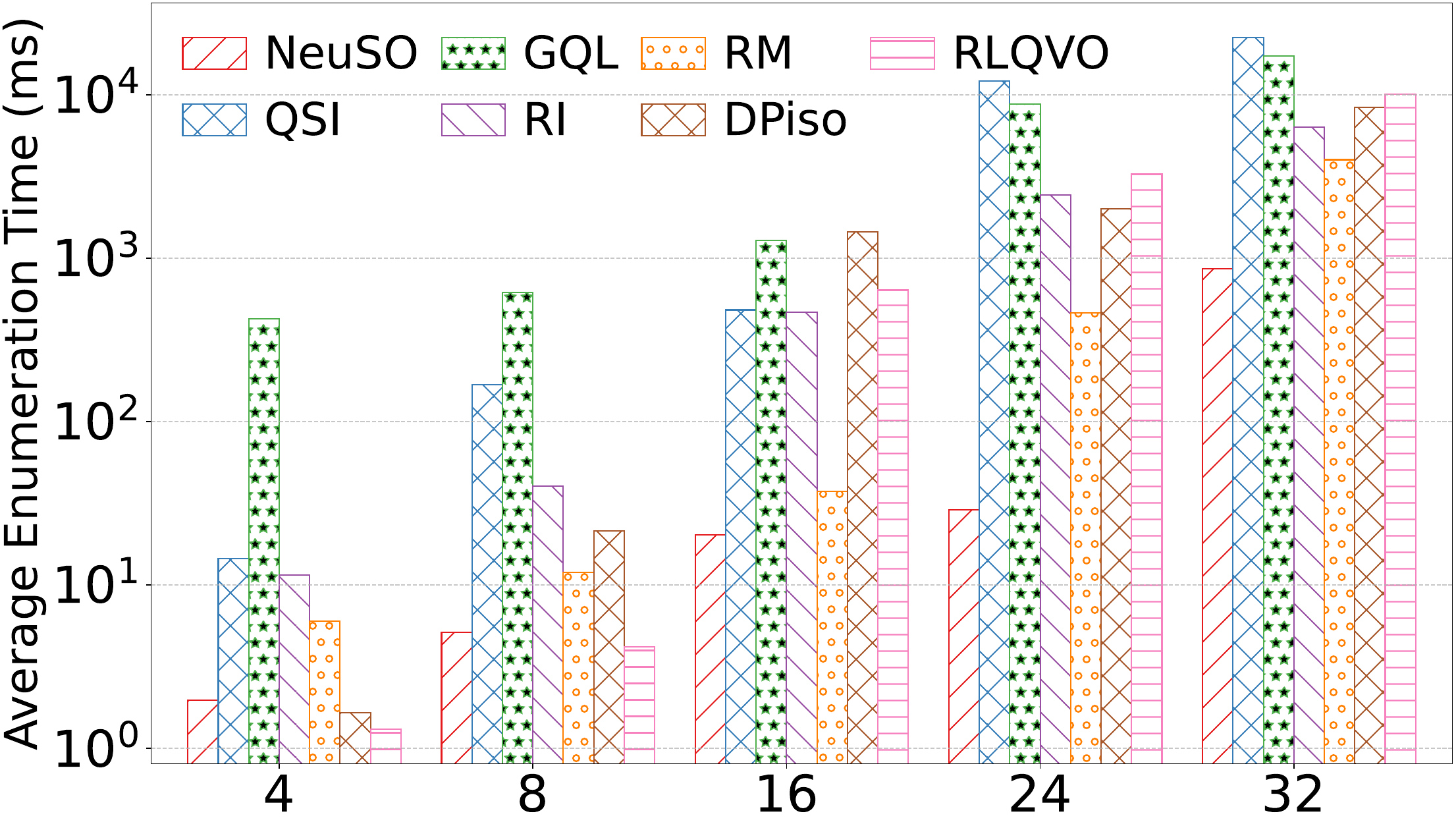}
        \caption{YouTube}
        \label{fig:youtube_enumeration}
    \end{subfigure}
    \begin{subfigure}[c]{0.49\linewidth}
        \centering
        \includegraphics[width=0.9\linewidth]{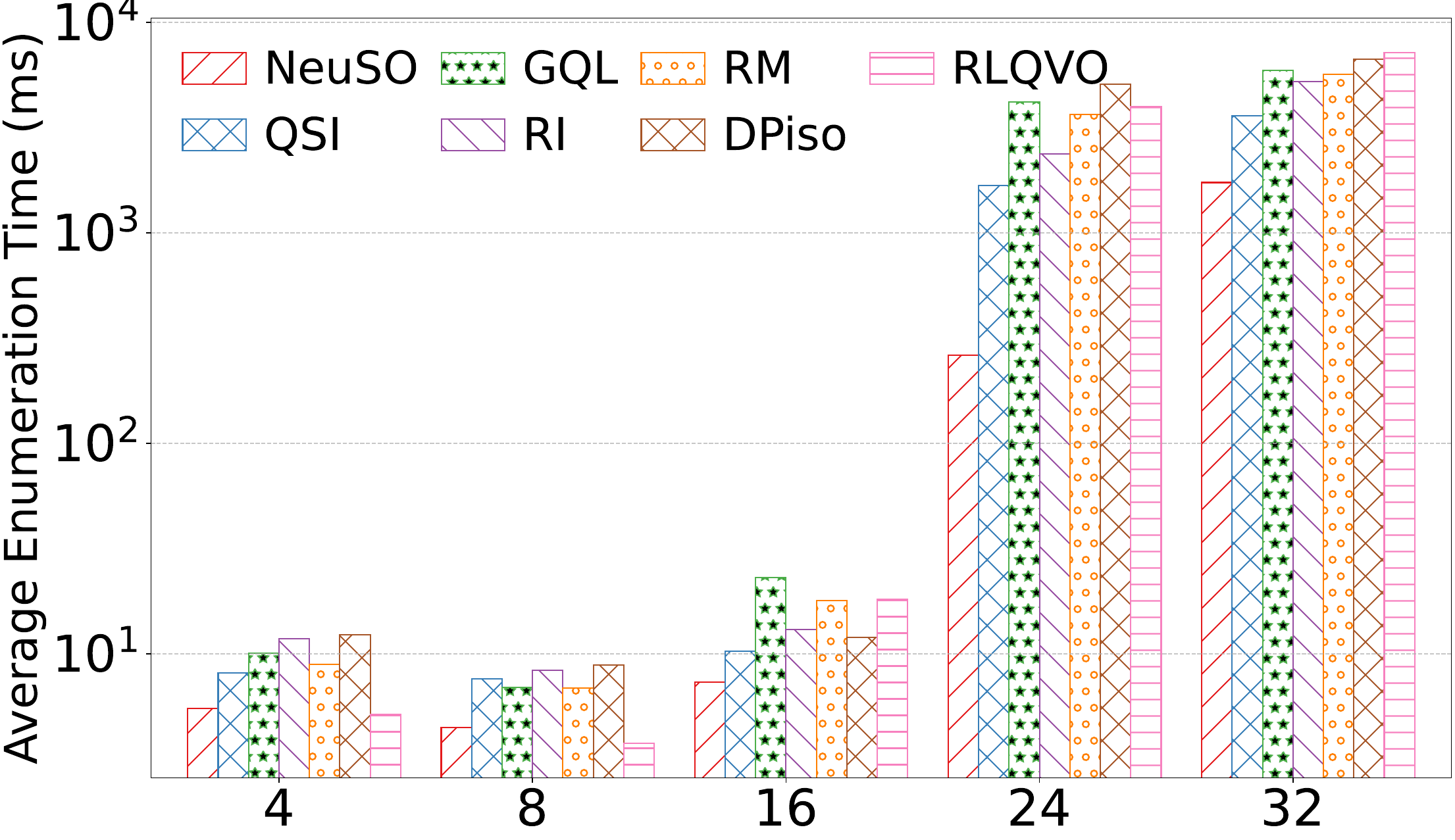}
        \caption{Patents}
        \label{fig:patents_enumeration}
    \end{subfigure}
    \caption{More average enumeration time comparison results. The x-axis represents different query graph sizes $V_Q$.}
    \label{fig:enumeration_time_comp}
\end{figure}

\figref{fig:dblp_eu2005_end_to_end} shows the end-to-end running time comparison on DBLP and EU2005, complementing the results in \secref{sec:end_to_end_time_comparison}.

\begin{figure}
    \centering
    \begin{subfigure}[c]{0.49\linewidth}
        \centering
        \includegraphics[width=0.9\linewidth]{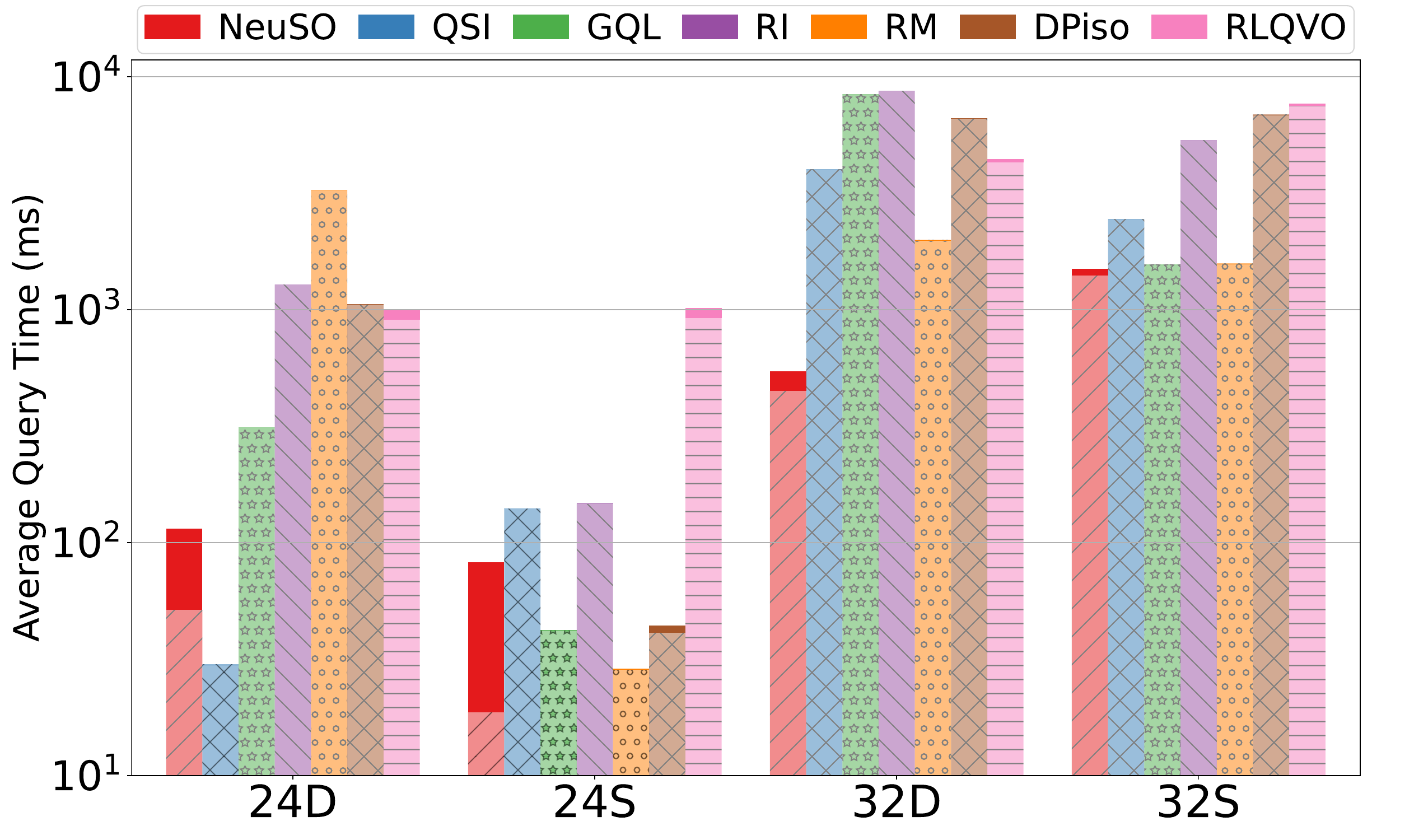}
        \caption{DBLP}
        \label{fig:dblp_end_to_end}
    \end{subfigure}
    \begin{subfigure}[c]{0.49\linewidth}
        \centering
        \includegraphics[width=0.9\linewidth]{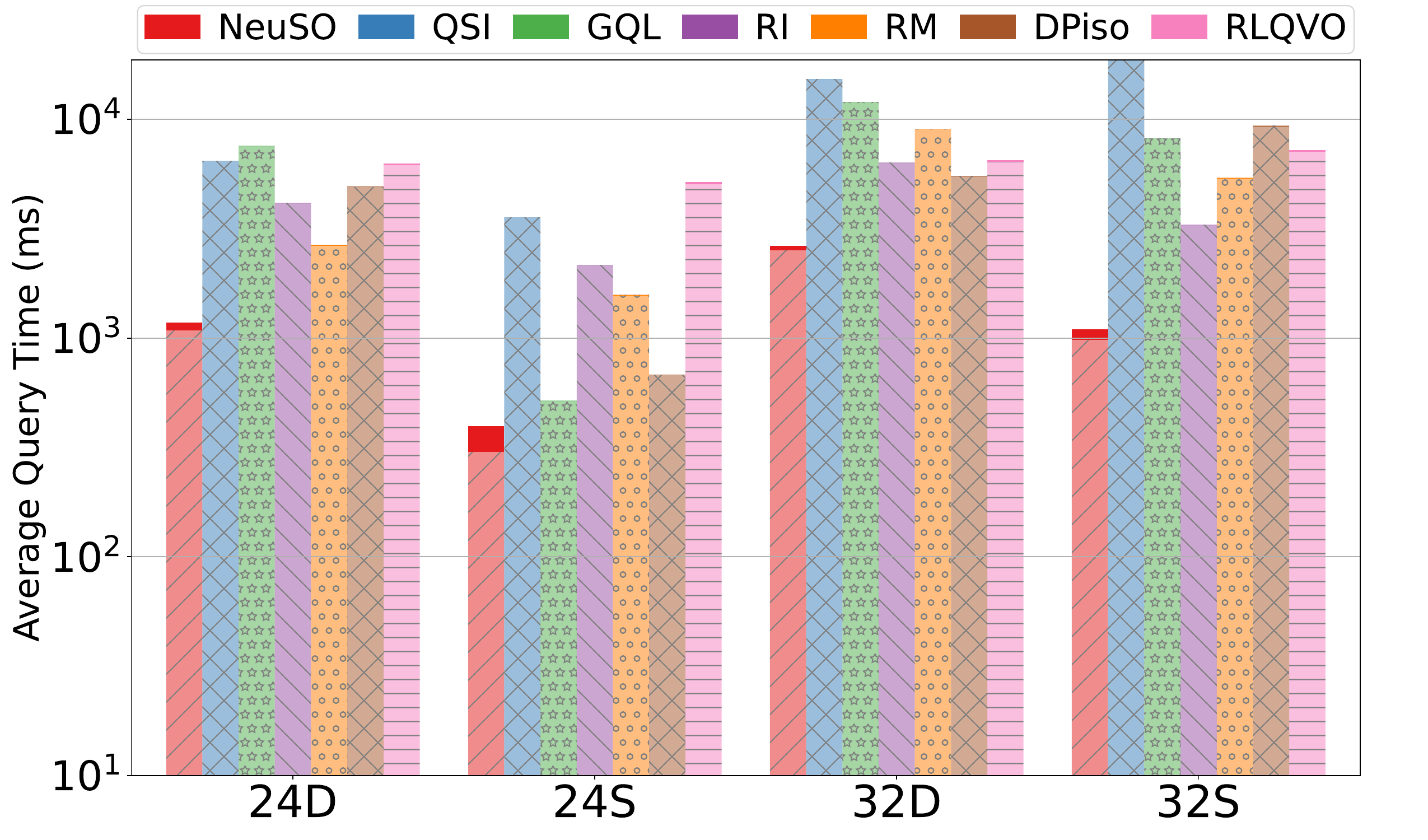}
        \caption{EU2005}
        \label{fig:eu2005_end_to_end}
    \end{subfigure}
    \caption{Average end-to-end query time comparison. The bottom bars with hatching represent filter and enumeration time, while the solid bars above represent optimization times.}
    \label{fig:dblp_eu2005_end_to_end}
\end{figure}

\figref{fig:card_est_time_comp_more} presents the cardinality estimation time comparison on YouTube and Patents, as a continuation of the analysis in \secref{sec:card_efficiency_exp}.

\begin{figure}
    \centering
    \begin{subfigure}[c]{0.49\linewidth}
        \centering
        \includegraphics[width=0.9\linewidth]{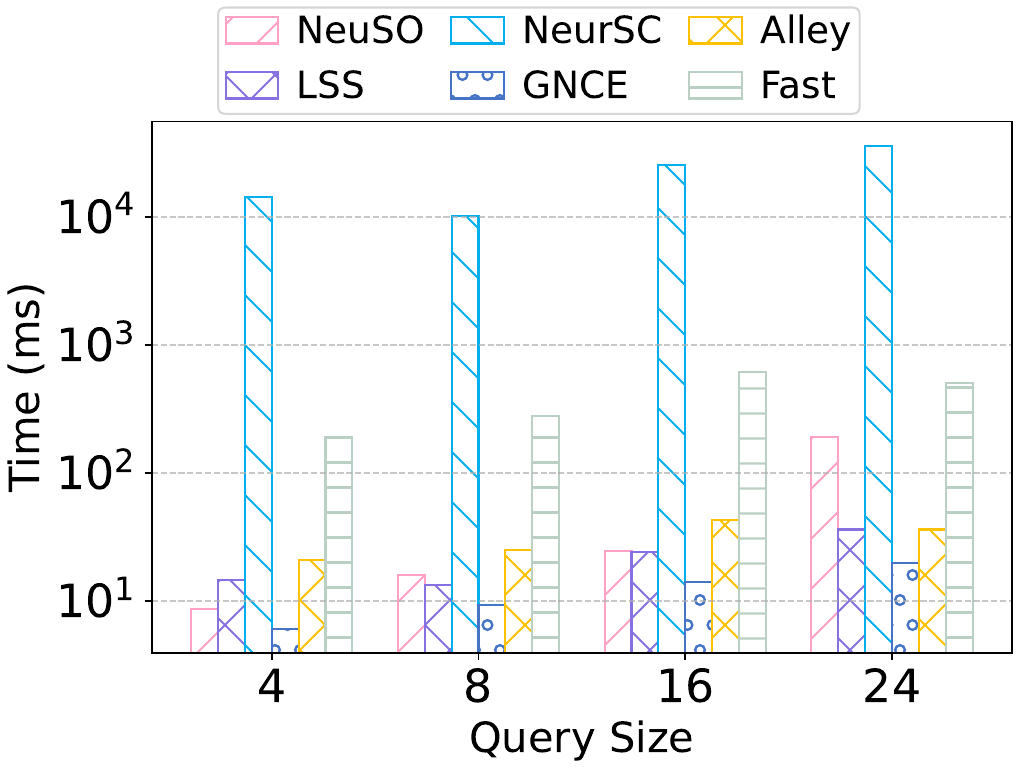}
        \caption{YouTube}
        \label{fig:card_time_youtube}
    \end{subfigure}
    \begin{subfigure}[c]{0.49\linewidth}
        \centering
        \includegraphics[width=0.9\linewidth]{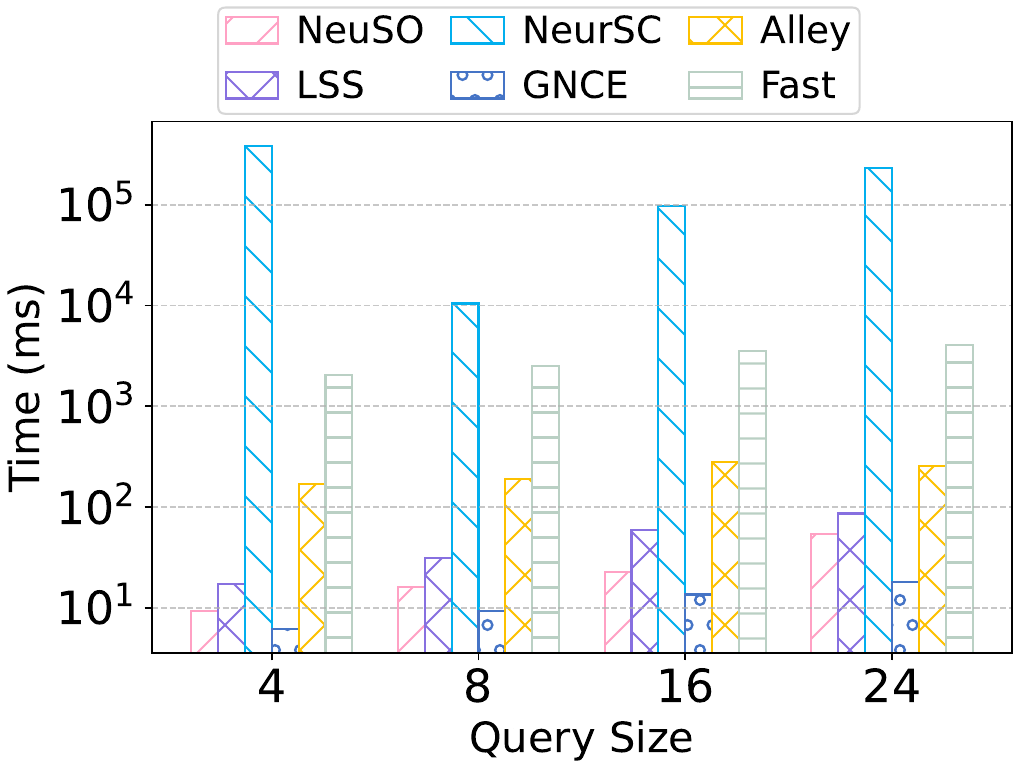}
        \caption{Patents}
        \label{fig:card_time_patents}
    \end{subfigure}
    \caption{Average cardinality estimate time on YouTube and Patents.}
    \label{fig:card_est_time_comp_more}
\end{figure}

\section{Additional Experiment Details for \secref{sec:kuzu_exp}}
\label{sec:appendix_more_kuzu_details}

We used the LSQB benchmark \cite{mhedhbi2021lsqb} with a scale factor of 3 in our experiments, which contains 11.3 million vertices and 66.2 million edges.
The original query set provided by LSQB includes queries with \texttt{Optional} and \texttt{Negative} edges. To ensure compatibility and completeness, we also generate queries containing these two types of edges. Specifically, after retrieving a query skeleton from the schema graph, we randomly assign some edges to be optional (with a probability of 0.15) or negative (with a probability of 0.1). We ensure that these two types of edges do not coexist in a single query, and that there is at most one negative edge per query. This process increases the diversity of the generated queries. We ensure that all generated queries are distinct and yield valid results.

To support these special edge types (optional and negative), we concatenate a one-hot encoding to each edge in the query, indicating its type (normal, optional, or negative). The same encoding scheme is applied to all compared methods to ensure a fair comparison.

We use Kùzu version 0.10.0 in our experiments, and we assign the candidate size of each vertex and edge (i.e., $|C(u)|$ in \eqnref{eqn:vertex_feature} and $|C\langle u_1, u_2\rangle|$ in \eqnref{eqn:dir_labeled_edge_feature}) to be the size of the corresponding vertex and edge tables, respectively.

In our experiment of \secref{sec:kuzu_ret_opt}, we set the number of working threads to 1 for both K{\`u}zu and PostgreSQL. \figref{fig:detailed_lsqb_size} provides a more detailed comparison of execution times. It can be observed that in most cases, \method produces more effective execution plans, resulting in a speedup of up to 42.4×. This improvement is mainly due to \method's more accurate estimation of execution costs.

\nop{
However, there are a query of size 4, whose plan generated by \method is slower than that of K{\`u}zu. This is because K{\`u}zu applies specific optimizations for such queries, avoiding the need to materialize all intermediate results—an optimization that cannot be leveraged by the injected plans.
}

\begin{figure}
    \centering
    \includegraphics[width=0.7\linewidth]{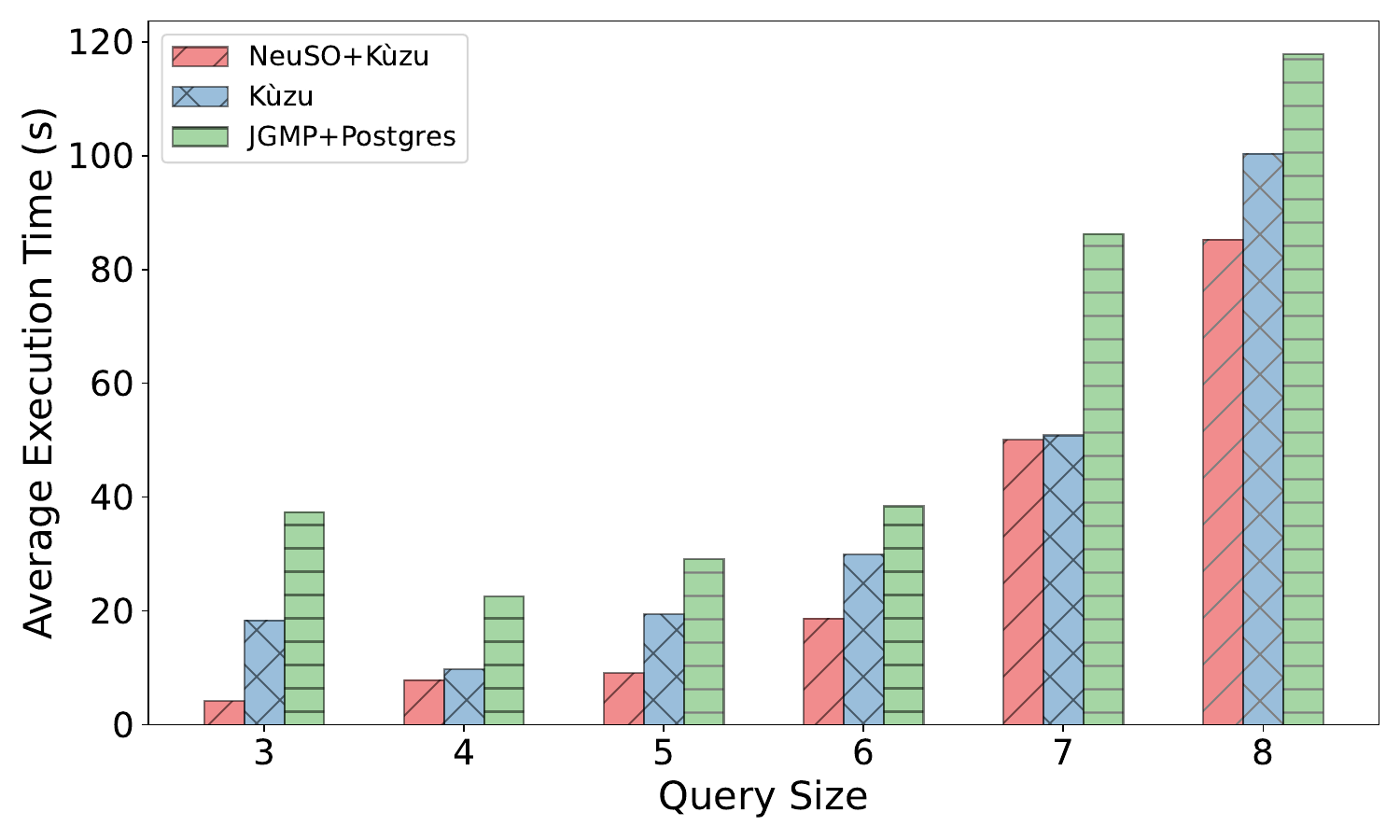}
    \caption{Detail execution time comparison on LSQB.}
    \label{fig:detailed_lsqb_size}
\end{figure}

\section{More Discussions}

\subsection{Rationality Discussion about GNN After Filtering}

During the running process of TriAT, a vertex $u$ perceives its $k$-hop neighborhood after $k$ rounds of message-passing. For the entire query graph $q$, the pooling methods introduced above will be no problem. However, for the represention of a subquery $q_{sub}$, it would inevitably receives more information than $q_{sub}$ itself. 
Specifically, $\bm{x}_{q_{sub}}$ may contains information that extends beyond its own structure. For example, after one layer of TriAT, the representation of subquery $q_{\{u_2, u_3, u_4\}}$ in \figref{fig:triat_example} will include information from vertex $u_5$, which is not part of $q_{\{u_2, u_3, u_4\}}$.

This apparent inconsistency is intentional and serves two key purposes.
First, during the filter stage, candidates for vertex $u$ are determined based on the candidate sets of vertices that are beyond $u$'s immediate neighbors. This means that the subquery would be influenced by query vertices that do not belong to it.
Second, this overlap of information helps to capture the relationships between subqueries, which ultimately aids in query optimization by capturing inter-subquery dependencies.

\subsection{Discussions about the Top-down Plan Enumerator.}  

The plan enumeration approach employed by \method follows a top-down paradigm, distinguishing it from reinforcement learning-based optimization techniques such as RLQVO, which rely on bottom-up action selection. The advantages of \method's methodology can be summarized as follows:  

\begin{itemize}[leftmargin=8pt]  
    \item \textbf{History Information Reusability.}  
    The top-down enumeration strategy enables the reuse of historical optimization results. For instance, if a subquery matches a previously processed query in the log, the system might leverage cached vertex ordering information to eliminate redundant computations, though this optimization is not implemented in our current work. 

    \item \textbf{State Cost as an Intrinsic Property.}  
    In \method, the minimum cost associated with a state (subquery) is defined as an intrinsic property of the state itself (or more precisely, of the state and the initial empty state, which remains consistent across all queries). In contrast, reinforcement learning methods like RLQVO evaluate states based on their relation to the final state (i.e., the complete query $Q$), which varies per query. Consequently, for states of identical dimensionality, \method's cost computation is more efficient and less computationally demanding, which also reduces the learning difficulty. 
\end{itemize}

\nop{
\section{CCG Representation of Optimization Methods}
As shown in \secref{sec:reformulation}, a query plan (vertex order) $o$ for a query graph $Q$ can be interpreted as a path from the empty set $\emptyset$ to the complete query graph $Q$ within the context of $CCG_Q$. In this appendix section, we demonstrate how the optimization methods in our comparative experiments can also be represented using the language of CCG.

These methods are all bottom-up strategies, meaning that at each step, they select a new query vertex $u^*$ that is connected to the current subquery $q_o$ and append it to the order $o$.

From the perspective of shortest-path planning, let us suppose that the partial order $o$ represents a shortest path corresponding to the current state (i.e., subquery) $q_o$. Then, the next state is selected as:
\begin{equation}
\label{eqn:ccg_other_methods}
    q^*=\mathop{\arg\min}\limits_{q\in \mathcal{N}^{out}(q_o)}(\hat{\mathcal{C}}(e(q_o, q))+\hat{\mathcal{ML}}(q, Q)),
\end{equation}
where $\hat{\mathcal{ML}}(q, Q)$ is an estimate of the shortest remaining cost from $q$ to the goal state $Q$, which we refer to as the \textit{forward minimum cost of $q$ for query graph $Q$}. The new vertex added to the order is then $u^* = V_{q^*} - V_{q_o}$.

However, most existing methods (except RLQVO) only consider the one-step cost $\hat{\mathcal{C}}(e(q_o, q))$ and neglect the forward minimum cost. This omission is a key reason why these methods may occasionally produce suboptimal query plans.

\nosection{QSI.} 
QSI \cite{shang2008taming} assigns a weight $W(e)$ to each query edge $e\in E_Q$, based on the number of candidate edges in the data graph $G$: $W(e(u_1, u_2))=|\{e(v_1, v_2)\in E_G\mid L(v_1)=L(u_1)\land L(v_2)=L(u_2)\}|$. QSI then selects the edge with the smallest weight among those connected to the current subquery and uses the corresponding new vertex to extend the order. Thus, the one-step cost function for QSI is:
\begin{equation}
    \hat{\mathcal{C}}_{QSI}(e(q_o, q)) = \mathop{\arg\min}\limits_{e\in (E_{q}-E_{q_o})}W(e).
\end{equation}

\nosection{GQL.} 
In GQL \cite{he2008graphs}, the next vertex is selected as the one with the smallest candidate set size among all vertices connected to the current subquery. The corresponding one-step cost is:
\begin{equation}
    \hat{\mathcal{C}}_{GQL}(e(q_o, q)) = |C(V_{q}-V_{q_o})|.
\end{equation}

\nosection{RI.}
RI \cite{bonnici2013subgraph} prioritizes the vertex that creates the most linkages (edges) when added to the current subquery. This strategy can be expressed as minimizing the reciprocal of the number of new edges:
\begin{equation}
    \hat{\mathcal{C}}_{RI}(e(q_o, q)) = 1/|E_q-E_{q_o}|
\end{equation}

\nosection{RM.}
RM \cite{sun2020rapidmatch} attempts to minimize a global utility function: 
\begin{equation*}
    utility(o)=\sum_{i=1}^{|V_Q|}\sum_{j=1}^{i}|N_o^+(u_{o_i})|,
\end{equation*} 
which is computationally expensive to optimize directly. Therefore, RM applies a greedy heuristic: it decomposes the query graph into several regions and prioritizes vertices in dense regions. Within each region, RM selects the vertex that minimizes the following one-step cost:
\begin{equation}
    \hat{\mathcal{C}}_{RM}(e(q_o, q)) = 1/|N(V_q-V_{q_o})\cap o|.
\end{equation}

\nosection{DPiso.}
DPiso \cite{han2019efficient} introduces two adaptive ordering strategies: candidate-size order and path-size order. These strategies select different vertices depending on the partial match $M$. For example, under the candidate-size order, the one-step cost is:
\begin{equation}
    \hat{\mathcal{C}}_{DPiso}(e(q_o, q), M) = 1/|LC(V_{q}-V_{q_o}, M)|,
\end{equation}
where $LC$ denotes the set of locally candidates given $M$. The path-size order can also be expressed in a similar one-step cost form.

\nosection{RLQVO} 
RLQVO \cite{wang2022reinforcement} uses a deep learning model to jointly estimate the one-step cost and the forward minimum cost:
\begin{equation}
    \hat{\mathcal{C}}_{RLQVO}(e(q_o, q))+\hat{\mathcal{ML}}_{RLQVO}(q, Q) = Model(q_o, q, Q).
\end{equation}
}

\nop{
\section{Theory Guarantee For Limit-k Query}
In practical graph query scenarios, users often require only a subset of matches (e.g., $k$ results) rather than a complete enumeration. Although our method is trained using full query executions, we prove that these models can be directly applied to optimize limit-$k$ queries without modification. This section establishes the theoretical foundation for this important capability.

\subsection{Definitions and Assumptions}
We first formalize the key concepts and assumptions:

\begin{definition}[Complete Query and Limit-$k$ Query]
Given a query graph $Q$ and data graph $G$:
\begin{itemize}[leftmargin=10pt]
    \item A \textit{complete query} computes all subgraph matches of $Q$ in $G$;
    \item A \textit{limit-$k$ query} computes up to $k$ matches, where:
    \begin{itemize}
        \item When $|M(Q,G)| \geq k$: returns any $k$ distinct matches;
        \item When $|M(Q,G)| < k$: returns all matches (equivalent to the complete query).
    \end{itemize}
\end{itemize}
\end{definition}

For example, in social network friend searches, a complete query may return all connections, while a limit-10 query stops after discovering 10 valid paths.

\begin{definition}[Uniformity Assumption]
For any partial matching $M_i$ of a subquery $Q_i = Q(\{u_{o_1},...,u_{o_i}\})$, the expected number of extensions to $Q_{i+1}$ is constant. Formally,
\begin{equation}
    \forall f_i \in M(Q_i,G),\quad \mathbb{E}[|\text{extensions}(f_i)|] = \frac{\mathcal{C}(Q_{i+1})}{\mathcal{C}(Q_i)}.
\end{equation}
\end{definition}

\subsection{Optimality Preservation Theorem}
\begin{theorem}[Optimal Order Preservation]
\label{thm:optimal_preservation}
Under the Uniformity Assumption, the vertex order $o$ that minimizes execution cost for complete query execution also minimizes the cost for all limit-$k$ variants ($\forall k \in \mathbb{N}$).
\end{theorem}

\begin{proof}
Optimality implies the lowest execution cost. We aim to show that among all possible vertex orders, the optimal order $o$ for the complete query $Q$ also yields the minimal cost for a limit-$k$ query.

We first introduce the following notations:
\begin{itemize}[leftmargin=10pt]
    \item $Q$ is the query graph. Let $Q^i$ denote the subquery obtained by removing the vertex $u_i$ from $Q$. Then, $\{Q^i \mid i \in \mathcal{I}\}$ denotes all connected subqueries of $Q$ with $|V_Q| - 1$ vertices.
    \item For each state transition $e(q_1, q_2)$ in the candidate compilation graph (CCG), let its cost be $\mathcal{C}(e)$. For a given state $q$, let $\mathcal{C}(q)$ denote the number of matches for the complete query, and let $\mathcal{MC}(q)$ denote its minimum execution cost for $q$.
    \item Let $\mathcal{MC}(q, k)$ denote the average minimum cost of generating $k$ matches for subquery $q$.
\end{itemize}

We prove the theorem by induction on query size $|V_Q|$, and also show that:
\begin{equation}
\label{eqn:limit_k_minimum_cost}
    \mathcal{MC}(Q, k) = \mathcal{MC}(Q) \times \frac{\min(k, \mathcal{C}(Q))}{\mathcal{C}(Q)}.
\end{equation}

\begin{enumerate}[leftmargin=15pt]
    \item \textbf{Base case.} When $|V_Q| = 1$, the vertex order is trivially unique: $o = \{u_{o_1}\}$. The theorem holds.

    \item \textbf{Inductive step.} Assume the theorem holds for queries with $|V_Q| = n-1$. Now consider a query $Q$ with $|V_Q| = n$.

    The subqueries $\{Q^i \mid i \in \mathcal{I}\}$ correspond to the immediate predecessors of $Q$ in the CCG. By the Uniformity Assumption, each match $f_i$ of $Q^i$ can be extended to $\mathcal{C}(Q)/\mathcal{C}(Q^i)$ matches of $Q$ in expectation.

    Therefore, to generate $k$ matches for $Q$, we need $k_i = k \times \mathcal{C}(Q^i)/\mathcal{C}(Q)$ matches of $Q^i$ on average. The expected cost of generating $k$ results for $Q$ through the subquery $Q^i$ is:
    \begin{equation}
        \begin{aligned}
            \text{Cost}(Q^i, Q) &= \left[\mathcal{MC}(Q^i) + \mathcal{C}(e(Q^i, Q))\right] \times \frac{\min(k_i, \mathcal{C}(Q^i))}{\mathcal{C}(Q^i)} \\
            &= \left[\mathcal{MC}(Q^i) + \mathcal{C}(e(Q^i, Q))\right] \times \frac{\min(k, \mathcal{C}(Q))}{\mathcal{C}(Q)}.
        \end{aligned}
    \end{equation}

    This shows that the cost of generating $k$ results is proportional to the cost for the complete query, scaled by a constant multiplier. Therefore, the intermediate state minimizing the complete query cost also minimizes the limit-$k$ cost. Thus, both the theorem and \eqref{eqn:limit_k_minimum_cost} are proven by induction.
\end{enumerate}
\end{proof}

\nosection{Remarks.} While the above proof is rigorous, an intuitive understanding can be helpful: since the total number of matches for the query $Q$ is a constant $\mathcal{C}(Q)$, an optimal plan for the complete query produces results "faster" than other plans. From this perspective of result-generation efficiency, the optimality preservation theorem becomes self-evident.
}

}

\end{document}